\documentclass[10pt]{article}
\synctex=1 
\usepackage{amsfonts}
\usepackage{amssymb}
\usepackage{amsmath}
\usepackage{amsthm}
\usepackage{geometry}
\usepackage{commath}
\usepackage{booktabs}
\usepackage{lscape}
\usepackage{float}

\usepackage{bigstrut}
\usepackage{multirow}
\usepackage{array}
\usepackage{tabularx}
\interfootnotelinepenalty=10000
\geometry{letterpaper,left=1.0in,right=1.0in,top=1.2in,bottom=1.0in}
\usepackage{xr}
\usepackage{fancyvrb,booktabs,multirow,array,float}

\bibliographystyle{ecta}
\usepackage[authoryear]{natbib}

\newcommand{\myref}[2]{\hyperref[#1]{#2}}
\numberwithin{equation}{section}

\usepackage{latexsym}
\usepackage{graphicx} 
\usepackage{enumerate}
\usepackage{setspace}
\onehalfspacing
\setstretch{1.20}
\usepackage[toc,title,titletoc,header]{appendix}
\usepackage[bottom]{footmisc}   
\usepackage[pdfborder={0 0 0}]{hyperref}
\hypersetup{
  colorlinks=true,linkcolor=blue, citecolor=red, filecolor=magenta, 
  linkbordercolor={0 1 1}, citebordercolor={1 0 0},
}

\usepackage{lineno}
\setpagewiselinenumbers
\VerbatimFootnotes 

\newtheorem{theorem}{Theorem}[section]
\newtheorem{lemma}{Lemma}[section]
\newtheorem{corollary}{Corollary}[section]

\newcounter{assumptionM}
\newcounter{assumptionA}
\def\theassumptionM{A.\arabic{assumptionM}'}
\def\theassumptionA{A.\arabic{assumptionA}}
\setcounter{assumptionM}{1}
\setcounter{assumptionA}{0}
\newenvironment{assumptionA}[1][]{\refstepcounter{assumptionA}\medskip\noindent{\textbf{Assumption \theassumptionA. #1}} \em \rmfamily}{\medskip}

\hypersetup{
  colorlinks=true,linkcolor=blue, citecolor=red, filecolor=magenta,
  linkbordercolor={0 1 1}, citebordercolor={1 0 0},
}
%

\begin{document}
\sloppy

\hypersetup{pageanchor=false}
\title{On the iterated estimation of dynamic discrete choice games\thanks{We thank the editor and three anonymous referees for comments and suggestions that have greatly improved the manuscript. We also thank Peter Arcidiacono, Patrick Bayer, Joe Hotz, Matt Masten, Arnaud Maurel, Takuya Ura, and seminar participants at various institutions for useful comments and suggestions. Of course, all errors are our own. The research of the first author was supported by the NIH Grant 40-4153-00-0-85-399, and the research of both authors was supported by NSF Grant SES-1729280.}
}
\author{Federico A. Bugni\\Department of Economics\\Duke University\\ \href{mailto:federico.bugni@duke.edu}{\texttt{federico.bugni@duke.edu}}
\and Jackson Bunting\\Department of Economics\\Duke University\\ \href{mailto:jackson.bunting@duke.edu }{\texttt{jackson.bunting@duke.edu}}} 
\date{\today{\\
}}
\maketitle

\begin{abstract}
We study the first-order asymptotic properties of a class of estimators of the structural parameters in dynamic discrete choice games. We consider $K$-stage policy iteration (PI) estimators, where $K$ denotes the number of policy iterations employed in the estimation. This class nests several estimators proposed in the literature. By considering a ``pseudo likelihood'' criterion function, our estimator becomes the $K$-PML estimator in \citet{aguirregabiria/mira:2002,aguirregabiria/mira:2007}. By considering a ``minimum distance'' criterion function, it defines a new $K$-MD estimator, which is an iterative version of the estimators in \cite{pesendorfer/schmidt-dengler:2008} and \cite{pakes/ostrovsky/berry:2007}.

First, we establish that the $K$-PML estimator is consistent and asymptotically normal for any $K \in \mathbb{N}$. This complements findings in \cite{aguirregabiria/mira:2007}, who focus on $K=1$ and $K$ large enough to induce convergence of the estimator. Furthermore, we show under certain conditions that the asymptotic variance of the $K$-PML estimator can exhibit arbitrary patterns as a function of $K$.

Second, we establish that the $K$-MD estimator is consistent and asymptotically normal for any $K \in \mathbb{N}$. For a specific weight matrix, the $K$-MD estimator has the same asymptotic distribution as the $K$-PML estimator. Our main result provides an optimal sequence of weight matrices for the $K$-MD estimator and shows that the optimally weighted $K$-MD estimator has an asymptotic distribution that is invariant to $K$. The invariance result is especially unexpected given the findings in \cite{aguirregabiria/mira:2007} for $K$-PML estimators. Our main result implies two new corollaries about the optimal $1$-MD estimator (derived by \cite{pesendorfer/schmidt-dengler:2008}). First, the optimal $1$-MD estimator is efficient in the class of $K$-MD estimators for all $K \in \mathbb{N}$. In other words, additional policy iterations do not provide first-order efficiency gains relative to the optimal $1$-MD estimator. Second, the optimal $1$-MD estimator is more or equally efficient than any $K$-PML estimator for all $K \in \mathbb{N}$. Finally, the appendix provides appropriate conditions under which the optimal $1$-MD estimator is efficient among regular estimators.

\begin{description}
	\item[Keywords:] dynamic discrete choice problems, dynamic games, pseudo maximum likelihood estimator, minimum distance estimator, estimation, optimality, efficiency.
	\item[JEL Classification Codes: C13, C61, C73]
\end{description}
\end{abstract}
\vfill

\thispagestyle{empty} 
\pagebreak 
\setcounter{page}1 


\pagebreak 
\hypersetup{pageanchor=true}

\section{Introduction}\label{sec:Introduction}

This paper investigates the first-order asymptotic properties of a broad class of estimators of the structural parameters in a dynamic discrete choice game, i.e., a dynamic game with discrete actions. Given an i.i.d.\ sample of games $i=1,\dots,n$, we consider the class of $K$-stage policy iteration (PI) estimators, where $K$ denotes the number of policy iterations employed in the estimation. The $K$-stage PI estimator is defined by
\begin{equation}
\hat{\alpha} _{K} ~~ \equiv ~~\underset{\alpha \in \Theta_{\alpha}}{\arg \max}~ \hat{Q}( {\alpha,\hat{P}}_{K-1}) ,
\label{eq:est_intro}
\end{equation}
where $\alpha $ is the structural parameter of interest with a true value equal to $\alpha^*\in \Theta_{\alpha}\subseteq \mathbb{R}^{d_{\alpha}}$, $\hat{Q}$ is a sample criterion function, and $\hat{P}_{k}$ is the $k$-stage estimator of the conditional choice probabilities (CCPs), which is defined iteratively as follows. The preliminary estimator of the CCPs is denoted by $ \hat{P}_{0}$. One possible choice of $ \hat{P}_{0}$ is the sample frequency estimator of the CCPs, although this is not required. Then, for any $k=1,\dots, K-1$,
\begin{equation}
\hat{P}_{k}~~\equiv~~ \Psi (\hat{\alpha}_{k},\hat{P}_{k-1}),
\label{eq:CCPMap_intro}
\end{equation}
where $\Psi$ is the best response CCP mapping of the structural game. Given any set of beliefs $P$, optimal or not, $\Psi(\alpha,P) $ indicates the corresponding optimal CCPs when the structural parameter is $\alpha$. The idea of using iterations to estimate dynamic discrete choice problems was introduced in the seminal contributions of \citet{aguirregabiria/mira:2002,aguirregabiria/mira:2007}. As argued in \citet[Page 21]{aguirregabiria/mira:2007}, relative to their non-iterative counterparts, these iterative estimators can be more efficient, have smaller finite sample bias due to a more precise initial estimator of the CCPs, and are robust to inconsistent choices of the initial estimator of the CCPs. Here and throughout this paper, we use ``efficiency'' to denote first-order asymptotic efficiency, and we say that an estimator is ``optimal'' within a certain class of estimators if it first-order asymptotically efficient within said class.

Our $K$-stage PI estimator nests most of the estimators proposed in the dynamic discrete choice games literature. By appropriate choice of $\hat{Q}$ and $K$, our $K$-stage PI estimator coincides with the pseudo maximum likelihood (PML) estimator in \citet{aguirregabiria/mira:2002,aguirregabiria/mira:2007}, the asymptotic least squares estimators in \cite{pesendorfer/schmidt-dengler:2008}, or the so-called simple estimators in \cite{pakes/ostrovsky/berry:2007}.

To implement the $K$-stage PI estimator, the researcher must determine the number of policy iterations $K$. This choice poses several related research questions. How should researchers choose $K$? Does it make a difference? If so, what is the optimal choice of $K$? The literature provides arguably incomplete answers to these questions. The main contribution of this paper is to answer these questions. Before describing our results, we review the main related findings in the literature. 

\citet{aguirregabiria/mira:2002,aguirregabiria/mira:2007} propose $K$-stage PML estimators of the structural parameters in dynamic discrete choice problems. The earlier paper considers single-agent problems whereas the later one generalizes the analysis to multiple-agent problems, i.e., games. In both of these papers, the objective is to maximize the pseudo log-likelihood criterion function $\hat{Q}=\hat{Q}_{PML}$, defined by
\begin{equation}
\hat{Q}_{PML}( \alpha ,P) ~\equiv~\frac{1}{n} \sum_{i=1}^{n}\ln \Psi ( \alpha,P) ( a_{i}|x_{i}).
\label{eq:ML_criterion_intro}
\end{equation}
In this paper, we refer to the resulting estimator as the $K$-PML estimator. One of the main contributions of \citet{aguirregabiria/mira:2002,aguirregabiria/mira:2007} is to study the effect of the number of iterations $K$ on the asymptotic distribution of the $K$-PML estimator.

In single-agent dynamic problems, \cite{aguirregabiria/mira:2002} show that the asymptotic distribution of the $K$-PML estimators is invariant to $K$. In other words, any additional round of policy mapping iteration has no first-order effect on the asymptotic distribution. This striking result is a consequence of the so-called ``zero Jacobian property'', that naturally occurs when a single agent makes optimal decisions. The zero Jacobian property typically does not hold in dynamic problems with multiple players, as each player makes optimal choices according to their preferences, which may not be aligned with their competitors' preferences. Thus, in dynamic discrete choice games, one might expect the asymptotic distribution of $K$-PML estimators to change with $K$.

In multiple-agent dynamic games, \cite{aguirregabiria/mira:2007} show that the asymptotic distribution of the $K$-PML estimators is {\it not} invariant to $K$. They consider two specific choices of $K$. On the one hand, they consider the $1$-PML estimator, which they refer to as the two-step pseudo maximum likelihood (PML) estimator. On the other hand, they propose a sequential nested pseudo likelihood (NPL) algorithm, which consists in increasing $K$ until the $K$-PML estimator converges (i.e., $\hat{\alpha}_{(K-1)-PML}=\hat{\alpha}_{K-PML}$). We refer to the estimator resulting from the convergence the NPL algorithm as the $\infty$-PML estimator.\footnote{\cite{aguirregabiria/mira:2007} propose the sequential NPL algorithm as a way of computing their NPL estimator. To define the NPL estimator, we first define the NPL fixed points as all the pairs of $(\hat{\alpha},\hat{P})$ that satisfy two conditions: (a) given $\hat{P}$, $\hat{\alpha}$ maximizes the PML criterion function in Eq.\ \eqref{eq:ML_criterion_intro} and (b) given $\hat{\alpha}$, $\hat{P}$ is a fixed point of the best response CCP mapping in Eq.\ \eqref{eq:CCPMap_intro}. Then, the NPL estimator is the $\hat{\alpha}$ of the NPL fixed point that maximizes the PML criterion function.} Under some conditions, \cite{aguirregabiria/mira:2007} show that the $1$-PML and $\infty$-PML estimators are consistent and asymptotically normal estimators of $\alpha^* $, i.e.,
\begin{align}
\sqrt{n}( \hat{\alpha}_{1-PML}-\alpha^{\ast }) &~\overset{d}{\to}~ N( {\bf 0}_{d_{\alpha}\times 1},\Sigma_{1-PML}) \notag\\
\sqrt{n}( \hat{\alpha}_{\infty-PML}-\alpha^{\ast }) & ~\overset{d}{\to}~  N({\bf 0}_{d_{\alpha}\times 1},\Sigma_{\infty-PML}). \label{eq:AsyAM2007}
\end{align}
Importantly, under additional conditions, \cite{aguirregabiria/mira:2007} show that $\Sigma_{1-PML}-\Sigma_{\infty-PML}$ is positive definite, that is, the $\infty$-PML estimator is more efficient than the $1$-PML estimator. So, although iterations of the policy mapping may be burdensome, they can improve efficiency within the $K$-PML class.

In later work, \cite{pesendorfer/schmidt-dengler:2010} indicate that the sequential algorithm used to compute the $\infty$-PML estimator may be inconsistent in certain games with unstable equilibria. The intuition for this is as follows. Recall that the $\infty$-PML estimator is defined as the limit of the $K$-PML estimator when $K$ is increased until its convergence. For a given data sample, sampling error implies that the estimator of the CCPs $\hat{P}_{K}$ differs from the equilibrium CCPs. In an unstable equilibrium, increasing $K\to\infty $ can derail $\hat{P}_{K}$ away from the (unstable) equilibrium CCPs, regardless of the sample size $n$. As a consequence of this, the algorithm used to compute $\infty$-PML may produce inconsistent results in the relevant asymptotic framework in which we first consider $ K\to\infty $ and then consider $n\to\infty$.\footnote{Note that this inconsistency would disappear under an asymptotic framework in which we first consider $n\to\infty$ and then consider $K\to\infty$. Unfortunately, by definition, this asymptotic framework is not the relevant one to analyze the asymptotic properties of the $\infty$-PML.} In this paper, we avoid the problem raised by \cite{pesendorfer/schmidt-dengler:2010} because we consider an asymptotic analysis for $K$-stage PI estimators with $K\in \mathbb{N}$ and fixed as $n\to\infty$.

\cite{pesendorfer/schmidt-dengler:2008} consider the estimation of dynamic discrete choice games using a class of minimum distance (MD) estimators. Specifically, their objective is to minimize the sample criterion function $\hat{Q}=\hat{Q}_{MD}$, given by
\begin{equation*}
{\hat{Q}}_{MD}( {\alpha,\hat{P}_{0}}) ~\equiv ~( \hat{P} _{0}-\Psi (\alpha,\hat{P}_{0}) )^{\prime }\hat{W} ( \hat{P}_{0}-\Psi (\alpha,\hat{P}_{0}) ) ,
\end{equation*}
where $\hat{W}$ is a weight matrix that converges in probability to a limiting weight matrix  $W$. This is a single-stage estimator and, consequently, we refer to it as the $1$-MD estimator. \cite{pesendorfer/schmidt-dengler:2008} show that the $1$-MD estimator is a consistent and asymptotically normal estimator of $\alpha $, i.e.,
\begin{align*}
\sqrt{n}( \hat{\alpha}_{1-MD}-\alpha^{\ast }) ~ \overset{d}{\to} ~ N( {\bf 0}_{d_{\alpha}\times 1},\Sigma_{1-MD}(W) ).
\end{align*}
\cite{pesendorfer/schmidt-dengler:2008} show that an appropriate choice of $\hat{W}$ implies that the $1$-MD is asymptotically equivalent to the $1$-PML estimator in \cite{aguirregabiria/mira:2007} or the simple estimators in \cite{pakes/ostrovsky/berry:2007}. Furthermore, \cite{pesendorfer/schmidt-dengler:2008} characterize the optimal choice of $W$, denoted by $W_{1-MD}^{\ast }$. In general, $\Sigma_{1-PML}-\Sigma_{1-MD}( W_{1-MD}^{\ast })$ is positive semidefinite, i.e., the optimal $1$-MD estimator is more or equally efficient than the $1$-PML estimator.

In the light of the results in \cite{aguirregabiria/mira:2007} and \cite{pesendorfer/schmidt-dengler:2008}, it is natural to inquire whether an iterated version of the MD estimator could yield efficiency gains relative to the $1$-MD or the $K$-PML estimators. The consideration of iterated MD estimator opens several important research questions. How should we define the iterated version of the MD estimator? Does this strategy result in consistent and asymptotically normal estimators of $\alpha^*$? If so, how should we choose the weight matrix $W$? What about the number of iterations $K$? Finally, does iterating the MD estimator produce efficiency gains as \cite{aguirregabiria/mira:2007} find for the $K$-PML estimators? This paper answers these questions.

We now summarize the main findings of our paper. We consider a standard dynamic discrete-choice game as in \cite{aguirregabiria/mira:2007} or \cite{pesendorfer/schmidt-dengler:2008}. In this context, we investigate the asymptotic properties of $K$-PML and $K$-MD estimators.

First, we establish that the $K$-PML estimator is consistent and asymptotically normal for any $K\in \mathbb{N}$. See also \cite{aguirregabiria:2004} for related results. This complements findings in \cite{aguirregabiria/mira:2007}, who focus on $K=1$ and $K$ large enough to induce convergence of the estimator. Under certain conditions, we show that the asymptotic variance of the $K$-PML estimator can exhibit arbitrary patterns as a function of $K$. In particular, depending on the parameters of the dynamic problem, the asymptotic variance could increase, decrease, or even be non-monotonic with $K$.

Second, we also establish that the $K$-MD estimator is consistent and asymptotically normal for any $K\in \mathbb{N}$. This is a novel contribution relative to \cite{pesendorfer/schmidt-dengler:2008} or \cite{pakes/ostrovsky/berry:2007}, who focus on non-iterative $1$-MD estimators. The asymptotic distribution of the $K$-MD estimator depends on the choice of the weight matrix. For a specific weight matrix, the $K$-MD has the same asymptotic distribution as the $K$-PML. We investigate the optimal choice of the weight matrix for the $K$-MD estimator. 

Our main result, Theorem \ref{thm:MD_main}, shows that an optimal $K$-MD estimator has an asymptotic distribution that is invariant to $K$. This appears to be a novel result in the literature on PI estimation for games, and it is particularly surprising given the findings in \cite{aguirregabiria/mira:2007} for $K$-PML estimators. Our main result implies two important corollaries regarding the optimal $1$-MD estimator (derived by \cite{pesendorfer/schmidt-dengler:2008}):
\begin{enumerate}
	\item The optimal $1$-MD estimator is efficient in the class of $K$-MD estimators. In other words, additional policy iterations do not provide efficiency gains relative to the optimal $1$-MD estimator. 
	\item The optimal $1$-MD estimator is more or equally efficient than any $K$-PML estimator.
\end{enumerate}
We also show in Section \ref{sec:MLE} that, under suitable conditions, the optimal $1$-MD estimator has the same asymptotic distribution as the maximum likelihood estimator (MLE), and so it is efficient within the class of all regular estimator. Finally, we reiterate that our asymptotic analysis focuses on first-order terms and ignores high-order approximation errors. In finite samples, these terms may generate high-order efficiency gains for the optimal $K$-MD as we vary the number of iterations. Besides exploring this in simulations, we study this issue theoretically in Section \ref{sec:highorder}, where we characterize these high-order terms.

The remainder of the paper is organized as follows. Section \ref{sec:Setup} describes the dynamic discrete choice game used in the paper, introduces the PI estimators and the main assumptions, and provides an illustrative example of the econometric model. Section \ref{sec:ML} studies the asymptotic properties of the $K$-PML estimator. Section \ref{sec:MD} introduces the $K$-MD estimation method, relates it to the $K$-PML method, and studies its asymptotic distribution. Section \ref{sec:MCsimulations} presents a Monte Carlo simulation, and Section \ref{sec:Conclusions} concludes. The appendix of the paper collects the proofs, intermediate results, and complementary findings.

\section{Setup}\label{sec:Setup}

This section describes the econometric model, introduces the estimator and the assumptions, and provides an illustrative example.

\subsection{Econometric model}\label{sec:Model}

We consider a standard dynamic discrete-choice game as described in \cite{aguirregabiria/mira:2007} or \cite{pesendorfer/schmidt-dengler:2008}. The game has discrete time $t=1,\ldots ,T\equiv\infty$, and a finite set of players indexed by $j \in J\equiv \{1,\ldots ,|J|\}$. In each period $t$, every player $j$ observes a vector of state variables $s_{jt}$ and chooses an action $a_{jt}$ from a finite and common set of actions $ A\equiv \{0,1,\ldots ,|A|-1\}$ (with $|A|>1$) to maximize his expected discounted utility. The action denoted by $0$ is referred to as the outside option, and we denote $\tilde{A}\equiv \{1,\ldots ,|A|-1\}$. All players choose their action simultaneously and non-cooperatively upon observation of state variables.

The vector of state variables $s_{jt}$ is composed of two subvectors $x_{t}$ and $\epsilon _{jt}$. The subvector $x_{t}\in X\equiv \{1,\dots ,|X|\}$ represents a state variable observed by all other players and the researcher, whereas the subvector $\epsilon _{jt}\in \mathbb{R}^{|A|}$ represents an action-specific state vector only observed by player $j$. We denote $\epsilon _{t} \equiv \{ \epsilon _{jt}:j \in J\} \in \mathbb{R}^{|A| \times |J|}$ and $\vec{a}_{t} \equiv \{ a_{jt}:j \in J\} \in A^{|J|}$.

We assume that $\epsilon _{t}$ is independent of $(\epsilon _{\tau},\vec{a}_{\tau},x_{\tau})$ for $\tau<t$, with density $dF_{\epsilon }(e) = \prod_{j=1}^{J} dF_{\epsilon,j}(e_j)$, where $dF_{\epsilon,j}$ is an absolutely continuous density function. We also assume that $\epsilon _{t}$ has full support and that $ E[\epsilon _{jt}|\epsilon _{jt}\geq e]$ is finite for all $e\in \mathbb{R}^{|A|} $. Conditional on $(\vec{a}_{t},x_{t})$, we assume that $x_{t+1}$ is independent of $(\epsilon _{\tau},\vec{a}_{\tau-1},x_{\tau-1})$ for $\tau\leq t$, and with probability $dF_{x}(x_{t+1}|\vec{a}_{t},x_{t})$. It then follows that $s_{t+1}=(x_{t+1},\epsilon _{t+1})$ is a Markov process with a probability density that satisfies
\begin{equation*}
d\Pr (x_{t+1},\epsilon _{t+1}|x_{t},\epsilon _{t},\vec{a}_{t})~=~ \prod_{j=1}^{J} dF_{\epsilon,j}(\epsilon _{t+1,j})~\times~dF_{x}(x_{t+1}|\vec{a}_{t},x_{t}).
\end{equation*}

Every player $j$ has a time-separable utility and discounts future payoffs by $\beta^* _{j}\in (0,1)$. The period $t$ payoff is received after every player made their choices and is given by:
\begin{equation*}
\pi _{j}(\vec{a}_{t},x_{t})~+~\sum_{k\in A}\epsilon _{j{t}}( k) ~1 [ a_{jt}=k] .
\end{equation*}

Following the literature, we assume Markov perfect equilibrium (MPE) as the equilibrium concept for the game. By definition, an MPE is a collection of strategies and beliefs for each player such that each player has: (a) rational beliefs, (b) an optimal strategy given his beliefs and other players' choices, and (c) Markovian strategies. According to \citet[Theorem 1]{pesendorfer/schmidt-dengler:2008}, this model has an MPE and it could even have multiple MPEs (e.g., see \citet[Sections 2 and 7]{pesendorfer/schmidt-dengler:2008}). We follow \cite{aguirregabiria/mira:2007} and assume that data come from one of the MPEs in which every player uses pure strategies.\footnote{As explained in \citet[footnote 3]{aguirregabiria/mira:2007}, this can be rationalized by Harsanyi's Purification Theorem.}

An MPE is a collection of equilibrium strategies and common beliefs. We denote the probability that player $j$ will choose action $a\in A $ given observed state $x$ by $P_{j}^{\ast }(a|x)$, and we denote $P^{\ast }\equiv \{P_{j}^{\ast }(a|x):(j,a,x)\in J\times \tilde{A}\times X\} \in \mathbb{R}^{d_{P}}$ with $d_{P} = |J|\times |\tilde{A}| \times |X|$. Note that beliefs only need to be specified in $\tilde{A}$ for every $(j,x)\in J\times X$, as $P_{j}^{\ast }(0|x)=1-\sum_{a\in \tilde{A}}P_{j}^{\ast }(a|x)$. We denote player $j$'s equilibrium strategy by $\{a_{j}^{\ast }(e,x):(e,x)\in \mathbb{R}^{|A|}\times X\}$, where $a_{j}^{\ast }(e,x)$ denotes player $j$'s optimal choice when the current private shock is $e$ and the observed state is $x$. Given that equilibrium strategies are time-invariant, we can abstract from calendar time for the remainder of the paper, and denote $\vec{a}=\vec{a}_{t}$, $\vec{a}^{\prime }=\vec{a}_{t+1}$, $ x=x_{t}$, and $x^{\prime }=x_{t+1}$.

We use $\theta^{\ast } \in \Theta$ to denote the finite-dimensional parameter vector that collects the model elements $(\{\pi _{j}:j \in J\},\{\beta^* _{j}:j \in J\},dF_{\epsilon },dF_{x})$. Throughout this paper, we split the parameter vector as follows:
\begin{align}
	\theta^{\ast }~\equiv~( \alpha^{\ast },g^{\ast }) ~\in~ \Theta ~\equiv~ \Theta _{\alpha }\times \Theta _{g},
	\label{eq:paramSplit}
\end{align}
where $\alpha^{\ast } \in \Theta_{\alpha}\subseteq \mathbb{R}^{d_\theta}$ denotes a parameter vector of interest that is estimated iteratively and $g^{\ast } \in \Theta_g\subseteq \mathbb{R}^{d_g}$ denotes a nuisance parameter vector that is estimated directly from the data. We assume $d_\alpha \leq d_P$ and note that, in a typical application, $d_\alpha$ is much smaller than $d_P$. In practice, structural parameters that determine the payoff functions $\{\pi _{j}:j \in J\}$ or the distribution $dF_{\varepsilon}$ usually belong to $\alpha^{\ast } $, while the transition probability density function $dF_{x}$ is typically part of $g^*$.\footnote{Note that the distinction between components of $\theta^*$ is without loss of generality, as one can choose to estimate all parameters iteratively by setting $\theta^* = \alpha^*$. The goal of estimating a nuisance parameter $g^*$ directly from the data is to simplify the computation of the iterative procedure by reducing its dimensionality.}

We now describe a fixed point mapping that characterizes equilibrium beliefs in any MPE. Let $ P=\{P_{j}(a|x):(j,a,x)\in J\times \tilde{A}\times X\}$ denote a set of beliefs, which need not be optimal. Given these beliefs, the ex-ante probability that player $j$ chooses equilibrium action $a$ given observed state $x$ is
\begin{equation}
\Psi _{j}(a,x; \alpha^{\ast },g^{\ast },P)\equiv \int_{\epsilon }\prod_{k\in A}1[u_{j}(a,x, \alpha^{\ast },g^{\ast },P)+\epsilon _{j}(a)\geq u_{j}(k,x, \alpha^{\ast },g^{\ast },P)+\epsilon _{j}(k)]dF_{\epsilon }(\epsilon |x), \label{eq:Phi}
\end{equation}
where $u_{j}(a,x, \alpha^{\ast },g^{\ast },P)$ denotes player $j$'s conditional choice value function under action $a$, state variable $x$, and with beliefs $P$. In turn,
\begin{equation*}
u_{j}(a,x, \alpha^{\ast },g^{\ast },P)\equiv \sum_{\tilde{a}\in A^{|J|-1}}1[\vec{a}=(a, \tilde{a})]\prod_{s\in J\backslash \{j\}}P_{s}(\tilde{a}_{s}|x)[\pi _{j}((a, \tilde{a}),x)+\beta^*_{j} \sum_{x^{\prime }\in X}dF_{x}(x^{\prime }|(a,\tilde{a} ),x)V_{j}(x^{\prime };P)],
\end{equation*}
where $\prod_{s\in J\backslash \{j\}}P_{s}(\tilde{a}_{s}|x)$ denotes the beliefs that the remaining players choose $\tilde{a}\equiv \{\tilde{a} _{s}:s\in J\backslash \{j\}\}$ conditional on $x$, and $V_{j}(x;P)$ is player $j$'s ex-ante value function conditional on $x$.\footnote{
The ex-ante value function is the discounted sum of future payoffs in the MPE given $x$ and before players observe shocks and choose actions. It can be computed with the mapping valuation operator defined in \citet[Eqs.\ 10 and 14]{aguirregabiria/mira:2007} or \citet[Eqs.\ 5 and 6]{pesendorfer/schmidt-dengler:2008}.} By stacking up this mapping for all decisions and states $(a,x)\in \tilde{A}\times X$ and all players $j \in J$, we define the probability mapping $\Psi ( \alpha^{\ast },g^{\ast },P)\equiv \{\Psi _{j}(a,x; \alpha^{\ast },g^{\ast },P):(j,a,x)\in J\times \tilde{A} \times X\}$. Given any set of beliefs $P$ (optimal or not), $\Psi ( \alpha^{\ast },g^{\ast },P)$ indicates the corresponding optimal CCPs. 

\citet[Representation Lemma]{aguirregabiria/mira:2007} and \citet[Proposition 1]{pesendorfer/schmidt-dengler:2008} show that the mapping $\Psi $ fully characterizes equilibrium beliefs $P^{\ast }$ in an MPE. That is, $P^{\ast }$ is an equilibrium belief if and only if
\begin{equation}
P^{\ast }~=~\Psi ( \alpha^{\ast },g^{\ast },P^{\ast }). \label{eq:FP}
\end{equation}
The goal of the paper is to study the problem of inference of $ \alpha^* \in \Theta _{\alpha }$ based on the fixed point equilibrium condition in Eq.\ \eqref{eq:FP}.

\subsection{Estimation procedure}\label{seq:Procedure}

The researcher estimates $\theta^{\ast }=(\alpha^{\ast },g^{\ast })\in \Theta \equiv \Theta _{\alpha }\times \Theta _{g}$ using a two-step and $K$-stage PI estimator. For any $K\in \mathbb{N}$, this estimator is defined as follows:

\begin{itemize}
\item {\bf Step 1:} Estimate $( g^{\ast },P^{\ast }) $ with preliminary estimators $( \hat{g},\hat{P}_{0})$.

\item {\bf Step 2:} Estimate $\alpha^{\ast }$ with $\hat{\alpha}_{K}$, computed by the following algorithm. Initialize $k=1$ and then:
\begin{itemize}
\item[(a)] Compute
\begin{equation}
\hat{\alpha}_{k}~~\equiv ~~\underset{{\alpha \in \Theta }_{\alpha }}{\arg \max }~{\hat{Q}}_{k}(\alpha ,\hat{g},\hat{P}_{k-1}), \label{eq:k-StepDefn}
\end{equation}
where ${\hat{Q}}_{k}:\Theta _{\alpha }\times \Theta _{g}\times \Theta _{P}\to \mathbb{R}$ is the $k$-th step sample objective function. If $k=K$, exit the algorithm. If $k<K$, go to (b).
\item[(b)] Estimate $P^{\ast }$ with the $k$-step estimator of the CCPs, given by
\begin{equation}
\hat{P}_{k}~~\equiv ~~\Psi ({\hat{\alpha}}_{k},\hat{g},\hat{P}_{k-1}).
\label{eq:k-PkDefn}
\end{equation}
Then, increase $k$ by one unit and return to (a).
\end{itemize}
\end{itemize}

Throughout this paper, we consider $\alpha^{\ast } $ to be our main parameter of interest, while $g^{\ast }$ is a nuisance parameter. For any $K\in \mathbb{N}$, the two-step and $K$-stage PI estimator of $\alpha^{\ast }$ is given by
\begin{equation}
\hat{\alpha}_{K}~~\equiv ~~\underset{{\alpha \in \Theta }_{\alpha }}{\arg \max }~{\hat{Q}}_{K}(\alpha ,\hat{g},\hat{P}_{K-1}), \label{eq:K-StepDefn}
\end{equation}
and the corresponding estimator of $\theta^{\ast }=(\alpha^{\ast },g^{\ast })$ is $\hat{\theta}_{K} = (\hat{\alpha}_{K}, \hat{g})$.

The algorithm does not specify the first-step estimators $(\hat{g},\hat{P}_{0})$ or the sequence of sample criterion functions $\{{\hat{Q}}_{k}:k\leq K\}$. One possible choice of $ \hat{P}_{0}$ is the sample frequency estimator of the CCPs, although this is not required. Rather than determining these objects now, we restrict them by making assumptions in the next subsection (see Assumptions \ref{ass:Baseline} and \ref{ass:Baseline2}).

To conclude this subsection, we note that \cite{aguirregabiria/mira:2007} consider a version of the algorithm described above in which only preliminary estimator of the CCPs is estimated in Step 1, and the entire parameter vector $(g^*,\alpha^*)$ is estimated (iteratively) in Step 2. This can be considered a special case of our algorithm by using $\alpha$ to denote the entire parameter vector $\theta$.

\subsection{Assumptions}\label{sec:assumptions}

This section introduces the main assumptions used in our analysis. We note that these conditions are similar to those used in several other papers in the literature, especially \cite{aguirregabiria/mira:2007} and \cite{pesendorfer/schmidt-dengler:2008}. In addition, our simulation results suggest that our assumptions are satisfied in the entry game described in Section \ref{sec:example}.\footnote{The MPEs in our entry game are found numerically, and this complicates verifying these conditions in practice, especially those related to multiplicity of equilibria. However, our simulation evidence does not suggest issues with any of our assumptions.}

As explained in Section \ref{sec:Model}, the game has an MPE but this need not be unique. To address this issue, we follow most of the literature, and assume that the researcher observes an i.i.d.\ sample from a single MPE.

\begin{assumptionA}[(I.i.d.\ data from one MPE)]\label{ass:iid} 
The data $\{\{(\{a_{j,i}:j \in J\},x_{i},x_{i}^{\prime })\}:i\leq n\}$ are an i.i.d.\ sample from a single MPE. This MPE determines the data generating process (DGP) denoted by $\Pi ^{\ast }\equiv \{\Pi ^{\ast }(\vec{a},x,x^{\prime }):(\vec{a},x,x^{\prime })\in A^{|J|}\times X\times X\}$, where $\Pi ^{\ast }(\vec{a},x,x^{\prime })$ denotes the probability that players choose action $\vec{a}$ and the current state variable evolves from $x$ to $x'$, i.e.,
\[
\Pi ^{\ast }(\vec{a},x,x')~\equiv ~\Pr [~(\{a_{j,i}:j \in J\},x_{i},x_{i}')=(\vec{a},x,x')~].
\]
\vspace{-0.5cm}
\end{assumptionA}

See \citet[Assumptions 5(A) and 5(D)]{aguirregabiria/mira:2007}) for a similar condition. The observations in the i.i.d.\ sample are indexed by $ i=1,\ldots ,n$, which could denote different markets as in \cite{aguirregabiria/mira:2007}. By Assumption \ref{ass:iid}, the data identify the DGP, i.e., $\Pi ^{\ast }(\vec{a},x,x')$ for every $ (\vec{a},x,x')\in  A^{|J|}\times X\times X$, which determine the equilibrium CCPs, transition probabilities, and marginal state distribution. For all $(j,\vec{a},x,x')\in J \times A^{|J|}\times X\times X$ with $\vec{a} = (a, \vec{a}_{-j})$, these are denoted by
\begin{eqnarray}
{P}_{j}^{\ast }(a|x)~ &\equiv &~\frac{\sum_{(\vec{a} _{-j},x^{\prime })\in A^{|J|-1}\times X}\Pi ^{\ast }(( a,\vec{a}_{-j}) ,x,x^{\prime })}{\sum_{(\vec{a},x^{\prime })\in A^{|J|}\times X}\Pi ^{\ast }(\vec{a},x,x^{\prime })} \notag \\
{\Lambda }^{\ast }(x^{\prime }|x,\vec{a})~ &\equiv &~\frac{\Pi ^{\ast }(\vec{ a},x,x^{\prime })}{\sum_{(\vec{a},x)\in A^{|J|}\times X}\Pi ^{\ast }(\vec{a} ,x,x^{\prime })} \notag \\
{m}^{\ast }(x)~ &\equiv &~\sum_{(\vec{a},x^{\prime })\in A^{|J|}\times X}\Pi ^{\ast }(\vec{a},x,x^{\prime }),\label{eq:DGP_elements}
\end{eqnarray}
where ${P}_{j}^{\ast }(a|x)$ denotes the probability that player $j$ will choose action $a$ given that the observed state is $x$, ${ \Lambda }^{\ast }(x'|x,\vec{a})$ denotes the probability that the future state observed state is $x^{\prime }$ given that the current observed state is $x$ and the action vector is $\vec{a}$, and ${m}^{\ast }(x)$ denotes the (unconditional) probability that the current observed state is $x $. Finally, recall that the equilibrium CCPs are $P^{\ast } = \{ P_{j}^{\ast }(a|x):(a,j,x) \in \tilde{A}\times J \times X\}$.

Identification of the CCPs, however, is not sufficient for identification of the parameters of interest. To this end, it is essential to make the following assumption.

\begin{assumptionA}[(Identification)]\label{ass:Identification}
$\Psi ( \alpha ,g^{\ast },P^{\ast }) =P^{\ast }$ if and only if $\alpha =\alpha^{\ast }$.
\end{assumptionA}

See \citet[Assumption 5(C)]{aguirregabiria/mira:2007} and \citet[Assumption A4]{pesendorfer/schmidt-dengler:2008} for a similar condition. Identification in these models is studied in \citet[Section 5]{pesendorfer/schmidt-dengler:2008}. In particular, \citet[Proposition 2]{pesendorfer/schmidt-dengler:2008} indicate the maximum number of parameters that could be identified from the model and \citet[Proposition 3]{pesendorfer/schmidt-dengler:2008} provides sufficient conditions for identification.

The $K$-stage PI estimator $\hat{\alpha} _{K}$ is an example of an extremum estimator. The following assumption imposes mild conditions that are typically used to show the asymptotic properties of these estimators.

\begin{assumptionA}[(Regularity conditions)]\label{ass:Regularity} 
Assume the following conditions:
\begin{enumerate}[(i)]
\item $\alpha^{\ast }$ belongs to the interior of $\Theta _{\alpha }$.
\item $\sup_{\alpha \in \Theta _{\alpha }}|\Psi (\alpha ,\tilde{g},\tilde{P} )-\Psi (\alpha ,g^{\ast },P^{\ast })|=o_{p}(1)$, provided that $(\tilde{g}, \tilde{P})=(g^{\ast },P^{\ast })+o_{p}(1)$.
\item $\inf_{\alpha \in \Theta _{\alpha }}\Psi _{ajx}(\alpha ,\tilde{g}, \tilde{P})>0$ for all $(a,j,x)\in A\times J\times X$, provided that $(\tilde{ g},\tilde{P})=(g^{\ast },P^{\ast })+o_{p}(1)$.
\item $\Psi (\alpha ,g,P)$ is twice continuously differentiable in a neighborhood of $(\alpha^{\ast },g^{\ast },P^{\ast })$. We use $\Psi _{\lambda }\equiv \partial \Psi (\alpha^{\ast },g^{\ast },P^{\ast })/\partial \lambda $ for $\lambda \in \{\alpha ,g,P\}$.
\item $(\mathbf{I}_{d_{P}}-\Psi _{P},-\Psi _{g})\in \mathbb{R}^{d_{P}\times (d_{P}+d_{g})}$ and $\Psi _{\alpha }\in \mathbb{R}^{d_{P}\times d_{\alpha }}$ are full rank matrices.
\end{enumerate}
\end{assumptionA}

We now comment on these conditions. First, the asymptotic analysis of $\hat{\alpha} _{K}$ follows from the first order condition that results from Eq.\ \eqref{eq:K-StepDefn}. Assumption \ref{ass:Regularity}(i) justifies the use of the first order condition for interior parameter values, and is also required by \cite{aguirregabiria/mira:2007} and \citet[Assumption A2]{pesendorfer/schmidt-dengler:2008}. Second, standard arguments to establish the consistency of $\hat{\alpha} _{K}$ require that the corresponding sample criterion function converges uniformly to its limit, which can be related to Assumptions \ref{ass:Regularity}(ii)-(iii). Third, standard argument to prove the asymptotic normality of $\hat{\alpha} _{K}$ are based on a mean value expansion based on the first order condition, which requires second-degree differentiability in Assumption \ref{ass:Regularity}(iv). We note that this assumption coincides with \citet[Assumption A5]{pesendorfer/schmidt-dengler:2008}. Fourth, Assumption \ref{ass:Regularity}(v) imposes rank conditions on the structure of the dynamic game that also required by \citet[Assumption A7]{pesendorfer/schmidt-dengler:2008}. Finally, we note that the validity of Assumptions \ref{ass:Regularity}(iii)-(v) can be formally tested in empirical applications.

We next introduce assumptions on $(\hat{g},\hat{P}_{0}) $, i.e., the preliminary estimators of $( g^{\ast },P^{\ast }) $. For reference, we first define the sample frequency estimator of the CCPs, given by
\begin{equation}
\hat{P}~\equiv ~\{\hat{P}_{j}(a|x):(a,j,x)\in \tilde{A}\times J\times X\}, \label{eq:SampleFrequency}
\end{equation}
with 
\begin{equation*}
\hat{P}_{j}(a|x)~\equiv~ \frac{\sum_{i=1}^{n} 1[(a_{jt,i},x_{t,i})=(a,x)]/n}{ \sum_{i=1}^{n} 1[x_{t,i} = x]/n }.
\end{equation*}
It is not hard to show that
\begin{equation*}
\sqrt{n}(\hat{P}-P^{\ast })~\overset{d}{\to}~N(\mathbf{0}_{d_{P}\times 1},\Omega _{PP}),
\end{equation*}
where $\Omega _{PP}$ is the block diagonal matrix defined by $\Omega _{PP}\equiv diag\{\Sigma _{jx}:(j,x)\in J\times X\}$ with $ \Sigma _{jx}\equiv (diag\{P_{jx}^{\ast }\}-P_{jx}^{\ast }P_{jx}^{\ast \prime })/m^{\ast }(x)$ and $P_{jx}^{\ast }\equiv \{P_{j}^{\ast }(a|x):a\in \tilde{A} \} $ for all $(j,x)\in J\times X$.

Rather than imposing specific preliminary estimators $( \hat{g},\hat{P} _{0}) $, we entertain two high-level assumptions that restrict the relationship between these and $\hat{P}$.

\begin{assumptionA}[(Baseline convergence)]\label{ass:Baseline}
$(\hat{P},\hat{P}_{0},\hat{g})$ satisfies the following
condition: 
\begin{equation*}
\sqrt{{n}}\left( 
\begin{array}{c}
\hat{P}-P^{\ast } \\ 
\hat{P}_{0}-P^{\ast } \\ 
\hat{g}-g^{\ast }
\end{array}
\right) ~~\overset{d}{\to }~~N\left( \left( 
\begin{array}{c}
\mathbf{0}_{d_{P}\times 1} \\ 
\mathbf{0}_{d_{P}\times 1} \\ 
\mathbf{0}_{d_{g}\times 1}
\end{array}
\right) ,\left( 
\begin{array}{ccc}
\Omega _{PP} & \Omega _{P0} & \Omega _{Pg} \\ 
\Omega _{P0}^{\prime } & \Omega _{00} & \Omega _{0g} \\ 
\Omega _{Pg}^{\prime } & \Omega _{0g}^{\prime } & \Omega _{gg}
\end{array}
\right) \right).
\end{equation*}
\end{assumptionA}

\begin{assumptionA}[(Baseline convergence II)]\label{ass:Baseline2}
$(\hat{P},\hat{P}_{0},\hat{g})$ satisfies the following conditions.
\begin{enumerate}[(i)]
\item The asymptotic variance of $(\hat{P},\hat{g})$ is nonsingular.
\item $( \hat{P},\hat{g})$ is an estimator of $(P^*,g^*)$ that at least as efficient as $((\mathbf{I}_{d_{P}}-M) \hat{P}+M\hat{P}_{0},\hat{g})$ for any $M\in \mathbb{R}^{d_{P}\times d_{P}}$.
\end{enumerate}
\end{assumptionA}

Assumption \ref{ass:Baseline} imposes the consistency and joint asymptotic normality of $(\hat{P},\hat{P}_{0},\hat{g})$, which is satisfied by all standard choices for these estimators when the observed states and actions have a finite support. Assumption \ref{ass:Baseline2}(i) is a natural condition that is implicitly imposed by the definition of the optimal weight matrix in \citet[Proposition 5]{pesendorfer/schmidt-dengler:2008}. The interpretation of Assumption \ref{ass:Baseline2}(ii) requires additional discussion. The point of a preliminary estimator $(\hat{P}_{0},\hat{g})$ is to approximate $(P^{\ast },g^{\ast })$ without the need for imposing the restrictions from the structural model, since this may be computationally burdensome. If we ignore these restrictions, $\hat{P}$ is the maximum likelihood estimator (MLE) of $ P^{\ast }$ and it is thus an efficient estimator of $P^{\ast }$.  As a corollary, $\hat{P}$ is an estimator of $P^*$ that is at least as efficient as $(\mathbf{I}_{d_{P}}-M)\hat{P}+M\hat{P}_{0}$ for any $M\in \mathbb{R}^{d_{P}\times d_{P}}$. Assumption \ref{ass:Baseline2}(ii) essentially requires that this conclusion also applies when these estimators are coupled with $\hat{g}$ as an estimator of $g^*$.

To illustrate Assumptions \ref{ass:Baseline} and \ref{ass:Baseline2}, it is necessary to specify the parameter vector $g^{\ast }$. A typical specification in the literature is the one in \cite{pesendorfer/schmidt-dengler:2008}, where $g^{\ast }$ is the vector of state transition probabilities, i.e.,
\begin{equation}
g^{\ast }~=~\{\Lambda^* (x^{\prime }|\vec{a},x):(x^{\prime },\vec{a},x)\in \tilde{X} \times A^{|J|}\times X\},
\label{eq:g_star_example}
\end{equation}
and $\tilde{X}\equiv \{ 2,\ldots ,|X|\} $ is the state space with the first action removed (to avoid redundancy). In this setting, a reasonable specification of $( \hat{P}_{0},\hat{g})$ is to set them equal to their corresponding sample frequency estimators, i.e., $\hat{P}_{0}=\hat{P}$ and $\hat{g}\equiv \{\hat{ g}(\vec{a},x,x^{\prime }):(x^{\prime },\vec{a},x)\in \tilde{X} \times A^{|J|}\times X\}$ with
\begin{equation}
\hat{g}(\vec{a},x,x^{\prime })~\equiv ~\frac{\sum_{i=1}^{n}1[(\vec{a} _{t,i},x_{t,i},x_{t,i}^{\prime })=(\vec{a},x,{x}^{\prime })]/n}{ \sum_{i=1}^{n}1[(\vec{a} _{t,i},x_{t,i})=(\vec{a},x)]/n}.
\label{eq:g_hat_example}
\end{equation}
If we ignore the restrictions of the econometric model, $(\hat{P}_{0},\hat{g})$ is the MLE of $( P^{\ast },g^{\ast }) $, and standard arguments imply Assumptions \ref{ass:Baseline} and \ref{ass:Baseline2}. However, note that one could obtain the same results if were replaced $(\hat{P}_{0},\hat{g})$ with any asymptotically equivalent estimator, i.e., any estimator $(\tilde{P}_{0}, \tilde{g})$ such that $(\tilde{P}_{0},\tilde{g})=(\hat{P}_{0},\hat{g} )+o( n^{-1/2}) $. Examples of asymptotically equivalent estimators would be the ones resulting from the flexible logit estimator (e.g., see \citet[page 382]{arcidiacono/ellickson:2011}), the relaxation method in \cite{kasahara/shimotsu:2012}, or the undersmoothed kernel estimator in \citet[Theorem 5.3]{grund:1993}.

\subsection{An illustrative example}\label{sec:example}

We illustrate the framework with the two-player version dynamic entry game in \citet[Example 5]{aguirregabiria/mira:2007}. In each period $t=1,\ldots ,T\equiv\infty$, two firms indexed by $j \in J=\{1,2\}$ simultaneously decide whether to enter or not into the market, upon observation of the state variables. Firm $j$'s decision at time $t$ is $a_{jt}\in A=\{0,1\}$, which takes value one if firm $j$ enters the market at time $t$, and zero otherwise. In each period $t$, the vector of state variables observed by firm $j$ is $s_{jt}=(x_{jt},\epsilon _{jt})$, where $\epsilon _{jt}=(\epsilon _{0jt},\epsilon _{1jt})\in \mathbb{R}^{2}$ represents the privately-observed vector of action-specific state variables and $ x_{jt}=x_{t}\in X\equiv \{ 1,2,3,4\} $ is a publicly-observed state variable that indicates the entry decisions in the previous period, i.e.,
\begin{equation*}
x_{jt}~=~\left[ 
\begin{array}{c}
1[ (a_{1,t-1},a_{2,t-1})=(0,0)] +2\times 1[ (a_{1,t-1},a_{2,t-1})=(0,1)] + \\
3\times 1[ (a_{1,t-1},a_{2,t-1})=(1,0)] +4\times 1[ (a_{1,t-1},a_{2,t-1})=(1,1)]
\end{array}
\right].
\end{equation*}

We specify the profit function as in \citet[Eq.\ (48)]{aguirregabiria/mira:2007}. If firm $j$ enters the market in period $t$, its period profits are
\begin{equation} 
	\pi _{j}((1,a_{-j,t}),x_{t})~=~\lambda _{RS}^{\ast }-\lambda _{RN}^{\ast }\ln (1+a_{-j,t})-\lambda _{FC,j}^{\ast }-\lambda _{EC}^{\ast }(1-a_{j,t-1})+\epsilon _{1jt}, \label{eq:ProfitsEnter} 
\end{equation} 
where $\lambda _{RS}^{\ast }$ represents fixed entry profits, $\lambda _{RN}^{\ast }$ represents the effect of a competitor's entry, $\lambda  _{FC,j}^{\ast }$ represents a firm-specific fixed cost, and $\lambda _{EC}^{\ast }$ represents the entry cost. On the other hand, if firm $j$ does not enter the market in period $t$, its period profits are 
\begin{equation*} 
	\pi _{j}((0,a_{-j,t}),x_{t})~=~\epsilon _{0jt}. 
\end{equation*} 
Firms discount future profits at a common discount factor $\beta^{\ast }\in (0,1)$. We assume that $\{\epsilon _{ajt}:(a,j,t) \in A \times J \times \mathbb{N}\}$ are i.i.d.\ with standard Gumbel distribution, and so
\begin{equation*}
dF_{\epsilon,j }(e_{jt})~=~\prod_{a=0}^{1}\exp (-\exp (-e_{ajt})).
\end{equation*}
Finally, $x_{t+1}$ is uniquely determined by $ \vec{a}_{t}$ and so,
\begin{equation*}
dF_{x}(x_{t+1}|\vec{a}_{t},x_{t})~=~1\left[ x_{t+1}=\left[ 
\begin{array}{c}
1[(a_{1t},a_{2t})=(0,0)]+2\times 1[(a_{1t},a_{2t})=(0,1)]+ \\ 3\times 1[(a_{1t},a_{2t})=(1,0)]+4\times 1[(a_{1t},a_{2t})=(1,1)]
\end{array}
\right] \right].
\end{equation*}
This completes the specification of the econometric model up to $(\lambda _{RN}^{\ast },\lambda _{EC}^{\ast },\lambda _{RS}^{\ast },\lambda _{FC,1}^{\ast },\lambda _{FC,2}^{\ast },\beta^{\ast })$. These parameters are known to the players but not necessarily known to the researcher.

We will use this econometric model to illustrate our theoretical results and for our Monte Carlo simulations. For simplicity, we presume that the researcher knows $(\lambda _{RS}^{\ast },\lambda _{FC,1}^{\ast },\lambda _{FC,2}^{\ast },\beta^{\ast })$, and is interested in estimating $(\lambda _{RN}^{\ast },\lambda _{EC}^{\ast })$. In addition, we assume that these parameters are all estimated iteratively in Step 2, i.e., $ \theta^{\ast }=\alpha^{\ast }\equiv (\lambda _{RN}^{\ast },\lambda _{EC}^{\ast })$. The only task in Step 1 is to estimate $P^{\ast }$ with a preliminary step estimator $\hat{P}_{0}$, i.e., this model has no parameter $g^{\ast }$. 

\section{Results for $K$-PML estimation}\label{sec:ML}

This section provides formal results for the $K$-PML estimator introduced in \cite{aguirregabiria/mira:2002,aguirregabiria/mira:2007} given an arbitrary number of iteration steps $K\in \mathbb{N}$. The $K$-PML estimator is defined by Eq.\ \eqref{eq:K-StepDefn} with the pseudo log-likelihood criterion function, i.e., $\hat{Q}_{K}=\hat{Q}_{PML}$. That is,
\begin{itemize}
\item {\bf Step 1:} Estimate $( g^{\ast },P^{\ast }) $ with preliminary step estimators $( \hat{g},\hat{P}_{0}) $.
\item {\bf Step 2:} Estimate $\alpha^{\ast }$ with $\hat{\alpha}_{K-PML}$, computed by the following algorithm. Initialize $k=1$ and then:
\begin{itemize}
\item[(a)] Compute
\begin{equation*}
\hat{\alpha}_{k-PML}~\equiv~ \underset{\alpha \in \Theta _{\alpha }}{\arg \min}~\frac{1}{n} \sum_{i=1}^{n}\ln \Psi ( \alpha ,\hat{g},\hat{P}_{k-1}) ( a_{i}|x_{i}) .
\end{equation*}
If $k=K$, exit the algorithm. If $k<K$, go to (b).
\item[(b)] Estimate $P^{\ast }$ with the $k$-step estimator of the CCPs, given by
\begin{equation*}
\hat{P}_{k}~\equiv~ \Psi ( \hat{\alpha}_{k-PML},\hat{g},\hat{P} _{k-1}) .
\end{equation*}
Then, increase $k$ by one unit and return to (a).
\end{itemize}
\end{itemize}

As explained in Section \ref{sec:Introduction}, The $K$-PML estimator is the $K$-stage PI estimator introduced by \cite{aguirregabiria/mira:2002} for dynamic single-agent problems and \cite{aguirregabiria/mira:2007} for dynamic games. \cite{aguirregabiria/mira:2007} study the asymptotic behavior of $\hat{\alpha}_{K-PML}$ for two extreme values of $K$: $K=1$ and $K$ large enough to induce the convergence of the estimator. Under some conditions, they show that iterating the $K$-PML estimator until convergence produces efficiency gains.

The results in \cite{aguirregabiria/mira:2007} are restricted in two ways. First, they focus on these two extreme values of $K$, without considering other possible values. Second, they restrict attention to the case in which the parameters of the model are estimated in the iterative part of the algorithm (i.e., $\theta = \alpha$). In Theorem \ref{thm:ML}, we complement the analysis in \cite{aguirregabiria/mira:2007} along these two dimensions. In particular, we derive the asymptotic distribution of the two-step $K$-PML estimator for any finite $K\geq 1$.

\begin{theorem}[Two-step $K$-PML]\label{thm:ML}
Fix $K\in \mathbb{N}$ arbitrarily and assume Assumptions \ref{ass:iid}-\ref{ass:Baseline}. Then,
\begin{equation*}
\sqrt{n}( \hat{\alpha}_{K-PML}-\alpha^{\ast }) ~~\overset{d}{ \to }~~N( \mathbf{0}_{d_{\alpha }\times 1},\Sigma _{K-PML}( \hat{P}_{0},\hat{g}) ) ,
\end{equation*}
where
\begin{equation*}
\Sigma _{K-PML}( \hat{P}_{0}) \equiv \left\{ 
\begin{array}{c}
( \Psi _{\alpha }^{\prime }\Omega _{PP}^{-1}\Psi _{\alpha })^{-1}\Psi _{\alpha }^{\prime }\Omega _{PP}^{-1}\times \\
\left[ 
\begin{array}{c}
( \mathbf{I}_{d_{P}}-\Psi _{P}\Phi _{K,P})^{\prime } \\ 
-( \Psi _{P}\Phi _{K,0})^{\prime } \\ 
-\Psi _{g}^{\prime }( \mathbf{I}_{d_{P}}+\Psi _{P}\Phi _{K,g})
^{\prime }
\end{array}
\right]^{\prime }\left(
\begin{array}{ccc}
\Omega _{PP} & \Omega _{P0} & \Omega _{Pg} \\ 
\Omega _{P0}^{\prime } & \Omega _{00} & \Omega _{0g} \\ 
\Omega _{Pg}^{\prime } & \Omega _{0g}^{\prime } & \Omega _{gg}
\end{array}
\right) \left[ 
\begin{array}{c}
( \mathbf{I}_{d_{P}}-\Psi _{P}\Phi _{K,P})^{\prime } \\ 
-( \Psi _{P}\Phi _{K,0})^{\prime } \\ 
-\Psi _{g}^{\prime }( \mathbf{I}_{d_{P}}+\Psi _{P}\Phi _{K,g})^{\prime }
\end{array}
\right]  \\ 
\times \Omega _{PP}^{-1}\Psi _{\alpha }( \Psi _{\alpha }^{\prime }\Omega _{PP}^{-1}\Psi _{\alpha })^{-1}
\end{array}
\right\} ,
\end{equation*}
and $\{ \Phi _{k,P}:k\leq K\} $, $\{ \Phi _{k,0}:k\leq K\} $, and $\{ \Phi _{k,g}:k\leq K\} $ are defined as follows. Set $\Phi _{1,P}\equiv \mathbf{0}_{d_{P}\times d_{P}}$, $\Phi _{1,0}\equiv \mathbf{I}_{d_{P}}$, $\Phi _{1,g}\equiv \mathbf{0}_{d_{P}\times d_{P}}$ and, for any $k\leq K-1$,
\begin{align}
\Phi _{k+1,P}& ~\equiv~ ( \mathbf{I}_{d_{P}}-\Psi _{\alpha }( \Psi _{\alpha }^{\prime }\Omega _{PP}^{-1}\Psi _{\alpha })^{-1}\Psi _{\alpha }^{\prime }\Omega _{PP}^{-1}) \Psi _{P}\Phi _{k,P}+\Psi _{\alpha }( \Psi _{\alpha }^{\prime }\Omega _{PP}^{-1}\Psi _{\alpha })^{-1}\Psi _{\alpha }^{\prime }\Omega _{PP}^{-1}, \notag \\
\Phi _{k+1,0}& ~\equiv~( \mathbf{I}_{d_{P}}-\Psi _{\alpha }( \Psi _{\alpha }^{\prime }\Omega _{PP}^{-1}\Psi _{\alpha })^{-1}\Psi _{\alpha }^{\prime }\Omega _{PP}^{-1}) \Psi _{P}\Phi _{k,0}, \notag \\
\Phi _{k+1,g}& ~\equiv~( \mathbf{I}_{d_{P}}-\Psi _{\alpha }( \Psi _{\alpha }^{\prime }\Omega _{PP}^{-1}\Psi _{\alpha })^{-1}\Psi _{\alpha }^{\prime }\Omega _{PP}^{-1}) \Psi _{P}\Phi _{k,g}+( \mathbf{I}_{d_{P}}-\Psi _{\alpha }( \Psi _{\alpha }^{\prime }\Omega _{PP}^{-1}\Psi _{\alpha })^{-1}\Psi _{\alpha }^{\prime }\Omega _{PP}^{-1}) . \label{eq:coefficients_ML}
\end{align}
\end{theorem}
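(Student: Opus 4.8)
The plan is to derive a joint asymptotic-linear representation of the coupled sequence $(\hat{\alpha}_{k},\hat{P}_{k})_{k\le K}$ and read the variance off the final representation. First I would establish consistency by induction on $k$. Given $\hat{P}_{0}\overset{p}{\to}P^{\ast}$ and $\hat{g}\overset{p}{\to}g^{\ast}$ from Assumption \ref{ass:Baseline}, the uniform convergence in Assumption \ref{ass:Regularity}(ii)--(iii) together with the identification Assumption \ref{ass:Identification} yield $\hat{\alpha}_{1}\overset{p}{\to}\alpha^{\ast}$ by the standard extremum-estimator argument; then $\hat{P}_{1}=\Psi(\hat{\alpha}_{1},\hat{g},\hat{P}_{0})\overset{p}{\to}\Psi(\alpha^{\ast},g^{\ast},P^{\ast})=P^{\ast}$ by continuity and the fixed-point identity \eqref{eq:FP}, and iterating gives $\hat{\alpha}_{k}\overset{p}{\to}\alpha^{\ast}$, $\hat{P}_{k}\overset{p}{\to}P^{\ast}$ for every $k\le K$. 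Interiority (Assumption \ref{ass:Regularity}(i)) then licenses working with the first-order condition of the $k$-th step problem.

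Second --- and this is the crux --- I would linearize the $k$-th step first-order condition by a mean-value expansion about $(\alpha^{\ast},g^{\ast},P^{\ast})$ and exploit the multinomial information identity. The key lemma is that, because $\hat{Q}_{PML}$ is a multinomial log-likelihood in the probabilities $\Psi(\cdot)(a|x)$, its score at the truth satisfies
\[
\sqrt{n}\,\partial\hat{Q}_{PML}(\alpha^{\ast},g^{\ast},P^{\ast})/\partial\alpha ~=~ \Psi_{\alpha}^{\prime}\Omega_{PP}^{-1}\sqrt{n}(\hat{P}-P^{\ast})+o_{p}(1),
\]
where $\hat{P}$ is the sample-frequency estimator \eqref{eq:SampleFrequency}; this uses the explicit block form of $\Omega_{PP}$, its inverse, and the fact that $\sum_{a}\partial\Psi_{j}(a|x)/\partial\alpha=0$. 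By the same identity the probability limits of the Hessian blocks are $-\Psi_{\alpha}^{\prime}\Omega_{PP}^{-1}\Psi_{\alpha}$, $-\Psi_{\alpha}^{\prime}\Omega_{PP}^{-1}\Psi_{P}$, and $-\Psi_{\alpha}^{\prime}\Omega_{PP}^{-1}\Psi_{g}$ (the second-derivative-of-$\Psi$ contributions vanish in the limit because the score is mean zero). Writing $B\equiv(\Psi_{\alpha}^{\prime}\Omega_{PP}^{-1}\Psi_{\alpha})^{-1}\Psi_{\alpha}^{\prime}\Omega_{PP}^{-1}$ and solving the expanded condition gives
\[
\sqrt{n}(\hat{\alpha}_{k}-\alpha^{\ast})~=~B\big[\sqrt{n}(\hat{P}-P^{\ast})-\Psi_{P}\sqrt{n}(\hat{P}_{k-1}-P^{\ast})-\Psi_{g}\sqrt{n}(\hat{g}-g^{\ast})\big]+o_{p}(1).
\]
I expect this information-identity computation to be the main obstacle, since it is what produces the specific $\Omega_{PP}^{-1}$ weighting and ties the score to $\hat{P}$.

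Third, I would linearize the CCP update \eqref{eq:k-PkDefn} about the fixed point, $\sqrt{n}(\hat{P}_{k}-P^{\ast})=\Psi_{\alpha}\sqrt{n}(\hat{\alpha}_{k}-\alpha^{\ast})+\Psi_{P}\sqrt{n}(\hat{P}_{k-1}-P^{\ast})+\Psi_{g}\sqrt{n}(\hat{g}-g^{\ast})+o_{p}(1)$, and substitute the representation of $\sqrt{n}(\hat{\alpha}_{k}-\alpha^{\ast})$ from the previous step. Setting $\Pi_{\alpha}\equiv\Psi_{\alpha}B$ (an oblique projection onto the range of $\Psi_{\alpha}$), this closes into the linear recursion
\[
\sqrt{n}(\hat{P}_{k}-P^{\ast})~=~\Pi_{\alpha}\sqrt{n}(\hat{P}-P^{\ast})+(\mathbf{I}_{d_{P}}-\Pi_{\alpha})\Psi_{P}\sqrt{n}(\hat{P}_{k-1}-P^{\ast})+(\mathbf{I}_{d_{P}}-\Pi_{\alpha})\Psi_{g}\sqrt{n}(\hat{g}-g^{\ast}).
\]
Solving it by induction on $k$, I would show $\sqrt{n}(\hat{P}_{k-1}-P^{\ast})=\Phi_{k,P}\sqrt{n}(\hat{P}-P^{\ast})+\Phi_{k,0}\sqrt{n}(\hat{P}_{0}-P^{\ast})+\Phi_{k,g}\Psi_{g}\sqrt{n}(\hat{g}-g^{\ast})+o_{p}(1)$, where matching coefficients reproduces exactly the recursions \eqref{eq:coefficients_ML} with initialization $(\Phi_{1,P},\Phi_{1,0},\Phi_{1,g})=(\mathbf{0}_{d_{P}\times d_{P}},\mathbf{I}_{d_{P}},\mathbf{0}_{d_{P}\times d_{P}})$; the factored $\Psi_{g}$ is precisely what turns the $g$-recursion into $\Phi_{k+1,g}=(\mathbf{I}_{d_{P}}-\Pi_{\alpha})\Psi_{P}\Phi_{k,g}+(\mathbf{I}_{d_{P}}-\Pi_{\alpha})$. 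This step is routine bookkeeping once the one-step linearization is in hand.

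Finally, substituting the $K$-th coefficient identities into the representation of $\sqrt{n}(\hat{\alpha}_{K}-\alpha^{\ast})$ collapses it to
\[
\sqrt{n}(\hat{\alpha}_{K}-\alpha^{\ast})~=~B\,D^{\prime}\,\sqrt{n}\,(\hat{P}-P^{\ast},\,\hat{P}_{0}-P^{\ast},\,\hat{g}-g^{\ast})^{\prime}+o_{p}(1),
\]
with $D$ the $(2d_{P}+d_{g})\times d_{P}$ matrix whose transpose is the bracketed factor in the statement. Assumption \ref{ass:Baseline} supplies the joint central limit theorem for $(\hat{P},\hat{P}_{0},\hat{g})$, so Slutsky's theorem and the continuous mapping theorem deliver asymptotic normality with variance $\Sigma_{K-PML}=B\,D^{\prime}\,\Omega\,D\,B^{\prime}$, which, using $B^{\prime}=\Omega_{PP}^{-1}\Psi_{\alpha}(\Psi_{\alpha}^{\prime}\Omega_{PP}^{-1}\Psi_{\alpha})^{-1}$, is exactly the stated expression. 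A useful sanity check is $K=1$, where $(\Phi_{1,P},\Phi_{1,0},\Phi_{1,g})=(\mathbf{0},\mathbf{I}_{d_{P}},\mathbf{0})$ reduces the representation to $B[\sqrt{n}(\hat{P}-P^{\ast})-\Psi_{P}\sqrt{n}(\hat{P}_{0}-P^{\ast})-\Psi_{g}\sqrt{n}(\hat{g}-g^{\ast})]$, the familiar two-step PML formula.
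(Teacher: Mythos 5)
Your proposal is correct and follows essentially the same route as the paper's proof: your ``information identity'' computations (the score linearization $\Psi_{\alpha}^{\prime}\Omega_{PP}^{-1}\sqrt{n}(\hat{P}-P^{\ast})$ and the Hessian limits $-\Psi_{\alpha}^{\prime}\Omega_{PP}^{-1}\Psi_{\lambda}$) are exactly what the paper verifies as conditions of its general iterated-extremum-estimator result (Theorem \ref{thm:2stepB}), whose proof contains precisely your consistency induction, FOC expansion, CCP-update linearization, and coefficient recursions, followed by the same CLT/Slutsky step. The only difference is organizational: the paper abstracts the recursion into a standalone theorem under high-level conditions (so it can be reused for the $K$-MD estimator), whereas you carry the argument out inline specialized to the PML criterion.
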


We make two comments regarding this result. First, Theorem \ref{thm:ML} considers $K\in \mathbb{N}$ and \textit{fixed} as $n\to \infty$. Because of this, our asymptotic framework is not subject to the criticism raised by \cite{pesendorfer/schmidt-dengler:2010}. Second, we note that $\Sigma _{K-PML}(\hat{P}_{0},\hat{g})$ can be consistently estimated using consistent estimators of $(\alpha^*,g^*)$ (e.g., $(\hat{\alpha} _{1-PML},\hat{g})$) and the asymptotic variance in Assumption \ref{ass:Baseline}.

Theorem \ref{thm:ML} reveals that the $K$-PML estimator of $\alpha^{\ast }$ is consistent and asymptotically normally distributed for all $K\geq 1$. Thus, the asymptotic mean squared error of the $K$-PML estimator is equal to its asymptotic variance, $\Sigma _{K-PML}( \hat{P}_{0}) $. The goal for the rest of the section is to investigate how this asymptotic variance changes with the number of iterations $K$.

In single-agent dynamic problems, \cite{aguirregabiria/mira:2002} show that the so-called zero Jacobian property holds, i.e., $\Psi _{P}=\mathbf{0} _{dP\times dP}$. If we plug in this condition into Theorem \ref{thm:ML}, we conclude that the asymptotic variance of the $K$-PML estimator is given by
\begin{align*}
	\Sigma _{K-PML}( \hat{P}_0) = (\Psi _{\alpha }^{\prime }\Omega _{PP}^{-1}\Psi _{\alpha })^{-1}\Psi _{\alpha }^{\prime }\Omega _{PP}^{-1} ( \Omega _{PP}+\Psi _{g}\Omega _{gg}\Psi _{g}^{\prime } -\Psi _{g}\Omega _{Pg}^{\prime }-\Omega _{Pg}\Psi _{g}^{\prime } ) \Omega _{PP}^{-1}\Psi _{\alpha }(\Psi _{\alpha }^{\prime }\Omega _{PP}^{-1}\Psi _{\alpha })^{-1}.
\end{align*}
This expression is invariant to $K$, corresponding to the main finding in \cite{aguirregabiria/mira:2002}.

In multiple-agent dynamic problems, however, the zero Jacobian property no longer holds. In the special case with $ d_\alpha = d_P$, Theorem \ref{thm:ML} shows that the asymptotic variance of the $K$-PML does not depend on $K$, and is given by
\begin{align*}
    \Sigma _{K-PML}(\hat{P}_{0})=\Psi _{\alpha }^{-1}[(\mathbf{I}_{d_{P}}-\Psi
_{P})\Omega _{PP}(\mathbf{I}_{d_{P}}-\Psi _{P})^{\prime }-\Psi _{g}\Omega
_{Pg}^{\prime }(\mathbf{I}_{d_{P}}-\Psi _{P})^{\prime }-(\mathbf{I}_{d_{P}}-\Psi _{P})\Omega _{Pg}\Psi _{g}^{\prime }-\Psi _{g}\Omega _{gg}\Psi
_{g}^{\prime }](\Psi _{\alpha }^{\prime })^{-1}.
\end{align*}
However, most applications will have $ d_\alpha<d_P$. For this more common case, Theorem \ref{thm:ML} reveals that the asymptotic variance of the $K$-PML estimator can be a complicated function of the number of iteration steps $K$. We illustrate this complexity using the example of Section \ref{sec:example}. In this example, the researcher is interested in estimating $(\lambda _{RN}^{\ast },\lambda _{EC}^{\ast })$. For simplicity, we set $\hat{P}_{0}=\hat{P}$. In this context, the asymptotic variance of $\hat{\alpha}_{K-PML}=(\hat{\lambda}_{RN,K-PML},\hat{ \lambda}_{EC,K-PML})$ is given by
\begin{equation}
\Sigma _{K-PML}( \hat{P}_0) = (\Psi _{\alpha }^{\prime }\Omega _{PP}^{-1}\Psi _{\alpha })^{-1}\Psi _{\alpha }^{\prime }\Omega _{PP}^{-1}( \mathbf{I}_{d_{P}}-\Psi _{P}\Phi _{K,P0})\Omega _{PP}(\mathbf{I} _{d_{P}}-\Psi _{P}\Phi _{K,P})^{\prime }\Omega _{PP}^{-1}\Psi _{\alpha }(\Psi _{\alpha }^{\prime }\Omega _{PP}^{-1}\Psi _{\alpha })^{-1}, \label{eq:Sigma_ML_example}
\end{equation}
where $\{\Phi _{k,P0}:k\leq K\}$ is defined by $\Phi _{k,P0}\equiv \Phi _{k,P}+\Phi _{k,0}$, with $\{\Phi _{k,P}:k\leq K\}$ and $\{\Phi _{k,0}:k\leq K\}$ as in Eq.\ \eqref{eq:coefficients_ML}. For any true parameter vector and any $K\in \mathbb{N} $, we can numerically compute Eq.\ \eqref{eq:Sigma_ML_example}. For the exposition, we focus on the asymptotic variance of $\hat{\lambda} _{RN,K-PML}$, which corresponds to the [1,1]-element of $\Sigma _{K-PML}( \hat{P}_0 )$. Figures \ref{fig:example1}, \ref{fig:example2}, and \ref{fig:example3} show the asymptotic variance of $\hat{ \lambda}_{RN,K-PML}$ as a function of $K$ for $(\lambda _{RN}^{\ast },\lambda _{EC}^{\ast },\lambda _{RS}^{\ast },\lambda _{FC,1}^{\ast },\lambda _{FC,2}^{\ast },\beta^{\ast }) = (2.8,0.8,0.7,0.6,0.4,0.95)$, $(2,1.8,0.2,0.01,0.03,0.95)$, and $ (2.2,1.45,0.45,0.22,0.29,0.95)$, respectively. These figures confirm that, in general, the asymptotic variance of the $K$-PML estimator can decrease, increase, or even fluctuate with the number of iterations $K$. Note that these widely different patterns occur within the same econometric model. Finally, we point out that qualitatively similar results can be obtained for the asymptotic variance of $\hat{\lambda} _{EC,K-PML}$, which corresponds to the [2,2]-element of $\Sigma _{K-PML}( \hat{P}_0 )$.

We view the fact that $\Sigma _{K-PML}( \hat{P}_{0}) $ can change so much with the number of iterations $K$ as a negative feature of the $K$-PML estimator. A researcher who uses the $K$-PML estimator and is interested in efficiency faces difficulties when choosing $K$. Prior to estimation, the researcher cannot be certain regarding the effect of $K$ on the efficiency of the $K$-PML estimator. Additional iterations could help efficiency (as in Figure \ref{fig:example1}) or hurt efficiency (as in Figure \ref{fig:example2}). In principle, the researcher could consistently estimate the asymptotic variance of $K$-PML for each $K$ by plugging in any consistent estimator of the structural parameters (e.g., $\hat{\alpha} _{1-PML}$ and $\hat{g}$). 

\begin{figure}[H]
	\centering
		\includegraphics[scale=0.57]{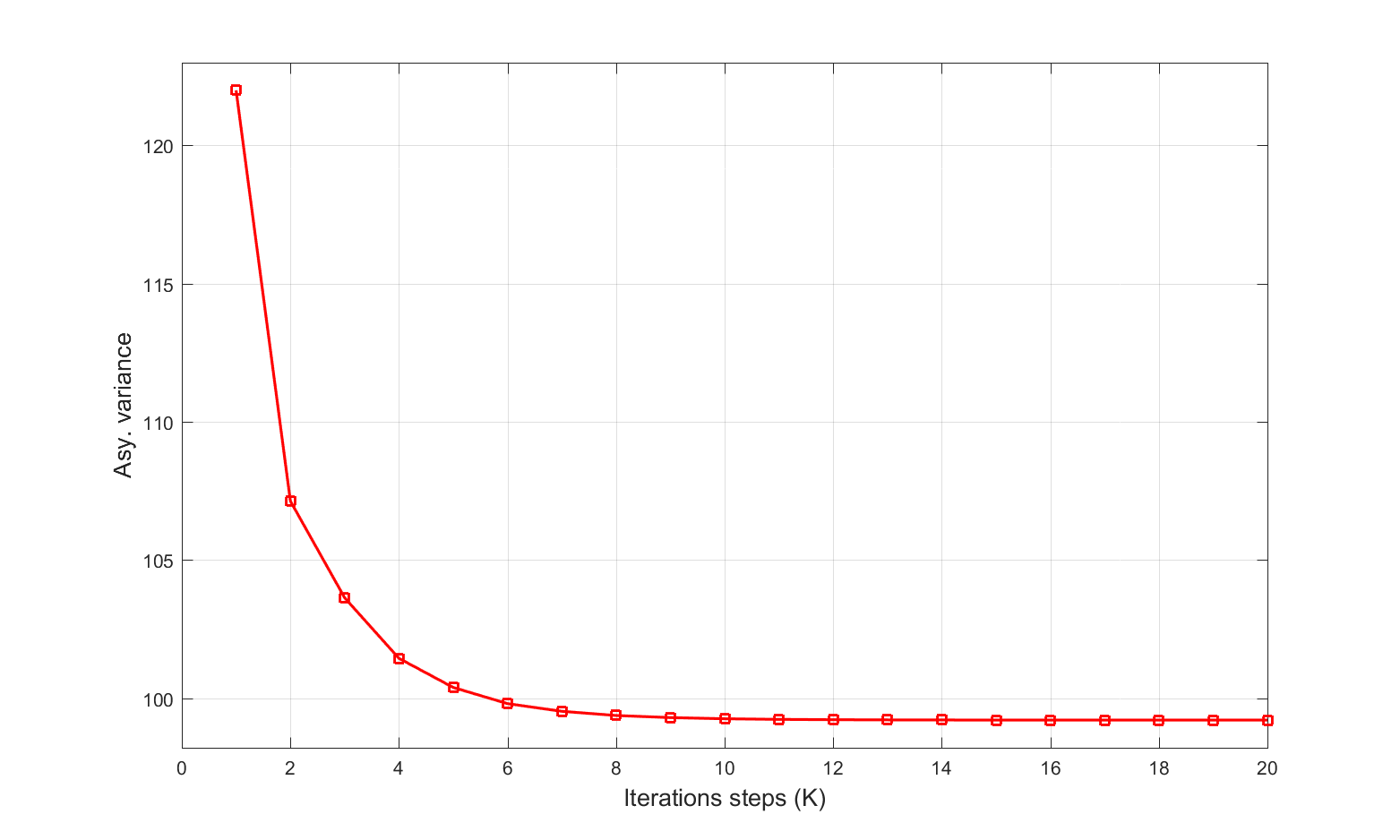}
	\caption{\small Asymptotic variance of the $K$-PML estimator of $\lambda _{RN}^{\ast }$ as a function of the number of iterations $K$ when $(\lambda _{RN}^{\ast },\lambda _{EC}^{\ast },\lambda _{RS}^{\ast },\lambda _{FC,1}^{\ast },\lambda _{FC,2}^{\ast },\beta^{\ast }) = (2.8,0.8,0.7,0.6,0.4,0.95)$.}
	\label{fig:example1}
\end{figure}

\begin{figure}[H]
	\centering
		\includegraphics[scale=0.57]{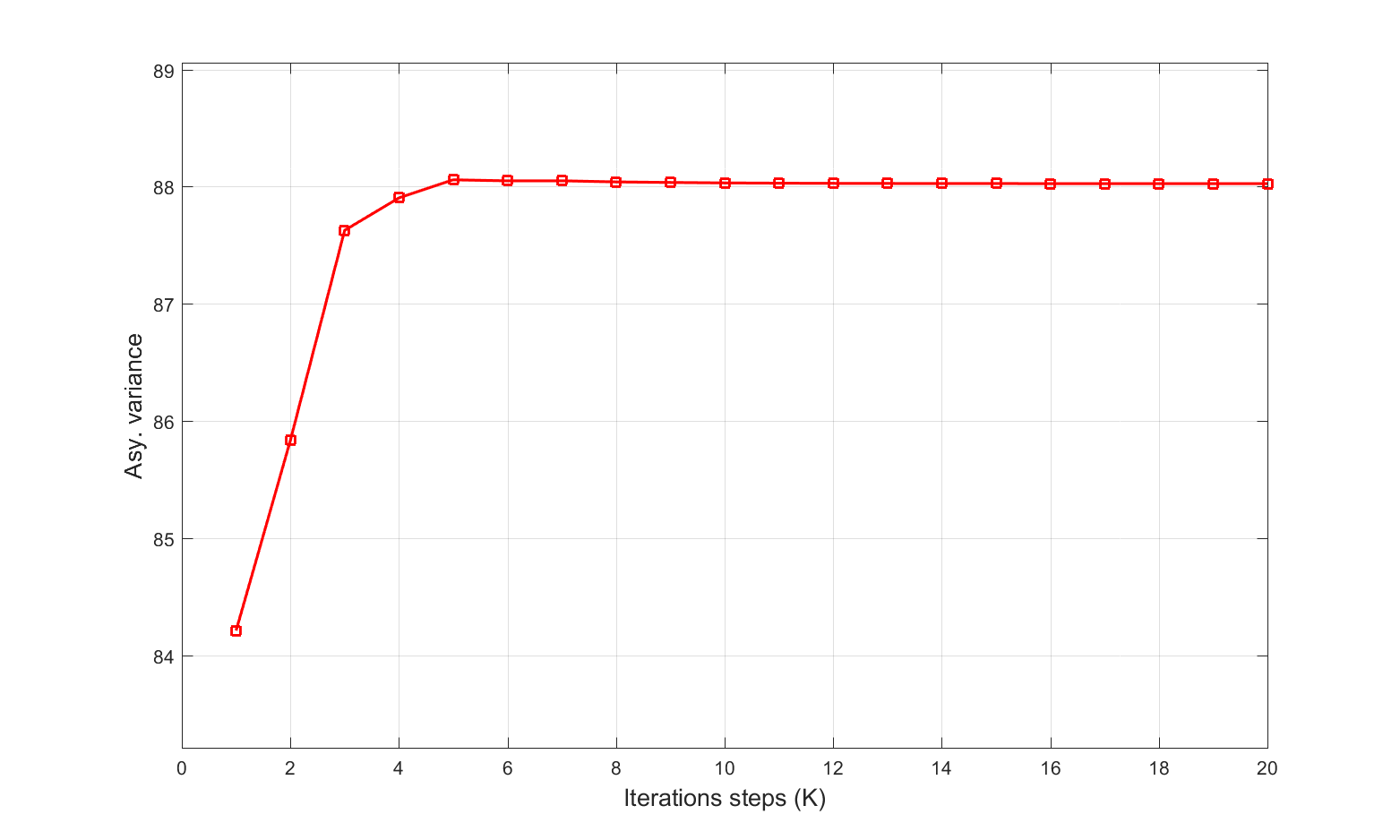}
	\caption{\small Asymptotic variance of the $K$-PML estimator of $\lambda _{RN}^{\ast }$ as a function of the number of iterations $K$ when $(\lambda _{RN}^{\ast },\lambda _{EC}^{\ast },\lambda _{RS}^{\ast },\lambda _{FC,1}^{\ast },\lambda _{FC,2}^{\ast },\beta^{\ast }) = (2,1.8,0.2,0.01,0.03,0.95)$.}
	\label{fig:example2}
\end{figure}

\begin{figure}[H]
	\centering
		\includegraphics[scale=0.57]{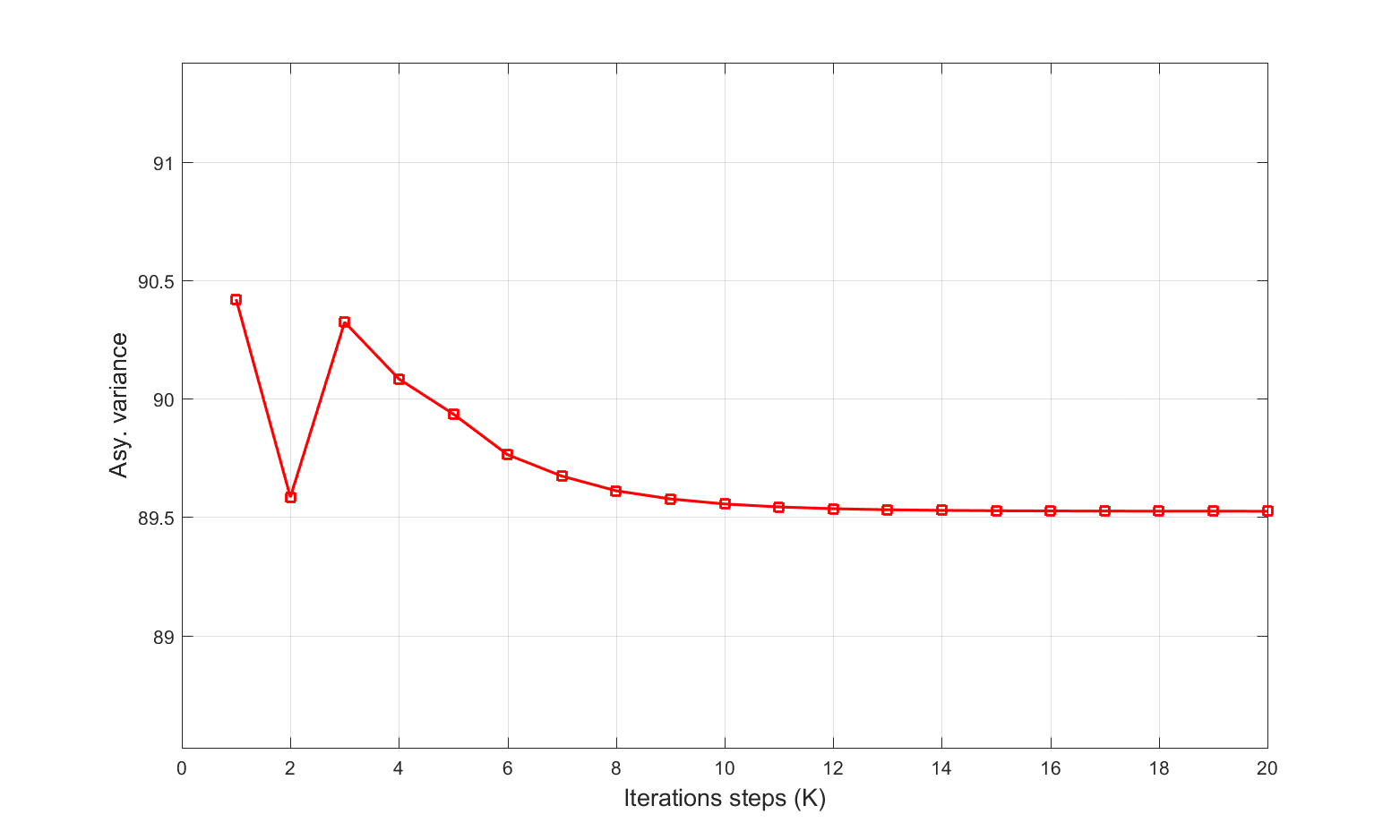}
	\caption{\small  Asymptotic variance of the $K$-PML estimator of $\lambda _{RN}^{\ast }$ as a function of the number of iterations $K$ when $(\lambda _{RN}^{\ast },\lambda _{EC}^{\ast },\lambda _{RS}^{\ast },\lambda _{FC,1}^{\ast },\lambda _{FC,2}^{\ast },\beta^{\ast }) = (2.2,1.45,0.45,0.22,0.29,0.95)$.}
	\label{fig:example3}
\end{figure}

\section{Results for $K$-MD estimation}\label{sec:MD}

In this section, we introduce a new class of $K$-stage PI estimators, referred to as the $K$-MD estimator. We demonstrate that this has several advantages over the $K$-PML estimator. In particular, we show that an optimal $K$-MD estimator dominates the $K$-PML estimator in terms of efficiency, and its asymptotic variance does not change with $K$. In addition, we show that an optimal $K$-MD estimator can be achieved with $K=1$.

For any $K\in \mathbb{N}$, the $K$-MD estimator is defined by Eq.\ \eqref{eq:K-StepDefn} with (negative) minimum distance criterion function:
\begin{equation*}
{\hat{Q}}_{K-MD}(\alpha ,g,P)~\equiv ~-( \hat{P}-\Psi ( \alpha ,g,P) )^{\prime }\hat{W}_{K}( \hat{P}-\Psi ( \alpha ,g,P) ),
\end{equation*}
and where $\{ \hat{W}_{k}:k\leq K\} $ is a sequence of positive semidefinite weight matrices. That is,
\begin{itemize}
\item \textbf{Step 1:} Estimate $( g^{\ast },P^{\ast }) $ with preliminary step estimators $( \hat{g},\hat{P}_{0}) $.
\item \textbf{Step 2:} Estimate $\alpha^{\ast }$ with $\hat{\alpha}_{K-MD}$, computed by the following algorithm. Initialize $k=1$ and then:
\begin{itemize}
\item[(a)] Compute 
\begin{equation*}
\hat{\alpha}_{k-MD}~\equiv ~\underset{\alpha \in \Theta _{\alpha }}{\arg \min }~( \hat{P}-\Psi ( \alpha ,\hat{g},\hat{P}_{k-1}) )^{\prime }\hat{W}_{k}( \hat{P}-\Psi ( \alpha ,\hat{g},\hat{ P}_{k-1}) ) .
\end{equation*}
If $k=K$, exit the algorithm. If $k<K$, go to (b).
\item[(b)] Estimate $P^{\ast }$ with the $k$-step estimator of the CCPs, given by
\begin{equation*}
\hat{P}_{k}~\equiv ~\Psi (\hat{\alpha}_{k-MD},\hat{g},\hat{P}_{k-1}).
\end{equation*}
Then, increase $k$ by one unit and return to (a).
\end{itemize}
\end{itemize}

The implementation of the $K$-MD estimator requires several choices: the number of iteration steps $K$ and the associated weight matrices $\{ \hat{W}_{k}:k\leq K\}$. We note that the sequence of weight matrices does not affect the $K$-MD estimator in the special case with $ d_\alpha = d_P$, although, as already mentioned, it is far more common for applications to have $ d_\alpha<d_P$. Also, note that the least squares estimator in \cite{pesendorfer/schmidt-dengler:2008} is a particular case of our $1$-MD estimator with $\hat{P}_{0}=\hat{P}$. In this sense, our $K$-MD estimator can be considered as an iterative version of their least squares estimator. The primary goal of this section is to study how to make optimal choices of $K$ and $\{ \hat{W}_{k}:k\leq K\}$.

To establish the asymptotic properties of the $K$-MD estimator, we add the following assumption.

\begin{assumptionA}[(Weight matrices)]\label{ass:Weight}
	For every $k\leq K$, $\hat{W}_{k}\overset{p}{\to}W_{k}$ and $W_{k} \in \mathbb{R}^{d_P \times d_P}$ is positive definite and symmetric.
\end{assumptionA}

The next result derives the asymptotic distribution of the two-step $K$-MD estimator for any $K\in \mathbb{N}$.

\begin{theorem}[Two-step $K$-MD]\label{thm:MD}
Fix $K\in \mathbb{N}$ arbitrarily and assume Assumptions \ref{ass:iid}-\ref{ass:Baseline} and \ref{ass:Weight}. Then,
\begin{equation*}
\sqrt{n}( \hat{\alpha}_{K-MD}-\alpha^{\ast }) ~\overset{d}{\to}~N( \mathbf{0}_{d_{\alpha }\times 1},\Sigma_{K-MD}( \hat{P} _{0},\{ W_{k}:k\leq K\} ) ) ,
\end{equation*}
where
\begin{equation*}
\Sigma _{K-MD}( \hat{P}_{0},\{ W_{k}:k\leq K\} ) ~\equiv~ 
\left\{
\begin{array}{c}
( \Psi _{\alpha }^{\prime }W_{K}\Psi _{\alpha })^{-1}\Psi _{\alpha }^{\prime }W_{K}\times \\
\left[ 
\begin{array}{c}
( \mathbf{I}_{d_{P}}-\Psi _{P}\Phi _{K,P})^{\prime } \\ 
-( \Psi _{P}\Phi _{K,0})^{\prime } \\ 
-\Psi _{g}^{\prime }( \mathbf{I}_{d_{P}}+\Psi _{P}\Phi _{K,g})^{\prime }
\end{array}
\right]^{\prime }\left( 
\begin{array}{ccc}
\Omega _{PP} & \Omega _{P0} & \Omega _{Pg} \\ 
\Omega _{P0}^{\prime } & \Omega _{00} & \Omega _{0g} \\ 
\Omega _{Pg}^{\prime } & \Omega _{0g}^{\prime } & \Omega _{gg}
\end{array}
\right) \left[ 
\begin{array}{c}
( \mathbf{I}_{d_{P}}-\Psi _{P}\Phi _{K,P})^{\prime } \\ 
-( \Psi _{P}\Phi _{K,0})^{\prime } \\ 
-\Psi _{g}^{\prime }( \mathbf{I}_{d_{P}}+\Psi _{P}\Phi _{K,g})^{\prime }
\end{array}
\right]  \\ 
\times W_{K}^{\prime }\Psi _{\alpha }( \Psi _{\alpha }^{\prime }W_{K}^{\prime }\Psi _{\alpha })^{-1}
\end{array}
\right\} ,
\end{equation*}
and $\{ \Phi _{k,0}:k\leq K\} $, $\{ \Phi _{k,P}:k\leq K\} $, and $\{ \Phi _{k,g}:k\leq K\} $ defined as follows. Set $\Phi _{1,P}\equiv \mathbf{0} _{d_{P}\times d_{P}}$, $\Phi _{1,0}\equiv \mathbf{I}_{d_{P}}$, $\Phi _{1,g}\equiv \mathbf{0}_{d_{P}\times d_{P}}$ and, for any $k\leq K-1$,
\begin{align}
\Phi _{k+1,P}& ~\equiv~ ( \mathbf{I}_{d_{P}}-\Psi _{\alpha }( \Psi _{\alpha }^{\prime }W_{k}\Psi _{\alpha })^{-1}\Psi _{\alpha }^{\prime }W_{k}) \Psi _{P}\Phi _{k,P}+\Psi _{\alpha }( \Psi _{\alpha }^{\prime }W_{k}\Psi _{\alpha })^{-1}\Psi _{\alpha }^{\prime }W_{k}, \notag \\
\Phi _{k+1,0}& ~\equiv~( \mathbf{I}_{d_{P}}-\Psi _{\alpha }( \Psi _{\alpha }^{\prime }W_{k}\Psi _{\alpha })^{-1}\Psi _{\alpha }^{\prime }W_{k}) \Psi _{P}\Phi _{k,0}, \notag \\
\Phi _{k+1,g}& ~\equiv~( \mathbf{I}_{d_{P}}-\Psi _{\alpha }( \Psi _{\alpha }^{\prime }W_{k}\Psi _{\alpha })^{-1}\Psi _{\alpha }^{\prime }W_{k}) ( \mathbf{I}_{d_{P}}+\Psi _{P}\Phi _{k,g}) . \label{eq:coefficients_MD}
\end{align}
\end{theorem}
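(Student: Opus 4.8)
The plan is to prove the result by induction on the iteration index $k$, following the same template as the proof of Theorem~\ref{thm:ML} but with the user-chosen weight matrices $W_k$ playing the role that $\Omega_{PP}^{-1}$ plays in the $K$-PML case. The central object is an asymptotically linear (influence-function) representation of each $\sqrt{n}(\hat{P}_k-P^*)$ as a fixed linear combination of the three baseline deviations $\sqrt{n}(\hat{P}-P^*)$, $\sqrt{n}(\hat{P}_0-P^*)$, and $\sqrt{n}(\hat{g}-g^*)$, with coefficient matrices that I will identify with the $\Phi$'s in Eq.~\eqref{eq:coefficients_MD}. Once this representation is in hand for $k=K-1$, the conclusion follows by plugging it into the linearized first-order condition of the $K$-th stage and invoking the joint central limit theorem of Assumption~\ref{ass:Baseline}.

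First I would establish consistency of $\hat{\alpha}_{k-MD}$ and $\hat{P}_k$ for each $k\le K$, itself by induction. The base case uses $\hat{P}_0\overset{p}{\to}P^*$ and $\hat{g}\overset{p}{\to}g^*$ from Assumption~\ref{ass:Baseline}. Given $\hat{P}_{k-1}\overset{p}{\to}P^*$, the sample minimum-distance criterion converges uniformly (by Assumptions~\ref{ass:Regularity}(ii)--(iii) and $\hat{W}_k\overset{p}{\to}W_k$) to the population criterion $(\Psi(\alpha,g^*,P^*)-P^*)'W_k(\Psi(\alpha,g^*,P^*)-P^*)$, which by Assumption~\ref{ass:Identification} and positive definiteness of $W_k$ (Assumption~\ref{ass:Weight}) is uniquely minimized at $\alpha^*$; a standard extremum-estimator argument then gives $\hat{\alpha}_{k-MD}\overset{p}{\to}\alpha^*$. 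Continuity of $\Psi$ and the fixed-point identity $\Psi(\alpha^*,g^*,P^*)=P^*$ in Eq.~\eqref{eq:FP} then yield $\hat{P}_k=\Psi(\hat{\alpha}_{k-MD},\hat{g},\hat{P}_{k-1})\overset{p}{\to}P^*$, completing the induction.

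The asymptotic-normality engine comes next. For each $k$ the first-order condition defining $\hat{\alpha}_{k-MD}$ reads $[\partial\Psi(\hat{\alpha}_{k-MD},\hat{g},\hat{P}_{k-1})/\partial\alpha]'\,\hat{W}_k\,(\hat{P}-\Psi(\hat{\alpha}_{k-MD},\hat{g},\hat{P}_{k-1}))=\mathbf{0}_{d_\alpha\times1}$. Using consistency, Assumption~\ref{ass:Regularity}(iv), and $\hat{W}_k\overset{p}{\to}W_k$ to replace the Jacobian and weight by their limits, and mean-value-expanding $\Psi(\hat{\alpha}_{k-MD},\hat{g},\hat{P}_{k-1})$ around $(\alpha^*,g^*,P^*)$ (again exploiting $\Psi(\alpha^*,g^*,P^*)=P^*$), I would solve for
\[
\sqrt{n}(\hat{\alpha}_{k-MD}-\alpha^*)=(\Psi_\alpha'W_k\Psi_\alpha)^{-1}\Psi_\alpha'W_k\big[\sqrt{n}(\hat{P}-P^*)-\Psi_g\sqrt{n}(\hat{g}-g^*)-\Psi_P\sqrt{n}(\hat{P}_{k-1}-P^*)\big]+o_p(1).
\]
Substituting this into the linearization $\sqrt{n}(\hat{P}_k-P^*)=\Psi_\alpha\sqrt{n}(\hat{\alpha}_{k-MD}-\alpha^*)+\Psi_g\sqrt{n}(\hat{g}-g^*)+\Psi_P\sqrt{n}(\hat{P}_{k-1}-P^*)+o_p(1)$ produces a recursion in which the oblique projection $B_k\equiv\Psi_\alpha(\Psi_\alpha'W_k\Psi_\alpha)^{-1}\Psi_\alpha'W_k$ and its complement $M_k\equiv\mathbf{I}_{d_P}-B_k$ appear. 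A short induction then verifies that the coefficients of $\sqrt{n}(\hat{P}_{k-1}-P^*)$ on the three baseline deviations are exactly $\Phi_{k,P}$, $\Phi_{k,0}$, and $\Phi_{k,g}\Psi_g$, matching Eq.~\eqref{eq:coefficients_MD}; the factorization of the $\hat{g}$-coefficient as $\Phi_{k,g}\Psi_g$ (rather than $\Phi_{k,g}$) is precisely why the third block of the sandwich carries $(\mathbf{I}_{d_P}+\Psi_P\Phi_{K,g})\Psi_g$. Evaluating at $k=K$, collecting the three coefficient blocks into the stacked vector displayed in the statement, and applying Assumption~\ref{ass:Baseline} together with Slutsky's lemma delivers the claimed normal limit and sandwich variance $\Sigma_{K-MD}$.

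The main obstacle is the inductive bookkeeping that ties the propagated coefficients to the $\Phi$-recursion, together with the care needed in the mean-value step: the derivatives are evaluated at intermediate points on the segments joining $(\hat{\alpha}_{k-MD},\hat{g},\hat{P}_{k-1})$ to $(\alpha^*,g^*,P^*)$, and I must use the consistency established above plus the continuity of second derivatives (Assumption~\ref{ass:Regularity}(iv)) to replace them by $\Psi_\alpha$, $\Psi_g$, $\Psi_P$, absorbing the difference into $o_p(1)$. Because $K$ is fixed, only finitely many such remainders accumulate, so they remain asymptotically negligible; the full-rank conditions in Assumption~\ref{ass:Regularity}(v) guarantee that the inverses $(\Psi_\alpha'W_k\Psi_\alpha)^{-1}$ exist at every stage.
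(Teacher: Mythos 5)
Your proposal is correct and follows essentially the same route as the paper's proof: the paper merely modularizes your argument, first proving a general result for iterated extremum estimators (Theorem \ref{thm:2stepB}: consistency by induction, linearized first-order condition, recursive influence-function bookkeeping) under the high-level conditions of Assumption \ref{ass:EE_HL}, and then verifying those conditions for the MD criterion, whereas you inline the same steps directly for $\hat{Q}_{k-MD}$. Your expansion of the first-order condition, the identification of the coefficient matrices and the oblique projection $\Psi_{\alpha}(\Psi_{\alpha}'W_{k}\Psi_{\alpha})^{-1}\Psi_{\alpha}'W_{k}$, the matching of the propagated coefficients to the $\Phi$-recursion with the $\Phi_{k,g}\Psi_{g}$ factorization, and the concluding application of Assumption \ref{ass:Baseline} with Slutsky all coincide with the paper's computations.
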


We make several comments about this result. First, as in Theorem \ref{thm:ML}, Theorem \ref{thm:MD} considers $K\in \mathbb{N}$ and \textit{fixed} as $n\to \infty$, and it thus is free from the criticism raised by \cite{pesendorfer/schmidt-dengler:2010}. Second, we note that $\Sigma_{K-MD}(\hat{P}_{0},\hat{g})$ can be consistently estimated using consistent estimators of $(\alpha^*,g^*)$ (e.g., $(\hat{\alpha} _{1-MD},\hat{g})$) and the asymptotic variance in Assumption \ref{ass:Baseline}. Third, as expected, note that the sequence of weight matrices does not affect the asymptotic distribution of the $K$-MD estimator if $d_\alpha = d_P$. Finally, note that the asymptotic distribution of the $K$-PML estimator coincides with that of the $K$-MD estimator when $W_{k}\equiv \Omega _{PP}^{-1}$ for all $k \leq K$. We record this in the following corollary.

\begin{corollary}[$K$-PML is a special case of $K$-MD]\label{cor:MLandMD}
Fix $K\in \mathbb{N}$ arbitrarily and assume Assumptions \ref{ass:iid}-\ref{ass:Baseline}. The asymptotic distribution of the $K$-PML estimator is a special case of that of the $K$-MD estimator with $ W_{k}\equiv \Omega _{PP}^{-1}$ for all $k\leq K$.
\end{corollary}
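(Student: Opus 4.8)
The plan is to prove the corollary by direct substitution, reducing it to an algebraic identity between the two asymptotic variance formulas already derived. Concretely, I would fix the weight sequence $W_k \equiv \Omega_{PP}^{-1}$ for every $k\leq K$ in Theorem \ref{thm:MD} and show that the resulting expression for $\Sigma_{K-MD}$ collapses term by term to the expression for $\Sigma_{K-PML}$ in Theorem \ref{thm:ML}.

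First I would confirm that this weight sequence is admissible, i.e.\ that it satisfies Assumption \ref{ass:Weight}, so that Theorem \ref{thm:MD} actually applies. Recall $\Omega_{PP} = diag\{\Sigma_{jx}:(j,x)\in J\times X\}$ with $\Sigma_{jx} = (diag\{P_{jx}^{\ast}\} - P_{jx}^{\ast}P_{jx}^{\ast\prime})/m^{\ast}(x)$. Each block is the scaled covariance matrix of the multinomial choice vector restricted to $\tilde A$; strict positivity of all choice probabilities, which follows from Assumption \ref{ass:Regularity}(iii), together with $m^{\ast}(x)>0$ for all $x\in X$, makes every $\Sigma_{jx}$ symmetric and positive definite. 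Hence $\Omega_{PP}$ is symmetric positive definite, and so is $\Omega_{PP}^{-1}$, which verifies Assumption \ref{ass:Weight} for the proposed weights.

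Next I would match the two families of coefficient matrices by induction on $k$. The initial conditions $\Phi_{1,P}=\mathbf{0}_{d_P\times d_P}$, $\Phi_{1,0}=\mathbf{I}_{d_P}$, and $\Phi_{1,g}=\mathbf{0}_{d_P\times d_P}$ coincide across Eq.\ \eqref{eq:coefficients_ML} and Eq.\ \eqref{eq:coefficients_MD}. For the inductive step, substituting $W_k=\Omega_{PP}^{-1}$ into Eq.\ \eqref{eq:coefficients_MD} turns the projection factor $\mathbf{I}_{d_P} - \Psi_\alpha(\Psi_\alpha' W_k \Psi_\alpha)^{-1}\Psi_\alpha' W_k$ into exactly $\mathbf{I}_{d_P} - \Psi_\alpha(\Psi_\alpha' \Omega_{PP}^{-1} \Psi_\alpha)^{-1}\Psi_\alpha' \Omega_{PP}^{-1}$, so the recursions for $\Phi_{k+1,P}$ and $\Phi_{k+1,0}$ become literally those of Eq.\ \eqref{eq:coefficients_ML}. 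For $\Phi_{k+1,g}$ the two displays are written in different but equivalent forms: expanding the factored MD expression $(\mathbf{I}_{d_P}-\Psi_\alpha(\Psi_\alpha'\Omega_{PP}^{-1}\Psi_\alpha)^{-1}\Psi_\alpha'\Omega_{PP}^{-1})(\mathbf{I}_{d_P}+\Psi_P\Phi_{k,g})$ reproduces the additive PML form. With the inductive hypothesis this shows $\Phi_{K,P}$, $\Phi_{K,0}$, and $\Phi_{K,g}$ are identical in both theorems.

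Finally I would assemble the variance. Using the symmetry $\Omega_{PP}^{-1}=(\Omega_{PP}^{-1})'$, the outer factors $(\Psi_\alpha'W_K\Psi_\alpha)^{-1}\Psi_\alpha'W_K$ and $W_K'\Psi_\alpha(\Psi_\alpha'W_K'\Psi_\alpha)^{-1}$ in $\Sigma_{K-MD}$ coincide with the corresponding factors in $\Sigma_{K-PML}$, while the inner sandwich with the joint asymptotic variance matrix is syntactically identical once the $\Phi$-matrices agree. Therefore $\Sigma_{K-MD}(\hat P_0,\{\Omega_{PP}^{-1}:k\leq K\}) = \Sigma_{K-PML}(\hat P_0)$, which is the claim. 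The computation is essentially mechanical; the only genuinely non-routine points are the admissibility check (positive definiteness of $\Omega_{PP}$) and recognizing that the $\Phi_{k+1,g}$ recursions, though written in distinct forms, define the same sequence. I expect the latter to be the likeliest source of a transcription slip, so I would carry out that expansion explicitly rather than assert it by inspection.
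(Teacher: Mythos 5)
Your proposal is correct and follows the same route as the paper, which states the corollary as an immediate consequence of comparing Theorems \ref{thm:ML} and \ref{thm:MD}: substituting $W_k\equiv\Omega_{PP}^{-1}$ into Eq.\ \eqref{eq:coefficients_MD} and the $\Sigma_{K-MD}$ formula reproduces Eq.\ \eqref{eq:coefficients_ML} and $\Sigma_{K-PML}$ exactly. Your added care — verifying that $\Omega_{PP}^{-1}$ is symmetric positive definite (so Assumption \ref{ass:Weight} holds) and expanding the factored form of the $\Phi_{k+1,g}$ recursion to match the additive PML form — simply makes explicit the details the paper leaves to inspection, and both steps are carried out correctly.
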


Theorem \ref{thm:MD} reveals that, unless $d_\alpha = d_P$, the asymptotic variance of the $K$-MD estimator is a complicated function of the number of iteration steps $K$ and sequence of limiting weighting matrices $\{ W_{k}:k\leq K\} $. For the remainder of the section, we focus on the typical case in which $d_\alpha <d_P$. A natural question to ask is the following: Is there an optimal way of choosing these parameters? In particular, what is the optimal choice of $K$ and $ \{ W_{k}:k\leq K\} $ that minimizes the asymptotic variance of the $K$-MD estimator? We devote the rest of this section to this question.

As a first approach to this problem, we consider the non-iterative $1$-MD estimator. As shown in \cite{pesendorfer/schmidt-dengler:2008}, the asymptotic distribution of this estimator is analogous to that of a GMM estimator so we can leverage well-known optimality results. The next result provides a concrete answer regarding the optimal choices of $\hat{P}_{0}$ and $W_{1}$.

\begin{theorem}[Optimality with $K=1$] \label{thm:MD_K1}
Assume Assumptions \ref{ass:iid}-\ref{ass:Weight}. Let $\hat{\alpha}_{1-MD}^{\ast }$ denote the $1$-MD estimator with $\hat{P}_{0} = \tilde{P}$ that is asymptotically equivalent to $\hat{P}$ in the sense that
\begin{equation}
	\sqrt{n}(\tilde{P}-P^{*})~=~\sqrt{n}(\hat{P}-P^{*}) ~+~ o_{p}(1),
	\label{eq:asyequiv_P0}
\end{equation}
and $W_{1} = W_{1}^{\ast }$ with
\begin{align}
	W_{1}^{\ast }~\equiv~ 
[ ( \mathbf{I}_{d_{P}}-\Psi _{P}) \Omega _{PP}( \mathbf{I} _{d_{P}}-\Psi _{P}^{\prime }) +\Psi _{g}\Omega _{gg}\Psi _{g}^{\prime } -\Psi _{g}\Omega _{Pg}^{\prime }( \mathbf{I}_{d_{P}}-\Psi _{P}^{\prime }) -( \mathbf{I}_{d_{P}}-\Psi _{P}) \Omega _{Pg}\Psi _{g}^{\prime }]^{-1}. \label{eq:W1_optimal}
\end{align}
Then, 
\begin{equation*}
\sqrt{n}( \hat{\alpha}_{1-MD}^{\ast }-\alpha^{\ast }) ~\overset{d}{\to}~ N( \mathbf{0}_{d_{\alpha }\times 1},\Sigma^{\ast }) ,
\end{equation*}
with
\begin{align}
\Sigma^{\ast }~\equiv~ ( \Psi _{\alpha }^{\prime }[ 
( \mathbf{I}_{d_{P}}-\Psi _{P}) \Omega _{PP}( \mathbf{I} _{d_{P}}-\Psi _{P}^{\prime }) + \Psi _{g}\Omega _{gg}\Psi _{g}^{\prime } 
	-\Psi _{g}\Omega _{Pg}^{\prime }( \mathbf{I}_{d_{P}}-\Psi _{P}^{\prime }) -( \mathbf{I}_{d_{P}}-\Psi _{P}) \Omega _{Pg}\Psi _{g}^{\prime }
]^{-1}\Psi _{\alpha })^{-1}.
 \label{eq:Optimal_AVar_MD}
\end{align}
Furthermore, $\Sigma _{1-MD}( \hat{P}_{0},W_{1}) -\Sigma^{\ast }$ is positive semidefinite for all $( \hat{P}_{0},W_{1})$, i.e., $\hat{\alpha}_{1-MD}^{\ast }$ is optimal among all $1$-MD estimators that satisfy our assumptions.
\end{theorem}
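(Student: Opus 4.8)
The plan is to specialize Theorem~\ref{thm:MD} to $K=1$ and recognize the resulting estimator as a standard minimum-distance (GMM-type) estimator, after which the optimality claim becomes a two-layer argument: optimal weighting for a fixed preliminary estimator, followed by the optimal choice of the preliminary estimator itself. Setting $K=1$ gives $\Phi_{1,P}=\mathbf{0}_{d_P\times d_P}$, $\Phi_{1,0}=\mathbf{I}_{d_P}$, and $\Phi_{1,g}=\mathbf{0}_{d_P\times d_P}$, so the stacked coefficient matrix in Theorem~\ref{thm:MD} collapses to $(\mathbf{I}_{d_P},-\Psi_P',-\Psi_g')'$. Consequently $\hat\alpha_{1-MD}$ behaves asymptotically like a GMM estimator with Jacobian $\Psi_\alpha$ and a moment vector of asymptotic variance $V(\hat P_0)\equiv\mathrm{Var}[(\hat P-P^*)-\Psi_P(\hat P_0-P^*)-\Psi_g(\hat g-g^*)]$; explicitly, $\Sigma_{1-MD}(\hat P_0,W_1)=(\Psi_\alpha'W_1\Psi_\alpha)^{-1}\Psi_\alpha'W_1 V(\hat P_0)W_1\Psi_\alpha(\Psi_\alpha'W_1\Psi_\alpha)^{-1}$.

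For the distributional assertion I would evaluate $V(\hat P_0)$ at $\hat P_0=\tilde P$. By the asymptotic equivalence in \eqref{eq:asyequiv_P0}, $\tilde P$ and $\hat P$ share the same influence function, so $\Omega_{00}=\Omega_{P0}=\Omega_{PP}$ and $\Omega_{0g}=\Omega_{Pg}$, and the moment vector reduces to $m^*\equiv(\mathbf{I}_{d_P}-\Psi_P)(\hat P-P^*)-\Psi_g(\hat g-g^*)$. A direct expansion of $\mathrm{Var}(m^*)$ matches term-by-term the bracketed matrix in \eqref{eq:W1_optimal}, i.e.\ $V(\tilde P)=(W_1^*)^{-1}$; its positive definiteness, and hence invertibility, follows from the full-rank condition on $(\mathbf{I}_{d_P}-\Psi_P,-\Psi_g)$ in Assumption~\ref{ass:Regularity}(v) together with the nonsingularity in Assumption~\ref{ass:Baseline2}(i). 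Substituting $W_1=W_1^*=(V(\tilde P))^{-1}$ into the sandwich formula collapses it to $(\Psi_\alpha'W_1^*\Psi_\alpha)^{-1}=\Sigma^*$, which is \eqref{eq:Optimal_AVar_MD}.

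For the optimality assertion I would argue in two steps. First, for a fixed $\hat P_0$, the standard GMM efficiency bound gives $\Sigma_{1-MD}(\hat P_0,W_1)\succeq(\Psi_\alpha'V(\hat P_0)^{-1}\Psi_\alpha)^{-1}$ for every $W_1$, with equality at $W_1=V(\hat P_0)^{-1}$. Second, I would show $V(\hat P_0)\succeq V(\tilde P)=(W_1^*)^{-1}$ for every $\hat P_0$, which by monotonicity of the map $A\mapsto(\Psi_\alpha'A^{-1}\Psi_\alpha)^{-1}$ yields $(\Psi_\alpha'V(\hat P_0)^{-1}\Psi_\alpha)^{-1}\succeq\Sigma^*$ and completes the argument. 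To obtain this ordering, write the moment as $m=m^*+\Psi_P(\hat P-\hat P_0)$, so that $V(\hat P_0)=V(\tilde P)+\Psi_P\,\mathrm{Var}(\hat P-\hat P_0)\,\Psi_P'+\Psi_P\,\mathrm{Cov}(\hat P-\hat P_0,m^*)+\mathrm{Cov}(m^*,\hat P-\hat P_0)\Psi_P'$; the middle term is positive semidefinite, so it suffices that the cross-covariance vanish.

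The main obstacle, and the step that uses Assumption~\ref{ass:Baseline2}(ii) essentially, is establishing $\mathrm{Cov}(\hat P-\hat P_0,m^*)=\mathbf{0}_{d_P\times d_P}$. I would extract this from the efficiency statement as a first-order condition: since $M=\mathbf{0}_{d_P\times d_P}$ minimizes the asymptotic variance of $u'[(\mathbf{I}_{d_P}-M)\hat P+M\hat P_0]+v'\hat g$ over all $M$, for every $u$ and $v$, differentiating at $M=\mathbf{0}_{d_P\times d_P}$ forces $\mathrm{Cov}(\hat P,\hat P_0-\hat P)=\mathbf{0}_{d_P\times d_P}$ and $\mathrm{Cov}(\hat g,\hat P_0-\hat P)=\mathbf{0}_{d_g\times d_P}$, i.e.\ $\Omega_{P0}=\Omega_{PP}$ and $\Omega_{0g}=\Omega_{Pg}$. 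Feeding these two identities into $\mathrm{Cov}(\hat P-\hat P_0,m^*)=(\Omega_{PP}-\Omega_{P0}')(\mathbf{I}_{d_P}-\Psi_P')-(\Omega_{Pg}-\Omega_{0g})\Psi_g'$ makes both terms vanish, leaving $V(\hat P_0)=V(\tilde P)+\Psi_P\,\mathrm{Var}(\hat P-\hat P_0)\,\Psi_P'\succeq V(\tilde P)$. The delicate points are that the efficiency premise must be read as applying to whichever $\hat P_0$ is plugged in, which is legitimate because it encodes efficiency of $\hat P$ itself as discussed after Assumption~\ref{ass:Baseline2}, and that the first-order condition is genuinely necessary here because $M=\mathbf{0}_{d_P\times d_P}$ is an interior minimizer of a smooth quadratic form.
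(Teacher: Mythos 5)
Your proof is correct; it uses the same two ingredients as the paper's argument but arranges them in the opposite order, so a comparison is worthwhile. The paper decomposes $\Sigma_{1-MD}(\hat{P}_{0},W_{1})-\Sigma^{\ast}$ as $[\Sigma_{1-MD}(\hat{P}_{0},W_{1})-\Sigma_{1-MD}(\tilde{P},W_{1})]+[\Sigma_{1-MD}(\tilde{P},W_{1})-\Sigma_{1-MD}(\tilde{P},W_{1}^{\ast})]$: it first compares preliminary estimators at a \emph{common} weight matrix, via a standalone result (Lemma \ref{lem:PSD}, an ordering of the joint asymptotic covariance matrix of $(\hat{P},\hat{P}_{0},\hat{g})$ against the one obtained by replacing $\hat{P}_{0}$ with $\hat{P}$), and only then optimizes over $W_{1}$ at $\tilde{P}$ by standard GMM arguments. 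You reverse the two layers: for each fixed $\hat{P}_{0}$ you invoke the GMM efficiency bound $(\Psi_{\alpha}'V(\hat{P}_{0})^{-1}\Psi_{\alpha})^{-1}$, and then order these bounds across preliminary estimators through $V(\hat{P}_{0})\succeq V(\tilde{P})$ together with monotonicity of $A\mapsto(\Psi_{\alpha}'A^{-1}\Psi_{\alpha})^{-1}$. The mathematical core is identical in both routes: Assumption \ref{ass:Baseline2}(ii) is converted, by a first-order condition at the minimizer (your matrix FOC at $M=\mathbf{0}_{d_{P}\times d_{P}}$, versus the paper's scalar Hausman-style perturbation $\hat{P}+rA(\hat{P}_{0}-\hat{P})$ at $r=0$ inside the proof of Lemma \ref{lem:PSD}), into the covariance identities $\Omega_{P0}=\Omega_{PP}$ and $\Omega_{0g}=\Omega_{Pg}$; these are exactly what annihilate your cross-covariance $\mathrm{Cov}(\hat{P}-\hat{P}_{0},m^{\ast})$ and what reduce the paper's difference matrix to the single PSD block $\Omega_{00}-\Omega_{PP}$. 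Your matrix-valued FOC is somewhat more direct (it yields $\Omega_{P0}=\Omega_{PP}$ in one stroke, where the paper first derives $2\Omega_{PP}=\Omega_{P0}+\Omega_{P0}'$ and needs a second perturbation to establish symmetry), and your arrangement makes transparent why $\hat{P}$-equivalent preliminary estimators are optimal: each $\hat{P}_{0}$ carries its own efficiency bound, and the bounds themselves are ordered. What the paper's arrangement buys is reusability: Lemma \ref{lem:PSD} is invoked verbatim again for general $K$ in the proof of Theorem \ref{thm:MD_main} (Eq.\ \eqref{eq:MD_main_1}), whereas your per-$\hat{P}_{0}$ bound argument would need to be redone there. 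One presentational caveat, not a gap: the nonsingularity of $V(\hat{P}_{0})$, which the GMM bound requires, follows only from your decomposition $V(\hat{P}_{0})=V(\tilde{P})+\Psi_{P}\,\mathrm{Var}(\hat{P}-\hat{P}_{0})\,\Psi_{P}'\succeq V(\tilde{P})\succ 0$, so that decomposition (and hence the orthogonality step) should logically precede, rather than follow, your invocation of the bound.
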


Theorem \ref{thm:MD_K1} indicates that using $\hat{P}_{0}$ asymptotically equivalent to $\hat{P}$ and $W_{1}=W_{1}^{\ast }$ produces an optimal $1$-MD estimator. On the one hand, the choice of $\hat{P}_{0}$ is natural since $\hat{P}$ is an optimal preliminary estimator of the CCPs. Given this choice, the asymptotic distribution of the $1$-MD estimator is analogous to that of a standard GMM problem, and the corresponding optimal weight is $W_{1}=W_{1}^{\ast }$. Several comments are in order. First, as one would expect, $W_{1}^{\ast }$ coincides with the optimal weight matrix in the non-iterative analysis in \citet[Proposition 5]{pesendorfer/schmidt-dengler:2008}. Second, while Eq.\ \eqref{eq:asyequiv_P0} is satisfied by $\hat{P}_{0} = \hat{P}$, this condition can also be achieved by the other estimators of the CCPs mentioned in Section \ref{sec:assumptions}, provided that they are sufficiently flexible. Third, we note that $W_{1}^{\ast } \not=\Omega _{PP}^{-1}$ in general, i.e., the optimal weight matrix does not coincide the one that produces the $1$-PML estimator. In fact, the $1$-PML estimator need not be an optimal $1$-MD estimator, i.e., $\Sigma _{1-MD}( \hat{P}_{0},\Omega _{PP}^{-1}) -\Sigma^{\ast }$ can be positive definite. Finally, we note that $W_{1}^{\ast }$ can be consistently estimated by replacing each component of Eq.\ \eqref{eq:W1_optimal} by sample analogs. By standard asymptotic arguments, the optimality result in Theorem \ref{thm:MD_K1} extends to the feasible optimal $1$-MD estimator that replaces $W_{1}^{\ast }$ by any consistent estimator. In practice, however, estimating $W_{1}^{\ast }$ can introduce substantial finite sample biases, which can deteriorate the performance of the estimator. For a discussion of these issues, see \cite{altonji/segal:1996} and \cite{horowitz:1998}, among others.

We now move on to the general case with $K\geq 1$. According to Theorem \ref{thm:MD}, the asymptotic variance of the $K$-MD estimator depends on the number of iteration steps $K \in \mathbb{N}$, the asymptotic distribution of $\hat{P}_{0}$, and the entire sequence of limiting weight matrices $\{ W_{k}:k\leq K\}$. In this sense, determining an optimal $K$-MD estimator appears to be a complicated task. The next result provides a concrete answer to this problem.

\begin{theorem}[Invariance and optimality]\label{thm:MD_main} 
Fix $K\in \mathbb{N}$ arbitrarily and assume Assumptions \ref{ass:iid}-\ref{ass:Weight}. Then,
\begin{enumerate}
\item \underline{Invariance.} Let $\hat{\alpha}_{K-MD}^{\ast }$ denote the $K$-MD estimator with $\hat{P}_{0} = \tilde{P}$ that is asymptotically equivalent to $\hat{P}$ in the sense of Eq.\ \eqref{eq:asyequiv_P0}, weight matrices $\{ W_{k}:k\leq K-1\} $ for steps $1,\ldots ,K-1$ (if $K>1$), and the corresponding optimal weight matrix in step $K$, which we assume to be well-defined. Then,
\begin{equation*}
\sqrt{n}( \hat{\alpha}_{K-MD}^{\ast }-\alpha^{\ast }) ~\overset{d}{\to}~N( \mathbf{0}_{d_{\alpha }\times 1},\Sigma^{\ast }) ,
\end{equation*}
where $\Sigma^{\ast }$ is as in Eq.\ \eqref{eq:Optimal_AVar_MD}.
\item \underline{Optimality.} Let $\hat{\alpha}_{K-MD}$ denote the $K$-MD estimator with $\hat{P}_{0}$ and weight matrices $\{ W_{k}:k\leq K\} $. Then,
\begin{equation*}
\sqrt{n}( \hat{\alpha}_{K-MD}-\alpha^{\ast }) ~\overset{d}{\to}~N( \mathbf{0}_{d_{\alpha }\times 1},\Sigma_{K-MD}( \hat{P} _{0},\{ W_{k}:k\leq K\} ) ).
\end{equation*}
Furthermore, $\Sigma _{K-MD}(\hat{P}_{0},\{ W_{k}:k\leq K\})-\Sigma^{\ast }$ is positive semidefinite, i.e., $\hat{\alpha}_{K-MD}^{\ast }$ is optimal among all $K$-MD estimators that satisfy our assumptions.
\end{enumerate}
\end{theorem}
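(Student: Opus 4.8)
The plan is to treat the final step of the $K$-MD algorithm, holding $\hat P_0$ and $\{W_k:k\le K-1\}$ fixed, as an over-identified GMM problem with sample moment $\hat m_K(\alpha)\equiv \hat P-\Psi(\alpha,\hat g,\hat P_{K-1})$, population Jacobian $-\Psi_\alpha$ (since $\hat P_{K-1}\overset{p}{\to}P^*$ and $\partial\Psi/\partial\alpha\to\Psi_\alpha$), and limiting moment variance $V_K\equiv B_K'\,\Omega\,B_K$, where $B_K$ and $\Omega$ are the matrices appearing in Theorem \ref{thm:MD}. The sandwich form of $\Sigma_{K-MD}$ in Theorem \ref{thm:MD} is then the usual GMM variance, so the classical efficiency argument gives, for every admissible weight,
\[
\Sigma_{K-MD}(\hat P_0,\{W_k:k\le K\})~\succeq~(\Psi_\alpha' V_K^{-1}\Psi_\alpha)^{-1},
\]
with equality at $W_K=V_K^{-1}$. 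This reduces both claims to the study of $V_K$. Reading off Theorem \ref{thm:MD}, I would record $\sqrt n\,\hat m_K(\alpha^*)=C_{K,P}\sqrt n(\hat P-P^*)+C_{K,0}\sqrt n(\hat P_0-P^*)+C_{K,g}\sqrt n(\hat g-g^*)+o_p(1)$, where $C_{K,P}=\mathbf I_{d_P}-\Psi_P\Phi_{K,P}$, $C_{K,0}=-\Psi_P\Phi_{K,0}$, and $C_{K,g}=-(\mathbf I_{d_P}+\Psi_P\Phi_{K,g})\Psi_g$.

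The key step is an algebraic lemma proved by induction on $k$ from the recursions in Eq.\ \eqref{eq:coefficients_MD}. Writing $M_k\equiv \mathbf I_{d_P}-\Psi_\alpha(\Psi_\alpha'W_k\Psi_\alpha)^{-1}\Psi_\alpha'W_k$ (so that $M_k\Psi_\alpha=\mathbf 0$) and defining $R_k$ by $R_1\equiv\mathbf I_{d_P}$ and $R_{k+1}\equiv \mathbf I_{d_P}+\Psi_P M_k R_k$, I claim that for every $k$,
\[
C_{k,P}+C_{k,0}=R_k(\mathbf I_{d_P}-\Psi_P),\qquad C_{k,g}=-R_k\Psi_g,\qquad R_k\Psi_\alpha=\Psi_\alpha.
\]
The first two identities follow by substituting the $\Phi$-recursions and collecting terms, and the third from $M_k\Psi_\alpha=\mathbf 0$. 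Splitting $\hat P_0-P^*=(\hat P-P^*)+(\hat P_0-\hat P)$ and using these identities yields the decomposition
\[
\sqrt n\,\hat m_K(\alpha^*)=R_K\,\sqrt n\,m_1^\dagger+C_{K,0}\,\sqrt n(\hat P_0-\hat P)+o_p(1),
\]
where $m_1^\dagger\equiv(\mathbf I_{d_P}-\Psi_P)(\hat P-P^*)-\Psi_g(\hat g-g^*)$ is exactly the optimal $1$-MD moment of Theorem \ref{thm:MD_K1}, with limiting variance $V_1$ equal to the inverse of the matrix in Eq.\ \eqref{eq:W1_optimal}; thus $\Sigma^*=(\Psi_\alpha' V_1^{-1}\Psi_\alpha)^{-1}$ as in Eq.\ \eqref{eq:Optimal_AVar_MD}.

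Next I would extract the orthogonality that renders the preliminary estimator irrelevant. Assumption \ref{ass:Baseline2}(ii) says $(\hat P,\hat g)$ is at least as efficient as $((\mathbf I_{d_P}-M)\hat P+M\hat P_0,\hat g)$ for every $M$; writing the asymptotic variance difference and letting $M\to\varepsilon M$ with $\varepsilon\to0^\pm$ forces the first-order-in-$M$ terms to vanish, which delivers $\operatorname{Cov}(\sqrt n(\hat P_0-\hat P),\sqrt n(\hat P-P^*))\to\mathbf 0$ and $\operatorname{Cov}(\sqrt n(\hat P_0-\hat P),\sqrt n(\hat g-g^*))\to\mathbf 0$. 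Since $m_1^\dagger$ is linear in $(\hat P-P^*,\hat g-g^*)$, its limiting covariance with $\hat P_0-\hat P$ is zero, the cross terms in the variance of the displayed decomposition drop out, and
\[
V_K~=~R_K V_1 R_K'+C_{K,0}\,\Omega_{dd}\,C_{K,0}'~\succeq~R_K V_1 R_K',
\]
with $\Omega_{dd}\equiv\lim\operatorname{Var}(\sqrt n(\hat P_0-\hat P))\succeq\mathbf 0$; here $V_1$ is nonsingular because $(\mathbf I_{d_P}-\Psi_P,-\Psi_g)$ has full rank by Assumption \ref{ass:Regularity}(v) and the variance of $(\hat P,\hat g)$ is nonsingular by Assumption \ref{ass:Baseline2}(i).

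Finally I would close both parts using $R_K\Psi_\alpha=\Psi_\alpha$. For invariance, an asymptotically equivalent $\hat P_0$ gives $\Omega_{dd}=\mathbf 0$, hence $V_K=R_K V_1 R_K'$ and $(\Psi_\alpha' V_K^{-1}\Psi_\alpha)^{-1}=(\Psi_\alpha' R_K'^{-1}V_1^{-1}R_K^{-1}\Psi_\alpha)^{-1}=(\Psi_\alpha' V_1^{-1}\Psi_\alpha)^{-1}=\Sigma^*$, using $R_K^{-1}\Psi_\alpha=\Psi_\alpha$; the optimal-weight estimator therefore attains $\Sigma^*$ for every $K$. For optimality, $V_K\succeq R_K V_1 R_K'$ gives $V_K^{-1}\preceq(R_K V_1 R_K')^{-1}$, hence $\Psi_\alpha'V_K^{-1}\Psi_\alpha\preceq\Psi_\alpha'V_1^{-1}\Psi_\alpha$ and $(\Psi_\alpha'V_K^{-1}\Psi_\alpha)^{-1}\succeq\Sigma^*$; combined with the GMM lower bound from the first paragraph this yields $\Sigma_{K-MD}(\hat P_0,\{W_k:k\le K\})\succeq\Sigma^*$. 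I expect the main obstacle to be the inductive algebraic lemma—showing that the three coupled $\Phi$-recursions collapse into the single recursion for $R_k$ and the two clean identities—together with the careful extraction of the asymptotic orthogonality from Assumption \ref{ass:Baseline2}(ii), which is the conceptual heart of the invariance phenomenon.
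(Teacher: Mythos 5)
Your proposal is correct and is essentially the paper's own argument: your inductive $R_k$-identities are exactly the paper's Eq.~\eqref{eq:Seq_P0_and_G} and Lemma~\ref{lem:Key_aux_MD} (your $R_K$ plays the role of $\Lambda_K$, and $R_K\Psi_\alpha=\Psi_\alpha$ substitutes for the two inverse identities there), your extraction of the asymptotic orthogonality of $\hat P_0-\hat P$ from Assumption~\ref{ass:Baseline2}(ii) via the $\varepsilon M$ perturbation is precisely the Hausman-style proof of Lemma~\ref{lem:PSD}, and your last-step GMM efficiency step is the same standard result the paper invokes. The only difference is bookkeeping: you fold everything into one decomposition of the limiting moment variance, $V_K=R_K V_1 R_K' + C_{K,0}\,\Omega_{dd}\,C_{K,0}'$, which handles a general $\hat P_0$ and the asymptotically equivalent $\tilde P$ simultaneously, whereas the paper telescopes $\Sigma_{K-MD}(\hat P_0,\{W_k:k\leq K\})-\Sigma^{\ast}$ into three PSD terms (its Eqs.~\eqref{eq:MD_main_1}, \eqref{eq:MD_main_2}, and \eqref{eq:Goal}).
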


Theorem \ref{thm:MD_main} is the main finding of this paper, and it establishes two central results regarding the optimality of the $K$-MD estimator. We begin by discussing the first one, referred to as ``invariance''. The result focuses on a preliminary estimator of the CCPs that is asymptotically equivalent to $\hat{P}$. As explained earlier, this is a natural choice to consider since $\hat{P}$ is an optimal preliminary estimator of the CCPs. Given this choice, the asymptotic variance of the $K$-MD estimator depends on the entire sequence of weight matrices $ \{W_{k}:k\leq K\}$. While the dependence on the first $K-1$ weight matrices is fairly complicated, the dependence on the last weight matrix (i.e., $W_{K}$) resembles that of the weight matrix in a standard GMM problem. Standard GMM results characterize the optimal choice for $W_{K}$, {\it given the sequence of first $K-1$ weight matrices}.\footnote{The result requires this optimal choice for $W_{K}$ to be well defined.  For typical choices of the first $K-1$ weight matrices, this additional requirement was not restrictive in our Monte Carlo simulations.} In principle, one might expect that the resulting asymptotic variance depends on the first $K-1$ weight matrices. The ``invariance'' result reveals that this is not the case. In other words, for $\hat{P}_{0}=\hat{P}$ and an optimal choice of $W_{K}$, the asymptotic distribution of the $K$-MD estimator is invariant to the first $K-1$ weight matrices, or even $K$. Furthermore, the resulting asymptotic distribution coincides with that of the optimal $1$-MD estimator obtained in Theorem \ref{thm:MD_K1}.

The ``invariance'' result is the key to the second result in Theorem \ref{thm:MD_main}, referred to as ``optimality''. This second result characterizes the optimal choice of $\hat{P}_{0}$ and $ \{W_{k}:k\leq K\}$ for $K$-MD estimators. The intuition of the result is as follows. First, since $\hat{P}$ is the optimal estimator of CCPs, it is intuitive that optimality requires this choice or possibly something asymptotically equivalent. Second, it is also intuitive that optimality requires an optimal choice of $W_K$, {\it given the sequence of first $K-1$ weight matrices}. At this point, our ``invariance'' result indicates that the asymptotic distribution does not depend on $K$ or the first $K-1$ weight matrices. From this, we can then conclude that the $K$-MD estimator with $ \hat{P}_{0}$ asymptotically equivalent to $\hat{P}$ and an optimal last weight matrix $W_{K}$ (given any first $K-1$ weight matrices) is efficient among all $K$-MD estimators. 

Theorem \ref{thm:MD_main} implies two important corollaries regarding the optimal $1$-MD estimator discussed in Theorem \ref{thm:MD_K1}. The first corollary is that the optimal $1$-MD estimator is efficient in the class of all $K$-MD estimators that satisfy the assumptions of Theorem \ref{thm:MD_main}. In other words, additional policy iterations do not provide efficiency gains relative to the optimal $1$-MD estimator. The intuition behind this result is that the multiple iteration steps of the $K$-MD estimator are merely reprocessing the sample information, i.e., no new information is added in each iteration step. Provided that the criterion function is optimally weighted, the $1$-MD estimator is capable of processing the sample information in an efficient manner.

The second corollary of Theorem \ref{thm:MD_main} is that the $K$-PML estimator is \textit{usually} not efficient, and can be feasibly improved upon. In particular, under the assumptions of Theorem \ref{thm:MD_main}, the optimal $1$-MD estimator (i.e.\ with $\hat{P}_{0}$ asymptotically equivalent to $\hat{P}$ and $ W_{1}=W_{1}^{\ast }$) is more or equally efficient than the $K$-PML estimator.\footnote{One special case in which the $K$-PML estimator is as efficient as the optimal $1$-MD estimator is when both: (a) the zero Jacobian property holds (i.e.\ $\Psi_P = {\bf 0}_{dP \times dP}$) and (b) Step 1 of the estimation procedure does not include a preliminary estimator of $g^*$ (i.e.\ $\theta^* = \alpha^*$). It is worthwhile to point out that (a) or (b) taken in isolation would not be sufficient to imply that the $K$-PML estimator is as efficient as the optimal $1$-MD estimator.}

The results up to this point show that the optimal $1$-MD estimator is efficient among all $K$-MD estimators. Given these findings, one might wonder whether the optimal $1$-MD estimator is efficient. Section \ref{sec:MLE} in the appendix compares the asymptotic distribution of the optimal $1$-MD estimator with that of the MLE of $\alpha^*$. Under appropriate conditions, we show that these two asymptotic distributions coincide, and so the optimal $1$-MD estimator is indeed efficient among all regular estimators. 

We illustrate the results of this section by revising the example of Section \ref{sec:example} with the parameter values considered in Section \ref{sec:ML}. The asymptotic variance of $\hat{\alpha}_{K-MD}=(\hat{\lambda}_{RN,K-MD},\hat{ \lambda}_{EC,K-MD})$ is given by
\begin{equation}
\Sigma _{K-MD}(\hat{P}_{0},\{W_{k}:k\leq K\}) = (\Psi _{\alpha }^{\prime }W_{K}\Psi _{\alpha })^{-1}\Psi _{\alpha }^{\prime }W_{K}( \mathbf{I}_{d_{P}}-\Psi _{P}\Phi _{K,P0})\Omega _{PP}(\mathbf{I} _{d_{P}}-\Psi _{P}\Phi _{K,P})^{\prime }W_{K}\Psi _{\alpha }(\Psi _{\alpha }^{\prime }W_{K}\Psi _{\alpha })^{-1},
 \label{eq:Sigma_MD_example}
\end{equation}
where $\{\Phi _{k,P0}:k\leq K\}$ is defined by $\Phi _{k,P0}\equiv \Phi _{k,P}+\Phi _{k,0}$, with $\{\Phi _{k,P}:k\leq K\}$ and $\{\Phi _{k,0}:k\leq K\}$ as in Eq.\ \eqref{eq:coefficients_MD}. For any true parameter vector and any $K\in \mathbb{N} $, we can numerically compute Eq.\ \eqref{eq:Sigma_MD_example}.

In Section \ref{sec:ML}, we considered three specific parameter values of $(\lambda _{RN}^{\ast },\lambda _{EC}^{\ast })$ that produced an asymptotic variance of the $K$-PML estimator of $\lambda_{RN}^{\ast }$ decreased, increased, and fluctuated with $K$. We now compute the optimal $K$-MD estimator of $\lambda_{RN}^*$ for the same parameter values. The results are presented in Figures \ref{fig:example1B}, \ref{fig:example2B}, and \ref{fig:example3B}, respectively.\footnote{According to the ``invariance'' result in Theorem \ref{thm:MD_main}, there are multiple asymptotically equivalent ways of implementing the optimal $K$-MD estimator. For concreteness, we set the weight matrix optimally in each iteration step.} These graphs illustrate the findings in Theorem \ref{thm:MD_main}. In accordance to the ``invariance'' result, the asymptotic variance of the optimal $K$-MD estimator does not vary with the number of iterations $K$. Also in accordance to the ``optimality'' result, the asymptotic variance of the optimal $K$-MD estimator is lower than that of any other $K$-MD estimator. In turn, since the asymptotic distribution of the $K$-PML estimator is a special case of the asymptotic distribution of the $K$-MD estimator, the asymptotic variance of the optimal $K$-MD estimator is lower than that of the $K$-PML estimator for all $K \in \mathbb{N}$. Combining both results, the optimal non-iterative $1$-MD estimator is both computationally convenient and efficient among the estimators under consideration.

\begin{figure}[H]
	\centering
		\includegraphics[scale=0.57]{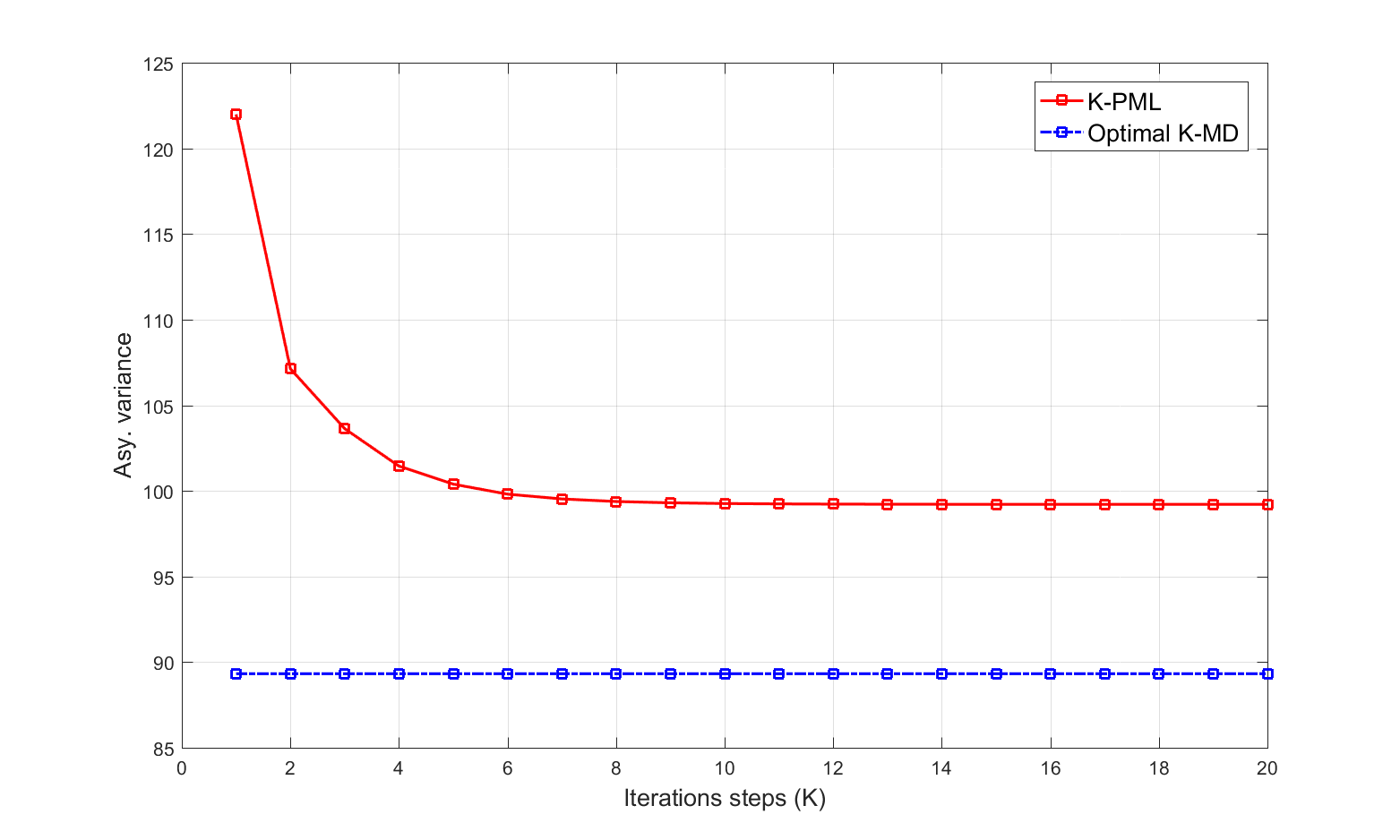}
	\caption{\small  Asymptotic variance of the $K$-PML estimator and optimal $K$-MD estimator of $\lambda _{RN}^{\ast }$ as a function of the number of iterations $K$ when $(\lambda _{RN}^{\ast },\lambda _{EC}^{\ast },\lambda _{RS}^{\ast },\lambda _{FC,1}^{\ast },\lambda _{FC,2}^{\ast },\beta^{\ast }) = (2.8,0.8,0.7,0.6,0.4,0.95)$. The optimal $K$-MD estimator is computed using the optimal weighting matrix in every iteration step.}
	\label{fig:example1B}
\end{figure}

\begin{figure}[H]
	\centering
		\includegraphics[scale=0.57]{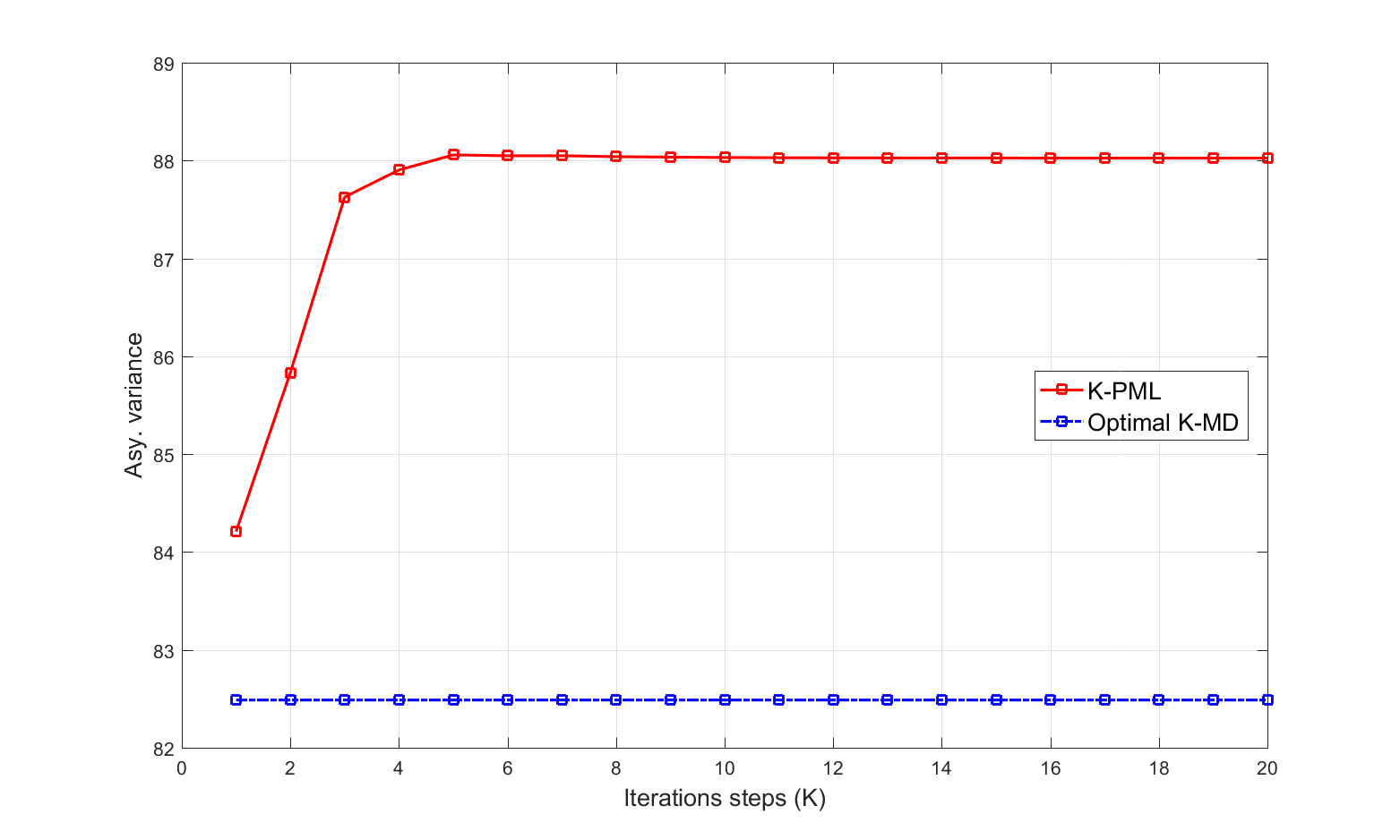}
	\caption{\small Asymptotic variance of the $K$-PML estimator and optimal $K$-MD estimator of $\lambda _{RN}^{\ast }$ as a function of the number of iterations $K$ when $(\lambda _{RN}^{\ast },\lambda _{EC}^{\ast },\lambda _{RS}^{\ast },\lambda _{FC,1}^{\ast },\lambda _{FC,2}^{\ast },\beta^{\ast }) = (2,1.8,0.2,0.01,0.03,0.95)$. The optimal $K$-MD estimator is computed using the optimal weighting matrix in every iteration step.}
	\label{fig:example2B}
\end{figure}

\begin{figure}[H]
	\centering
		\includegraphics[scale=0.57]{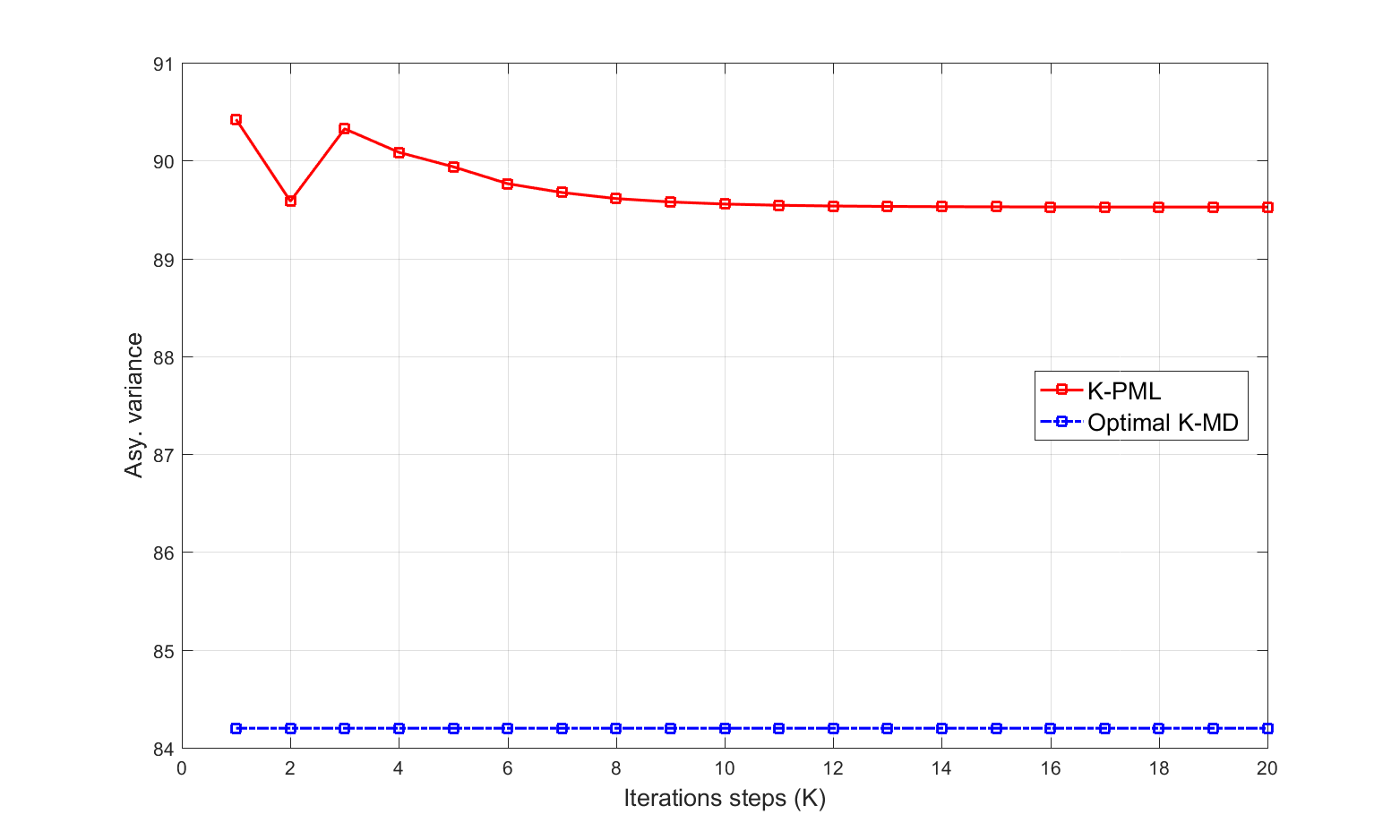}
	\caption{\small  Asymptotic variance of the $K$-PML estimator and optimal $K$-MD estimator of $\lambda _{RN}^{\ast }$ as a function of the number of iterations $K$ when $(\lambda _{RN}^{\ast },\lambda _{EC}^{\ast },\lambda _{RS}^{\ast },\lambda _{FC,1}^{\ast },\lambda _{FC,2}^{\ast },\beta^{\ast }) = (2.2,1.45,0.45,0.22,0.29,0.95)$. The optimal $K$-MD estimator is computed using the optimal weighting matrix in every iteration step.}
	\label{fig:example3B}
\end{figure}

\section{Monte Carlo simulations}\label{sec:MCsimulations}

This section investigates the performance of $K$-PML and $K$-MD estimators in a Monte Carlo simulation. Our main goal in this section is to confirm that our asymptotic analysis can provide a reasonable approximation provided that the size of the sample is sufficiently large. We simulate data using the two-player dynamic entry game described in Section \ref{sec:example}. Recall that this model is specified up to the parameters $(\lambda _{RS}^{\ast },\lambda _{RN}^{\ast },\lambda _{FC,1}^{\ast },\lambda _{FC,2}^{\ast },\lambda _{EC}^{\ast },\beta^{\ast })$. For simplicity, we assume that the researcher knows $(\lambda _{RS}^{\ast },\lambda _{FC,1}^{\ast },\lambda _{FC,2}^{\ast },\beta^{\ast })$ and wants to estimate $\alpha^{\ast }\equiv (\lambda _{RN}^{\ast },\lambda _{EC}^{\ast })$. We consider the three specific parameter values used to illustrate the theoretical results in Sections \ref{sec:ML} and \ref{sec:MD}. For each parameter value, we compute equilibrium CCPs numerically by solving the fixed point problem in Eq.\ \eqref{eq:FP} up to a small tolerance level.\footnote{We did not encounter evidence of multiple equilibria in our computations.} Derivatives of $\Psi$ were computed numerically, and we note that $\Psi_P$ differs from the zero matrix for all parameter values (i.e., the Zero Jacobian property does not hold).

Our simulation results are the average of $S=10,000$ independent datasets $\{ ( \{ a_{j,i}:j \in J\} ,x_{i},x_{i}^{\prime }) :i\leq n\} $ that are i.i.d.\ distributed according to the econometric model. We show results for sample sizes $n \in \{500,~1,000,~2,000\}$. Compared to typical empirical applications, these sample sizes are admittedly large relative to the state space  (e.g., the empirical application in \citet[Section 5]{aguirregabiria/mira:2007} has $n=189$ and $d_P=800$), but this choice reflects our asymptotic framework in which the state space is fixed as the number of observations diverges. With these sample sizes, we can verify our findings regarding the behavior of the asymptotic variance with the number of iterations. On the other hand, for smaller sample sizes, we often find that our asymptotic approximation does not provide a good representation of the distribution of these estimators, possibly due to the large finite sample bias in the preliminary estimator of the CCPs.

For the sake of brevity, we show simulation results for the estimation of $\lambda _{RN}^{\ast }$, which was the object of illustration in Sections \ref{sec:ML} and \ref{sec:MD}. Qualitatively similar results hold for the estimation of $\lambda _{EC}^{\ast }$, and are available upon request. We use $\hat{P}_0 = \hat{P}$ as the preliminary estimator (recall that there is no $g^*$ in this example). 

Table \ref{tab:MC1} provides results for the first parameter value, i.e., $(\lambda _{RN}^{\ast },\lambda _{EC}^{\ast },\lambda _{RS}^{\ast },\lambda _{FC,1}^{\ast },\lambda _{FC,2}^{\ast },\beta^{\ast }) = (2.8,0.8,0.7,0.6,0.4,0.95)$. Recall from previous sections that this parameter value produces an asymptotic variance of the $K$-PML estimator of $\lambda_{RN}^{\ast }$ that decreases with $K$ (see Figures \ref{fig:example1} and \ref{fig:example1B}, repeated at the bottom of Table \ref{tab:MC1}). 

Let us first focus on the results for the $K$-PML estimator. The simulation results closely resemble the predictions from the asymptotic approximation. As mentioned earlier, this is an expected consequence of using sample sizes that are large relative to the dimensionality of the model. First, the empirical variance and mean squared error are extremely close, indicating that the asymptotic bias is almost negligible. Second, the empirical variance is decreasing with $K$ and is close to the one predicted by our asymptotic analysis. Finally, the computational cost of the $K$-PML estimator is low relative to the optimal $K$-MD estimator, and rises linearly with $K$.

Next, we turn attention to the optimal $K$-MD estimator. Recall that the ``invariance'' result in Theorem \ref{thm:MD_main} indicates that there are multiple asymptotically equivalent ways of implementing the optimal $K$-MD estimator. Throughout this section, the optimal $K$-MD estimator is a {\it feasible} estimator of the optimal $K$-MD estimator derived in Theorem \ref{thm:MD_main} in the last iteration step and the $k$-PML weight matrix in steps $k=1,\dots,K-1$. According to our theoretical results, this feasible optimal $K$-MD estimator is optimal among $K$-MD estimators, has zero asymptotic bias, and has an asymptotic variance that does not change with $K$. For the most part, these predictions are satisfied in our simulations. Once again, this is a consequence of using sample sizes that are large relative to the dimensionality of the model. First, we note that the empirical variance and mean squared error are again extremely close, and so the finite-sample bias is almost negligible. Second, the empirical variance is close to the one predicted by our asymptotic analysis. As predicted by the ``optimality'' result in Theorem \ref{thm:MD_main}, the feasible optimal $K$-MD estimator is more efficient than the $K$-PML estimator. For most values of $K$ under consideration, the empirical variance of the $K$-MD estimator appears to be invariant to $K$, especially for the larger sample sizes. However, we find that the empirical variance decreases slightly between $K=1$ and $K=2$. Our first-order asymptotic analysis cannot explain this last empirical finding. This anomalous behavior for low values of $K$ is analogous to the one found for the $K$-PML estimator by \cite{aguirregabiria/mira:2002} and rationalized by the higher-order analysis in \cite{kasahara/shimotsu:2008}. In Section \ref{sec:highorder}, we show that a high-order analysis can explain these anomalous simulation results for the $K$-MD estimator. Finally, the computational cost of the optimal $K$-MD estimator is considerably higher than the $K$-PML estimator. In particular, computing the optimally weighted $1$-MD, $2$-MD, and $3$-MD estimators takes us roughly $33\%$, $75\%$, and $80\%$ more time than computing the $20$-PML estimator, respectively. The reason behind this difference is that the $K$-PML estimator does not require estimating an optimal weight matrix, while the optimal $K$-MD estimator does. This optimal weight matrix for the $K$-MD estimator requires estimating $\Psi_P$ and $\Psi_\alpha$ (the latter only when $K>1$). As noted by an anonymous referee, the computational cost of estimating $\Psi_P$ in large models can be significant, as its dimension grows quadratically with $d_P$.

\begin{table}[h]
	\begin{center}
	\scalebox{0.79}{\begin{tabular}{cc|rrrrrrrr}
	  \hline\hline
\multicolumn{1}{c}{Estimator} & \multicolumn{1}{c|}{Statistic} & \multicolumn{1}{c}{$K=1$} & \multicolumn{1}{c}{$K=2$} & \multicolumn{1}{c}{$K=3$} &
\multicolumn{1}{c}{$K=4$} & \multicolumn{1}{c}{$K=5$} & \multicolumn{1}{c}{$K=10$} & \multicolumn{1}{c}{$K=15$} & \multicolumn{1}{c}{$K=20$} \\
\hline\hline \multicolumn{2}{c}{}  & \multicolumn{8}{c}{$n = 500$}\\   \hline
\multirow{3}[1]{*}{$K$-PML}  & \multicolumn{1}{c|}{Var} & 127.30 & 109.82 & 107.50 & 103.37 & 101.68 &  98.18 &  97.91 &  97.88 \\ 
   & \multicolumn{1}{c|}{MSE} & 127.37 & 110.31 & 108.20 & 103.71 & 101.93 &  98.29 &  98.02 &  98.00 \\ 
   & \multicolumn{1}{c|}{Time} &   0.85 &   1.70 &   2.50 &   3.29 &   4.09 &   8.08 &  12.10 &  16.13 \\ 
  \multirow{3}[1]{*}{Opt.\ $K$-MD} & \multicolumn{1}{c|}{Var} & 106.24 &  92.93 &  90.23 &  87.36 &  86.88 &  86.56 &  86.46 &  86.44 \\ 
   & \multicolumn{1}{c|}{MSE} & 107.16 &  93.56 &  90.23 &  87.37 &  86.94 &  86.65 &  86.55 &  86.54 \\ 
   & \multicolumn{1}{c|}{Time} &  21.50 &  27.47 &  28.40 &  29.28 &  30.06 &  34.29 &  38.40 &  42.68 \\ 
   \hline\hline \multicolumn{2}{c}{}  & \multicolumn{8}{c}{$n = 1,000$} \\  \hline
\multirow{3}[1]{*}{$K$-PML}  & \multicolumn{1}{c|}{Var} & 122.35 & 106.41 & 103.53 & 100.45 &  99.08 &  96.96 &  96.84 &  96.83 \\ 
   & \multicolumn{1}{c|}{MSE} & 122.36 & 106.61 & 103.83 & 100.58 &  99.18 &  97.00 &  96.88 &  96.87 \\ 
   & \multicolumn{1}{c|}{Time} &   0.75 &   1.51 &   2.25 &   2.98 &   3.70 &   7.41 &  11.16 &  14.92 \\ 
  \multirow{3}[1]{*}{Opt.\ $K$-MD} & \multicolumn{1}{c|}{Var} &  90.19 &  88.04 &  86.77 &  86.93 &  86.59 &  86.43 &  86.40 &  86.40 \\ 
   & \multicolumn{1}{c|}{MSE} &  90.42 &  88.34 &  86.77 &  86.94 &  86.64 &  86.50 &  86.47 &  86.47 \\ 
   & \multicolumn{1}{c|}{Time} &  20.27 &  25.95 &  26.78 &  27.47 &  28.21 &  32.21 &  36.11 &  40.10 \\ 
   \hline\hline \multicolumn{2}{c}{}  & \multicolumn{8}{c}{$n = 2,000$} \\  \hline
\multirow{3}[1]{*}{$K$-PML}  & \multicolumn{1}{c|}{Var} & 123.43 & 108.62 & 105.55 & 102.96 & 101.76 & 100.06 &  99.98 &  99.97 \\ 
   & \multicolumn{1}{c|}{MSE} & 123.53 & 108.93 & 105.96 & 103.22 & 101.98 & 100.22 & 100.12 & 100.12 \\ 
   & \multicolumn{1}{c|}{Time} &   0.78 &   1.54 &   2.28 &   3.03 &   3.78 &   7.50 &  11.29 &  15.05 \\ 
  \multirow{3}[1]{*}{Opt.\ $K$-MD} & \multicolumn{1}{c|}{Var} &  91.43 &  91.04 &  90.34 &  90.49 &  90.31 &  90.24 &  90.23 &  90.23 \\ 
   & \multicolumn{1}{c|}{MSE} &  91.68 &  91.41 &  90.37 &  90.51 &  90.31 &  90.24 &  90.23 &  90.23 \\ 
   & \multicolumn{1}{c|}{Time} &  20.79 &  26.66 &  27.41 &  28.23 &  29.01 &  32.90 &  36.92 &  40.84 \\ 
   \hline\hline \multicolumn{2}{c}{}  & \multicolumn{8}{c}{Asymptotic results} \\  \hline
$K$-PML  & \multicolumn{1}{c|}{Asy.\ Var} & 121.98 & 107.13 & 103.63 & 101.44 & 100.39 &  99.26 &  99.21 &  99.21 \\ 
  Opt.\ $K$-MD & \multicolumn{1}{c|}{Asy.\ Var} &  89.33 &  89.33 &  89.33 &  89.33 &  89.33 &  89.33 &  89.33 &  89.33 \\ 
\hline\hline
\end{tabular}}
	\end{center}
	\caption{\small Simulation results for estimation of $\lambda _{RN}^{\ast }$ when $(\lambda _{RN}^{\ast },\lambda _{EC}^{\ast },\lambda _{RS}^{\ast },\lambda _{FC,1}^{\ast },\lambda _{FC,2}^{\ast },\beta^{\ast }) = (2.8,0.8,0.7,0.6,0.4,0.95)$. ``$K$-PML'' denotes the $K$-PML estimator and ``Opt.\ $K$-MD'' denotes the feasible optimal $K$-MD estimator computed with an estimated optimal weight matrix in the last iteration step and $k$-PML weight matrix in steps $k<K$. ``Var'' denotes the average empirical variance scaled by $n$, ``MSE'' denotes the average mean squared error scaled by $n$, and ``Time'' denotes the average time to compute the estimator in milliseconds, and, for all of these, the average is computed over $S=10,000$ simulations. ``Asy.\ Var'' denotes the asymptotic variance according to Theorems \ref{thm:ML} and \ref{thm:MD}.}
	\label{tab:MC1}
\end{table}

Table \ref{tab:MC2} provides results for the second parameter value, i.e., $(\lambda _{RN}^{\ast },\lambda _{EC}^{\ast },\lambda _{RS}^{\ast },\lambda _{FC,1}^{\ast },\lambda _{FC,2}^{\ast },\beta^{\ast }) = (2,1.8,0.2,0.01,0.03,0.95)$. Recall that this parameter value produced an asymptotic variance of the $K$-PML estimator of $\lambda_{RN}^{\ast }$ that increases with $K$ (see Figures \ref{fig:example2} and \ref{fig:example2B}, repeated at the bottom of Table \ref{tab:MC2}). 
In turn, Table \ref{tab:MC3} provides results for the third parameter value, i.e., $(\lambda _{RN}^{\ast },\lambda _{EC}^{\ast },\lambda _{RS}^{\ast },\lambda _{FC,1}^{\ast },\lambda _{FC,2}^{\ast },\beta^{\ast }) = (2.2,1.45,0.45,0.22,0.29,0.95)$. This parameter value produced an asymptotic variance of the $K$-PML estimator of $\lambda_{RN}^{\ast }$ that wiggles with $K$ (see Figures \ref{fig:example3} and \ref{fig:example3B}, repeated at the bottom of Table \ref{tab:MC3}).   

The simulation results for these two parameter values are qualitatively similar to the ones obtained for the first parameter value and, for the most part, support our theoretical conclusions. First, both estimators have very little empirical bias. Second, all the estimators have an empirical variance that is very close to the one predicted by the asymptotic analysis. In particular, the empirical variance of the $K$-PML estimator is increasing in $K$ for the second parameter value and wiggles for the third parameter value. Third, in most cases, the empirical variance of the optimal $K$-MD estimator is lower than that of the $K$-PML estimator. Fourth, the empirical variance of the optimal $K$-MD estimator is invariant to $K$ except for small values of $K$ for which it is decreasing. One notable difference relative to the previous simulation is that the range of iterations over which the empirical variance decreases now extends between $K=1$ and $K=5$. As explained earlier, we attribute this to the high-order analysis that we develop in Section \ref{sec:highorder}. Finally, the comparison of computational costs is similar to the one described for the first parameter value.

\begin{table}[h]
	\begin{center}
	\scalebox{0.79}{\begin{tabular}{cc|rrrrrrrr}
	  \hline\hline
\multicolumn{1}{c}{Estimator} & \multicolumn{1}{c|}{Statistic} & \multicolumn{1}{c}{$K=1$} & \multicolumn{1}{c}{$K=2$} & \multicolumn{1}{c}{$K=3$} &
\multicolumn{1}{c}{$K=4$} & \multicolumn{1}{c}{$K=5$} & \multicolumn{1}{c}{$K=10$} & \multicolumn{1}{c}{$K=15$} & \multicolumn{1}{c}{$K=20$} \\
\hline\hline \multicolumn{2}{c}{}  & \multicolumn{8}{c}{$n = 500$}\\   \hline
\multirow{3}[1]{*}{$K$-PML}  & \multicolumn{1}{c|}{Var} &  87.34 &  89.22 &  93.80 &  93.78 &  94.53 &  93.88 &  93.29 &  92.94 \\ 
   & \multicolumn{1}{c|}{MSE} &  87.40 &  89.82 &  94.98 &  94.78 &  95.52 &  94.73 &  94.11 &  93.74 \\ 
   & \multicolumn{1}{c|}{Time} &   0.79 &   1.57 &   2.32 &   3.06 &   3.79 &   7.50 &  11.23 &  14.98 \\ 
  \multirow{3}[1]{*}{Opt.\ $K$-MD} & \multicolumn{1}{c|}{Var} & 111.01 & 105.28 & 102.97 &  99.81 &  97.91 &  99.80 &  84.82 &  84.58 \\ 
   & \multicolumn{1}{c|}{MSE} & 116.28 & 109.86 & 104.58 & 100.88 &  98.51 &  99.97 &  84.90 &  84.65 \\ 
   & \multicolumn{1}{c|}{Time} &  20.10 &  25.90 &  26.74 &  27.44 &  28.18 &  32.18 &  35.95 &  39.89 \\ 
   \hline\hline \multicolumn{2}{c}{}  & \multicolumn{8}{c}{$n = 1,000$} \\  \hline
\multirow{3}[1]{*}{$K$-PML}  & \multicolumn{1}{c|}{Var} &  85.11 &  86.92 &  90.26 &  90.45 &  90.86 &  90.49 &  90.28 &  90.21 \\ 
   & \multicolumn{1}{c|}{MSE} &  85.13 &  87.18 &  90.80 &  90.90 &  91.30 &  90.87 &  90.64 &  90.57 \\ 
   & \multicolumn{1}{c|}{Time} &   0.74 &   1.51 &   2.26 &   3.01 &   3.75 &   7.52 &  11.26 &  15.03 \\ 
  \multirow{3}[1]{*}{Opt.\ $K$-MD} & \multicolumn{1}{c|}{Var} &  98.02 &  94.01 &  89.55 &  88.28 &  86.38 &  83.27 &  82.67 &  82.54 \\ 
   & \multicolumn{1}{c|}{MSE} & 101.04 &  96.22 &  90.16 &  88.67 &  86.57 &  83.32 &  82.70 &  82.57 \\ 
   & \multicolumn{1}{c|}{Time} &  20.69 &  26.51 &  27.32 &  28.09 &  28.90 &  32.85 &  36.76 &  40.70 \\ 
   \hline\hline \multicolumn{2}{c}{}  & \multicolumn{8}{c}{$n = 2,000$} \\  \hline
\multirow{3}[1]{*}{$K$-PML}  & \multicolumn{1}{c|}{Var} &  87.32 &  89.53 &  92.37 &  92.61 &  92.93 &  92.71 &  92.62 &  92.60 \\ 
   & \multicolumn{1}{c|}{MSE} &  87.37 &  89.76 &  92.76 &  92.95 &  93.26 &  93.00 &  92.91 &  92.89 \\ 
   & \multicolumn{1}{c|}{Time} &   0.72 &   1.46 &   2.18 &   2.90 &   3.63 &   7.23 &  10.86 &  14.52 \\ 
  \multirow{3}[1]{*}{Opt.\ $K$-MD} & \multicolumn{1}{c|}{Var} &  93.83 &  90.62 &  87.91 &  87.29 &  86.43 &  85.19 &  84.95 &  84.89 \\ 
   & \multicolumn{1}{c|}{MSE} &  95.84 &  91.93 &  88.30 &  87.54 &  86.57 &  85.25 &  85.00 &  84.94 \\ 
   & \multicolumn{1}{c|}{Time} &  20.30 &  26.22 &  26.74 &  27.46 &  28.31 &  32.13 &  35.89 &  39.83 \\ 
   \hline\hline \multicolumn{2}{c}{}  & \multicolumn{8}{c}{Asymptotic results} \\  \hline
$K$-PML  & \multicolumn{1}{c|}{Asy.\ Var} &  84.21 &  85.83 &  87.63 &  87.90 &  88.06 &  88.03 &  88.03 &  88.03 \\ 
  Opt.\ $K$-MD & \multicolumn{1}{c|}{Asy.\ Var} &  82.49 &  82.49 &  82.49 &  82.49 &  82.49 &  82.49 &  82.49 &  82.49 \\ 
\hline\hline
\end{tabular}}
	\end{center}
	\caption{\small Simulation results for estimation of $\lambda _{RN}^{\ast }$ when $(\lambda _{RN}^{\ast },\lambda _{EC}^{\ast },\lambda _{RS}^{\ast },\lambda _{FC,1}^{\ast },\lambda _{FC,2}^{\ast },\beta^{\ast }) = (2,1.8,0.2,0.01,0.03,0.95)$. ``$K$-PML'' denotes the $K$-PML estimator and ``Opt.\ $K$-MD'' denotes the feasible optimal $K$-MD estimator computed with an estimated optimal weight matrix in the last iteration step and $k$-PML weight matrix in steps $k<K$. ``Var'' denotes the average empirical variance scaled by $n$, ``MSE'' denotes the average mean squared error scaled by $n$, and ``Time'' denotes the average time to compute the estimator in milliseconds, and, for all of these, the average is computed over $S=10,000$ simulations. ``Asy.\ Var'' denotes the asymptotic variance according to Theorems \ref{thm:ML} and \ref{thm:MD}.}
	\label{tab:MC2}
\end{table}

\begin{table}[h]
	\begin{center}
	\scalebox{0.79}{
	\begin{tabular}{cc|rrrrrrrr}
  \hline\hline
\multicolumn{1}{c}{Estimator} & \multicolumn{1}{c|}{Statistic} & \multicolumn{1}{c}{$K=1$} & \multicolumn{1}{c}{$K=2$} & \multicolumn{1}{c}{$K=3$} &
\multicolumn{1}{c}{$K=4$} & \multicolumn{1}{c}{$K=5$} & \multicolumn{1}{c}{$K=10$} & \multicolumn{1}{c}{$K=15$} & \multicolumn{1}{c}{$K=20$} \\
\hline\hline \multicolumn{2}{c}{}  & \multicolumn{8}{c}{$n = 500$}\\   \hline
\multirow{3}[1]{*}{$K$-PML}  & \multicolumn{1}{c|}{Var} &  93.36 &  91.73 &  94.38 &  93.21 &  92.98 &  90.57 &  89.96 &  89.83 \\ 
   & \multicolumn{1}{c|}{MSE} &  93.44 &  92.38 &  95.52 &  94.08 &  93.80 &  91.17 &  90.52 &  90.39 \\ 
   & \multicolumn{1}{c|}{Time} &   0.80 &   1.60 &   2.35 &   3.11 &   3.86 &   7.62 &  11.39 &  15.17 \\ 
  \multirow{3}[1]{*}{Opt.\ $K$-MD} & \multicolumn{1}{c|}{Var} & 108.23 &  94.76 &  86.76 &  85.07 &  83.05 &  82.17 &  81.93 &  81.88 \\ 
   & \multicolumn{1}{c|}{MSE} & 112.64 &  97.53 &  87.20 &  85.32 &  83.14 &  82.20 &  81.95 &  81.90 \\ 
   & \multicolumn{1}{c|}{Time} &  20.43 &  26.19 &  27.03 &  27.88 &  28.59 &  32.56 &  36.46 &  40.56 \\ 
   \hline\hline \multicolumn{2}{c}{}  & \multicolumn{8}{c}{$n = 1,000$} \\  \hline
\multirow{3}[1]{*}{$K$-PML}  & \multicolumn{1}{c|}{Var} &  91.29 &  90.52 &  92.23 &  91.71 &  91.51 &  90.16 &  89.87 &  89.82 \\ 
   & \multicolumn{1}{c|}{MSE} &  91.38 &  90.99 &  92.97 &  92.31 &  92.07 &  90.61 &  90.31 &  90.26 \\ 
   & \multicolumn{1}{c|}{Time} &   0.73 &   1.46 &   2.20 &   2.91 &   3.65 &   7.33 &  10.99 &  14.65 \\ 
  \multirow{3}[1]{*}{Opt.\ $K$-MD} & \multicolumn{1}{c|}{Var} &  95.17 &  88.14 &  84.62 &  84.39 &  83.62 &  83.08 &  82.96 &  82.93 \\ 
   & \multicolumn{1}{c|}{MSE} &  97.66 &  89.78 &  84.97 &  84.62 &  83.74 &  83.14 &  83.01 &  82.99 \\ 
   & \multicolumn{1}{c|}{Time} &  20.17 &  25.83 &  26.52 &  27.34 &  28.12 &  31.98 &  35.81 &  39.69 \\ 
   \hline\hline \multicolumn{2}{c}{}  & \multicolumn{8}{c}{$n = 2,000$} \\  \hline
\multirow{3}[1]{*}{$K$-PML}  & \multicolumn{1}{c|}{Var} &  92.59 &  92.02 &  93.27 &  92.85 &  92.65 &  91.72 &  91.57 &  91.56 \\ 
   & \multicolumn{1}{c|}{MSE} &  92.60 &  92.11 &  93.45 &  92.97 &  92.76 &  91.79 &  91.64 &  91.63 \\ 
   & \multicolumn{1}{c|}{Time} &   0.64 &   1.31 &   1.97 &   2.61 &   3.27 &   6.54 &   9.83 &  13.13 \\ 
  \multirow{3}[1]{*}{Opt.\ $K$-MD} & \multicolumn{1}{c|}{Var} &  91.62 &  87.79 &  85.66 &  85.48 &  84.96 &  84.50 &  84.42 &  84.41 \\ 
   & \multicolumn{1}{c|}{MSE} &  92.56 &  88.29 &  85.71 &  85.50 &  84.96 &  84.50 &  84.42 &  84.41 \\ 
   & \multicolumn{1}{c|}{Time} &  18.36 &  23.51 &  24.15 &  24.86 &  25.62 &  28.96 &  32.42 &  35.93 \\ 
   \hline\hline \multicolumn{2}{c}{}  & \multicolumn{8}{c}{Asymptotic results} \\  \hline
$K$-PML  & \multicolumn{1}{c|}{Asy.\ Var} &  90.42 &  89.58 &  90.32 &  90.08 &  89.94 &  89.56 &  89.53 &  89.52 \\ 
  Opt.\ $K$-MD & \multicolumn{1}{c|}{Asy.\ Var} &  84.20 &  84.20 &  84.20 &  84.20 &  84.20 &  84.20 &  84.20 &  84.20 \\ 
\hline\hline
\end{tabular}}
	\end{center}
	\caption{\small Simulation results for estimation of $\lambda _{RN}^{\ast }$ when $(\lambda _{RN}^{\ast },\lambda _{EC}^{\ast },\lambda _{RS}^{\ast },\lambda _{FC,1}^{\ast },\lambda _{FC,2}^{\ast },\beta^{\ast }) =(2.2,1.45,0.45,0.22,0.29,0.95)$. ``$K$-PML'' denotes the $K$-PML estimator and ``Opt.\ $K$-MD'' denotes the feasible optimal $K$-MD estimator computed with an estimated optimal weight matrix in the last iteration step and $k$-PML weight matrix in steps $k<K$. ``Var'' denotes the average empirical variance scaled by $n$, ``MSE'' denotes the average mean squared error scaled by $n$, and ``Time'' denotes the average time to compute the estimator in milliseconds, and, for all of these, the average is computed over $S=10,000$ simulations. ``Asy.\ Var'' denotes the asymptotic variance according to Theorems \ref{thm:ML} and \ref{thm:MD}.}
	\label{tab:MC3}
\end{table}

For the sake of comparison, we also compute the one-step MLE described in \citet[Section 3.6]{aguirregabiria/mira:2007}, which does not belong to the class of $K$-PML or $K$-MD estimators. The one-step MLE is the result of taking a Newton step in the MLE problem based on an initial estimator of $(\alpha^*,P^*)$ that is consistent and is a fixed point in Eq.\ \eqref{eq:FP}. This initial estimator serves as a starting point of the Newton step, and is also used to consistently estimate the efficient score and information matrix. \citet[Section 3.6]{aguirregabiria/mira:2007} explain that the one-step MLE has the advantage of being as efficient as the MLE. They propose implementing the one-step MLE with an initial estimator given by their NPL estimator, i.e., the $\infty$-PML estimator. As an approximation to this, we compute the one-step MLE with the $20$-PML estimator as the initial estimator.\footnote{We do not use $K=\infty$ for reasons explained in Section \ref{sec:Introduction}, and 20 is the largest value of $K$ considered in our simulations.} The simulation results for the one-step MLE are provided in Table \ref{tab:newton}. These show that the one-step MLE is more efficient than the $K$-PML estimator, and is as efficient as optimal $K$-MD estimator, especially when $K\geq 10$. These findings are in line with our analysis in Section \ref{sec:MLE}, which reveals that, under appropriate conditions, the optimal $K$-MD estimator is as efficient as the MLE and, consequently, as efficient as the one-step MLE. Finally, the computational costs of estimating the one-step MLE are similar to that of an optimally weighted $20$-MD estimator.

\begin{table}
	\begin{center}
	\scalebox{0.79}{\begin{tabular}{rr|rrr}\hline\hline
~~~~Estimator~~~~ & ~~~~Statistic~~~~ & \multicolumn{1}{c}{~~$n = 500$~~} & \multicolumn{1}{c}{~~$n = 1,000$~~} & \multicolumn{1}{c}{~~$n = 2,000$~~} \\
                            \hline\hline \multicolumn{5}{c}{Design 1: $(\lambda _{RN}^{\ast },\lambda _{EC}^{\ast },\lambda _{RS}^{\ast },\lambda _{FC,1}^{\ast },\lambda _{FC,2}^{\ast },\beta^{\ast })
=(2.8,0.8,0.7,0.6,0.4,0.95)$} \\   \hline
\multicolumn{1}{c}{\multirow{3}[1]{*}{$1$-step MLE}}  & \multicolumn{1}{c|}{Var} & 86.27 & 86.29 & 90.18 \\ 
   & \multicolumn{1}{c|}{MSE} & 86.33 & 86.34 & 90.18 \\ 
   & \multicolumn{1}{c|}{Time} & 40.86 & 38.41 & 39.12 \\ 
   \hline\hline \multicolumn{5}{c}{Design 2: $(\lambda _{RN}^{\ast },\lambda _{EC}^{\ast },\lambda _{RS}^{\ast },\lambda _{FC,1}^{\ast },\lambda _{FC,2}^{\ast },\beta^{\ast })
                        =(2,1.8,0.2,0.01,0.03,0.95)$} \\  \hline
\multicolumn{1}{c}{\multirow{3}[1]{*}{$1$-step MLE}}  & \multicolumn{1}{c|}{Var} & 84.95 & 82.65 & 85.06 \\ 
   & \multicolumn{1}{c|}{MSE} & 85.12 & 82.73 & 85.15 \\ 
   & \multicolumn{1}{c|}{Time} & 38.38 & 39.08 & 38.05 \\ 
   \hline\hline \multicolumn{5}{c}{Design 3: $(\lambda _{RN}^{\ast },\lambda _{EC}^{\ast },\lambda _{RS}^{\ast },\lambda _{FC,1}^{\ast },\lambda _{FC,2}^{\ast },\beta^{\ast })
                        =(2.2,1.45,0.45,0.22,0.29,0.95)$} \\  \hline
\multicolumn{1}{c}{\multirow{3}[1]{*}{$1$-step MLE}}  & \multicolumn{1}{c|}{Var} & 81.84 & 82.97 & 84.49 \\ 
   & \multicolumn{1}{c|}{MSE} & 81.89 & 83.07 & 84.49 \\ 
   & \multicolumn{1}{c|}{Time} & 38.82 & 38.06 & 34.45 \\ 
   \hline\hline
	\end{tabular}}
	\end{center}
	\caption{\small Simulation results for the one-step MLE of $\lambda _{RN}^{\ast }$ for three values of $(\lambda _{RN}^{\ast },\lambda _{EC}^{\ast },\lambda _{RS}^{\ast },\lambda _{FC,1}^{\ast },\lambda _{FC,2}^{\ast },\beta^{\ast })$. This estimator is computed based on the $20$-PML estimator as the initial estimator. ``Var'' denotes the average empirical variance scaled by $n$, ``MSE'' denotes the average mean squared error scaled by $n$, and ``Time'' denotes the average time to compute the estimator in milliseconds, and, for all of these, the average is computed over $S=10,000$ simulations.}
	\label{tab:newton}
\end{table}

\section{Conclusions}\label{sec:Conclusions}

This paper investigates the asymptotic properties of a class of estimators of the structural parameters in dynamic discrete choice games. We consider $K$-stage policy iteration (PI) estimators, where $K$ denotes the number of policy iterations employed in the estimation. This class nests several estimators proposed in the literature. By considering a ``maximum likelihood'' criterion function, the $K$-stage PI estimator becomes the $K$-PML estimator in \citet{aguirregabiria/mira:2002,aguirregabiria/mira:2007}. By considering a ``minimum distance'' criterion function, $K$-stage PI estimator defines a new $K$-MD estimator, which is an iterative version of the estimators in \cite{pesendorfer/schmidt-dengler:2008} and \cite{pakes/ostrovsky/berry:2007}. Since we consider an asymptotic framework with fixed $K \in \mathbb{N}$ as $n\to\infty$, our analysis is not affected by the problems described in \cite{pesendorfer/schmidt-dengler:2010}.

First, we establish that the $K$-PML estimator is consistent and asymptotically normal for any $K \in \mathbb{N}$. This complements findings in \cite{aguirregabiria/mira:2007}, who focus on $K=1$ and $K$ large enough to induce convergence of the estimator. Furthermore, we show under certain conditions that the asymptotic variance of the $K$-PML estimator can exhibit arbitrary patterns as a function of $K$. In particular, we show that by changing the parameter values in a typical dynamic discrete choice game, the asymptotic variance of the $K$-PML estimator can increase, decrease, or even be non-monotonic with $K$.

Second, we also establish that the $K$-MD estimator is consistent and asymptotically normal for any $K$. Its asymptotic distribution depends on the choice of the weight matrix. For a specific weight matrix, the $K$-MD estimator has the same asymptotic distribution as the $K$-PML estimator. We investigate the optimal choice of the weight matrix for the $K$-MD estimator. Our main result shows that an optimally weighted $K$-MD estimator has an asymptotic distribution that is invariant to $K$. This appears to be a novel result in the literature on PI estimation for games, and it is particularly surprising given the findings in \cite{aguirregabiria/mira:2007} for $K$-PML estimators.

The main result in our paper implies two important corollaries regarding the optimal $1$-MD estimator (derived by \cite{pesendorfer/schmidt-dengler:2008}). First, the optimal $1$-MD estimator is optimal among all $K$-MD estimators. In other words, additional policy iterations do not provide efficiency gains relative to the optimal $1$-MD estimator. Second, the optimal $1$-MD estimator is more or equally efficient than any $K$-PML estimator for all $K \in \mathbb{N}$. Finally, Section \ref{sec:MLE} provides appropriate conditions under which the optimal $1$-MD estimator has the same asymptotic distribution as the MLE, and it is thus efficient among regular estimators.

We explored our theoretical findings in Monte Carlo simulations. Provided that the sample size is large enough, the simulation evidence supports the conclusions of our asymptotic analysis. The $K$-PML and the optimal $K$-MD estimators have negligible empirical bias and have an empirical variance that is very close to the one predicted by the asymptotic analysis. In most cases, the empirical variance of the optimal $K$-MD estimator is lower than that of the $K$-PML estimator. Also, it appears to be invariant to $K$ except for very small values of $K$ for which it is decreasing in $K$. The behavior for low values of $K$ is analogous to the one found by \cite{aguirregabiria/mira:2002}. Inspired by the analysis in \cite{kasahara/shimotsu:2008}, Section \ref{sec:highorder} studies the high-order properties of the optimal $K$-MD estimator and rationalizes the simulation result for low values of $K$.

There are several topics excluded from this paper, such as allowing for permanent unobserved heterogeneity in the dynamic discrete choice model. In principle, this could be achieved via the developments in \citet[Section 3.5]{aguirregabiria/mira:2007} or \cite{arcidiacono/miller:2011}. We plan to address this topic in future work.

\appendix

\begin{small}
\section{Appendix}


Throughout this appendix, ``s.t.'' abbreviates ``such that'', ``RHS'' abbreviates ``right hand side'', and ``PSD'' abbreviates ``positive semidefinite''.

The asymptotic distribution of the various estimators considered in this paper follows from applying an iterated version of a general result for extremum estimators, derived in Theorem \ref{thm:2stepB}. In turn, this result requires the following high-level assumption. Note that whenever this assumption is used to prove results in the main text, we first verify that it holds under the lower-level conditions.

\begin{assumptionA}[(High-level assumptions for iterated extremum estimators).]\label{ass:EE_HL}
There is a sequence of limiting criterion functions $ \{ Q_{k}:k\leq K\} $ with $Q_{k}:\Theta_{\alpha }\times \Theta_{g}\times \Theta_{P}\to\mathbb{R}$ such that:

\begin{enumerate}[(a)]
\item $\sup_{\alpha \in \Theta_{\alpha }}|{\hat{Q}}_{k}(\alpha ,\tilde{g}, \tilde{P})-Q_{k}(\alpha ,g^{\ast },P^{\ast })|=o_{p}(1)$, provided that $ ( \tilde{g},\tilde{P}) =( g^{\ast },P^{\ast }) +o_{p}(1)$.

\item $Q_{k}(\alpha ,g^{\ast },P^{\ast })$ is uniquely maximized at $\alpha ^{\ast }$.

\item $\sqrt{n}{\partial \hat{Q}_{k}( \alpha ^{\ast },g^{\ast },P^{\ast }) }/{\partial \alpha }=\Xi_{k}\sqrt{{n}}( \hat{P}-P^{\ast }) + o_{p}( 1)$, for some matrix $\Xi_k$.

\item For any $\lambda \in \{\alpha ,g,P\}$, $\partial ^{2}{\hat{Q}}_{k}( \tilde{\alpha},\tilde{g},\tilde{P})/\partial \alpha \partial \lambda ^{\prime }=\partial ^{2}Q_{k}(\alpha ^{\ast },g^{\ast },P^{\ast })/\partial \alpha \partial \lambda ^{\prime }+o_{p}(1)$, provided that $( \tilde{ \alpha},\tilde{g},\tilde{P}) =( \alpha ^{\ast },g^{\ast },P^{\ast }) +o_{p}(1)$.

\item $\partial ^{2}Q_{k}(\alpha ^{\ast },g^{\ast },P^{\ast })/\partial \alpha \partial \alpha ^{\prime }$ is non-singular.
\end{enumerate}
\end{assumptionA}


\subsection{Proofs of results in the main text}

\begin{proof}[Proof of Theorem \ref{thm:ML}]
	This proof will require the following notation. For any $( \alpha ,g,P) \in \Theta_{\alpha }\times \Theta_{g}\times \Theta_{P}$ and $(a,j,x) \in A \times J\times X$, let $n_{ajx}\equiv \sum_{i=1}^{n}1[ ( a_{jt,i},x_{t,i}) =( a,x) ] $, $n_{x}\equiv \sum_{i=1}^{n}1[ x_{t,i}=x ] $, $\Psi_{ajx}( \alpha ,g,P) \equiv \prod_{j \in J}\Psi_{j}( \alpha ,g,P) ( a|x) $, $\hat{P}_{ajx}\equiv n_{ajx}/n_{x}$, $P_{ajx}^{\ast }\equiv P_{j}^{\ast }( a|x) =\sum_{x^{\prime }\in X}\Pi_{j}^{\ast }(a,x,x^{\prime })/\sum_{(\tilde{a},\tilde{x}^{\prime })\in A\times X}\Pi_{j}^{\ast }(\tilde{a},x,\tilde{x}^{\prime })$, and $ m^{\ast }( x) \equiv \sum_{(a,x^{\prime })\in A\times X}\Pi_{j}^{\ast }(a,x,x^{\prime })$. 
	Note that RHS of the last equation does not change with  $j \in J$ by the equilibrium assumption in Assumption \ref{ass:iid}. Also, Assumptions \ref{ass:Identification} and \ref{ass:Regularity} imply $P_{ajx}^{\ast }>0$ and $m^{\ast }( x)>0$ for every $( a,j,x) \in A\times J\times X$.

Theorem \ref{thm:ML} is a consequence of applying Theorem \ref{thm:2stepB} with $\hat{Q}_{k}\equiv {\hat{Q}}_{PML}$ and ${Q}_{k}\equiv {Q}_{PML}$ where, for any $ ( \alpha ,g,P) \in \Theta_{\alpha }\times \Theta_{g}\times \Theta_{P}$,
\begin{align*}
{\hat{Q}}_{PML}( \alpha ,g,P) &~\equiv ~\frac{1}{n} \sum_{i=1}^{n}\ln \Psi ( \alpha ,g,P) ( a_{i}|x_{i}) ~=~\sum_{( a,j,x) \in A\times J\times X}\frac{n_{ajx}}{n}\ln \Psi_{ajx}( \alpha ,g,P) \\
&~=~\sum_{( j,x) \in J\times X}\frac{n_{x}}{n}\left[ \sum_{a\in \tilde{A}}\hat{P}_{ajx}\ln \Psi_{ajx}( \alpha ,g,P) +\hat{P}_{0jx}\ln ( 1-\sum_{a\in \tilde{A}}\Psi_{ajx}( \alpha ,g,P) ) \right] ,
\end{align*}
and
\begin{align*}
{Q}_{PML}( \alpha ,g,P) &~\equiv~ \sum_{( a,j,x) \in A\times J\times X}m^{\ast }( x) P_{ajx}^{\ast }\ln \Psi_{ajx}( \alpha ,g,P) \\
&~=~\sum_{( j,x) \in J\times X} m^{\ast }( x) \left[ \sum_{a\in \tilde{A}}P_{ajx}^{\ast }\ln \Psi_{ajx}( \alpha ,g,P) +P_{0jx}^{\ast }\ln ( 1-\sum_{a\in \tilde{A}}\Psi_{ajx}( \alpha ,g,P) ) \right] ,
\end{align*}
and where we have used that $\Psi_{0jx}( \alpha ,g,P) =1-\sum_{a\in \tilde{A}}\Psi_{ajx}( \alpha ,g,P) $. 

For any $\lambda \in \{ \alpha ,g,P\} $, note that
\begin{align*}
\frac{\partial ^{2}{Q}_{PML}( \alpha ^{\ast },g^{\ast },P^{\ast }) }{\partial \alpha \partial \lambda ^{\prime }}
&=-\sum_{( j,x) \in J\times X} m^{\ast }( x) \left[ 
\begin{array}{c}
\frac{1}{\Psi_{ajx}( \alpha ^{\ast },g^{\ast },P^{\ast }) } \sum_{a\in \tilde{A}}\frac{\partial \Psi_{ajx}( \alpha ^{\ast },g^{\ast },P^{\ast }) }{\partial \alpha }\frac{\partial \Psi _{ajx}( \alpha ^{\ast },g^{\ast },P^{\ast }) }{\partial \lambda ^{\prime }}\\
+ \frac{1}{\Psi_{0jx}( \alpha ^{\ast },g^{\ast },P^{\ast }) }\sum_{ \check{a}\in \tilde{A}}\frac{\partial \Psi_{\check{a}jx}( \alpha ^{\ast },g^{\ast },P^{\ast }) }{\partial \alpha }\sum_{\tilde{a}\in \tilde{A}}\frac{\partial \Psi_{\tilde{a}jx}( \alpha ^{\ast },g^{\ast },P^{\ast }) }{\partial \lambda ^{\prime }}
\end{array}
\right] \\
&=-\left\{ 
\begin{array}{c}
\{ \frac{\partial \Psi_{ajx}( \alpha ^{\ast },g^{\ast },P^{\ast }) }{\partial \alpha }:( a,j,x) \in \tilde{A}\times J\times X\} ^{\prime }\times \\
diag\{ m^{\ast }( x) ( diag\{ 1/P_{ajx}^{\ast }:a\in \tilde{A}\} +\mathbf{1}_{\vert \tilde{A}\vert \times \vert \tilde{A}\vert }/P_{0jx}^{\ast }) :( j,x) \in J\times X\} \times \\
\{ \frac{\partial \Psi_{ajx}( \alpha ^{\ast },g^{\ast },P^{\ast }) }{\partial \lambda }:( a,j,x) \in \tilde{A}\times J\times X\}
\end{array}
\right\} =-\Psi_{\alpha }^{\prime }\Omega_{PP}^{-1}\Psi_{\lambda },
\end{align*}
where the first equality uses that Assumptions \ref{ass:Identification} and \ref{ass:Regularity}, the second equality follows from Assumption \ref{ass:Identification}, and the final equality follows from the following argument. By Eq.\ \eqref{eq:SampleFrequency}, $\Omega _{PP}\equiv diag\{\Sigma _{jx}:(j,x)\in J\times X\}$, and $ \Sigma _{xj}\equiv (diag\{P_{jx}^{\ast }\}-P_{jx}^{\ast }P_{jx}^{\ast \prime })/m^{\ast }(x)$ and $P_{jx}^{\ast }\equiv \{P_{j}^{\ast }(a|x):a\in \tilde{A} \} $ for all $(j,x)\in J\times X$. From here, we deduce that
\begin{align}
	\Omega_{PP}^{-1}~=~diag\{ m^{\ast }( x) ( diag\{ 1/P_{ajx}^{\ast }:a\in \tilde{A}\} + \mathbf{1}_{|\tilde{A}| \times |\tilde{A}| }/P_{0jx}^{\ast }) ~:~( j,x) \in J\times X\}.
	\label{eq:InvOmegaPP}
\end{align}

To apply Theorem \ref{thm:2stepB}, we first verify Assumption \ref{ass:EE_HL}.

\underline{Part (a).} For any $( \tilde{g},\tilde{P}) =( g^{\ast },P^{\ast }) +o_{p}(1)$,
\begin{align*}
\sup_{\alpha \in \Theta_{\alpha }}\vert {\hat{Q}}_{PML}(\alpha , \tilde{g},\tilde{P})-{Q}_{PML}(\alpha ,g^{\ast },P^{\ast })\vert &\leq \left[
\begin{array}{c}
\sup_{\alpha \in \Theta_{\alpha }}\vert \sum_{( a,j,x) \in A\times J\times X}\frac{n_{ajx}}{n}\ln ( \Psi_{ajx}( \alpha , \tilde{g},\tilde{P}) /\Psi_{ajx}( \alpha ,g^{\ast },P^{\ast }) ) \vert \\
+\sup_{\alpha \in \Theta_{\alpha }}\vert \sum_{( a,j,x) \in A\times J\times X}( \frac{n_{ajx}}{n}-m^{\ast }( x) P_{ajx}^{\ast }) \ln \Psi_{ajx}( \alpha ,g^{\ast },P^{\ast }) \vert
\end{array}
\right] \\
&\leq \sum_{( a,j,x) \in A\times J\times X}\left[ 
\begin{array}{c}
\frac{n_{ajx}}{n} \sup_{\alpha \in \Theta_{\alpha }}\vert \ln \Psi_{ajx}( \alpha ,\tilde{g},\tilde{P}) -\ln \Psi_{ajx}( \alpha ,g^{\ast },P^{\ast }) \vert \\
+ ( \frac{ n_{ajx}}{n}-m^{\ast }( x) P_{ajx}^{\ast }) \vert \ln ( \inf_{\alpha \in \Theta_{\alpha }}\vert \Psi_{ajx}( \alpha ,g^{\ast },P^{\ast }) \vert )
\end{array}
\right] =o_{p}(1),
\end{align*}
where the second inequality uses Assumptions \ref{ass:iid}, \ref{ass:Regularity}, and the intermediate value theorem.

\underline{Part (b).} Let $G$ be defined as follows:
\begin{align*}
G( \alpha ) ~\equiv~ {Q}_{PML}( \alpha ,g^{\ast },P^{\ast }) -{Q}_{PML}( \alpha ^{\ast },g^{\ast },P^{\ast }) =\sum_{(a,j,x) \in A\times J\times X}m^{\ast }( x) P_{ajx}^{\ast }\ln \left( \frac{\Psi_{ajx}( \alpha ,g^{\ast },P^{\ast }) }{\Psi_{ajx}( \alpha ^{\ast },g^{\ast },P^{\ast }) } \right),
\end{align*}
which is properly defined by Assumptions \ref{ass:Identification} and \ref{ass:Regularity}. By definition, $G( \alpha ^{\ast }) =0$. On the other hand, consider any $\alpha\neq \alpha ^{\ast }$. Assumption \ref{ass:Identification} implies that $\Psi ( \alpha ,g^{\ast },P^{\ast }) \not=\Psi ( \alpha ^{\ast },g^{\ast },P^{\ast }) $. This and Assumption \ref{ass:Regularity} then implies that $\Psi_{ajx}( \alpha ,g^{\ast },P^{\ast })/ \Psi_{ajx}( \alpha ^{\ast },g^{\ast },P^{\ast })\neq 1$ for some $( a,j,x) \in A\times J\times X$. Then,
\begin{equation*}
G( \alpha ) <\ln \left( \sum_{( a,j,x) \in A\times J\times X}m^{\ast }( x) P_{ajx}^{\ast }\frac{\Psi_{ajx}( \alpha ,g^{\ast },P^{\ast }) }{\Psi_{ajx}( \alpha ^{\ast },g^{\ast },P^{\ast }) }\right) =\ln \left( \sum_{( a,j,x) \in A\times J\times X}m^{\ast }( x) \Psi_{ajx}( \alpha ,g^{\ast },P^{\ast }) \right) =0,
\end{equation*}
where the inequality follows from Jensen's inequality, the strict convexity of the logarithm, and $\Psi_{ajx}( \alpha ,g^{\ast },P^{\ast })/ \Psi_{ajx}( \alpha ^{\ast },g^{\ast },P^{\ast })\neq 1$ for some $( a,j,x) \in A\times J\times X$, the first equality follows from Assumption \ref{ass:Regularity}, and the final equality follows from $\sum_{( a,j,x) \in A\times J\times X}m^{\ast }( x) \Psi_{ajx}( \alpha ,g^{\ast },P^{\ast }) =1$ for any $\alpha \in \Theta_{\alpha }$. Therefore, $G( \alpha ) $ and ${Q}_{PML}( \alpha ,g^{\ast },P^{\ast }) $ are uniquely maximized at $\alpha =\alpha ^{\ast }$.

\underline{Part (c).} For any $( \alpha ,g,P) \in \Theta_{\alpha }\times \Theta_{g}\times \Theta_{P}$ s.t.\ $\Psi ( \alpha ,g,P) $ is positive and differentiable, consider the following derivation.
\begin{align}
&\frac{\partial {\hat{Q}}_{PML}(\alpha ,g,P)}{\partial \alpha } =\frac{ \partial }{\partial \alpha }\left\{ \sum_{(j,x) \in J\times X} \frac{n_{x}}{n}\left[ \sum_{a\in \tilde{A}}\hat{P}_{ajx}\ln \Psi_{ajx}( \alpha ,g,P) +\hat{P}_{0jx}\ln \left( 1-\sum_{a\in \tilde{ A}}\Psi_{ajx}( \alpha ,g,P) \right)\right] \right\}\notag\\
&=\sum_{(a,j,x) \in \tilde{A} \times J \times X}\frac{n_{x}}{n} \left[ \frac{\hat{P}_{ajx}}{\Psi_{ajx}( \alpha ,g,P) }-\frac{ \hat{P}_{0jx}}{\Psi_{0jx}( \alpha ,g,P) }\right] \frac{\partial \Psi_{ajx}( \alpha ,g,P) }{\partial \alpha } \notag\\
&=\sum_{(j,x) \in J\times X}\frac{n_{x}}{n}\left[ \sum_{a\in \tilde{A}}\frac{\hat{P}_{ajx}-\Psi_{ajx}( \alpha ,g,P) }{\Psi_{ajx}( \alpha ,g,P) }+\frac{\sum_{\tilde{a}\in \tilde{A}}( \hat{P}_{\tilde{a}jx}-\Psi_{\tilde{a}jx}( \alpha ,g,P) ) }{ \Psi_{0jx}( \alpha ,g,P) }\right] \frac{\partial \Psi_{ajx}( \alpha ,g,P) }{\partial \alpha }\notag \\
&=\sum_{(a,j,x) \in \tilde{A} \times J \times X}\frac{n_{x}}{n} \left[ \frac{\hat{P}_{ajx}-\Psi_{ajx}( \alpha ,g,P) }{\Psi_{ajx}( \alpha ,g,P) }\frac{\partial \Psi_{ajx}( \alpha ,g,P) }{\partial \alpha }+\frac{\hat{P}_{ajx}-\Psi_{ajx}( \alpha ,g,P) }{\Psi_{0jx}( \alpha ,g,P) }\sum_{\tilde{a}\in A} \frac{\partial \Psi_{\tilde{a}jx}( \alpha ,g,P) }{\partial \alpha }\right] ,\label{eq:derivationML}
\end{align}
where we have used that $\hat{P}_{0jx}=1-\sum_{a\in \tilde{A}}\hat{P}_{ajx}$ and $\Psi_{0jx}( \alpha ,g,P) =1-\sum_{a\in \tilde{A}}\Psi _{ajx}( \alpha ,g,P) $, and so $\partial \Psi_{0jx}( \alpha ,g,P) /\partial \alpha =-\sum_{a\in \tilde{A}}\partial \Psi _{ajx}( \alpha ,g,P) /\partial \alpha $. Then,
\begin{align*}
&\sqrt{n}\frac{\partial {\hat{Q}}_{PML}( \alpha ^{\ast },g^{\ast },P^{\ast }) }{\partial \alpha } =\sum_{(a,j,x) \in \tilde{A} \times J \times X} \frac{n_x}{n} \sqrt{n}( \hat{P}_{ajx}-P_{ajx}^{\ast }) \left[ \frac{1}{ P_{ajx}^{\ast }}\frac{\partial \Psi_{ajx}( \alpha ^{\ast },g^{\ast },P^{\ast }) }{\partial \alpha }+\frac{1}{P_{0jx}^{\ast }}\sum_{\tilde{ a}\in A}\frac{\partial \Psi_{\tilde{a}jx}( \alpha ^{\ast },g^{\ast },P^{\ast }) }{\partial \alpha }\right] \\
&=\sum_{(a,j,x) \in \tilde{A} \times J \times X}m^{\ast }( x) \sqrt{n}( \hat{P}_{ajx}-P_{ajx}^{\ast }) \left[ \frac{1}{ P_{ajx}^{\ast }}\frac{\partial \Psi_{ajx}( \alpha ^{\ast },g^{\ast },P^{\ast }) }{\partial \alpha }+\frac{1}{P_{0jx}^{\ast }}\sum_{\tilde{ a}\in A}\frac{\partial \Psi_{\tilde{a}jx}( \alpha ^{\ast },g^{\ast },P^{\ast }) }{\partial \alpha }\right] +o_{p}( 1) \\
&=\left[ 
\begin{array}{c}
\{ \frac{\partial \Psi_{ajx}( \alpha ^{\ast },g^{\ast },P^{\ast }) }{\partial \alpha }:( a,j,x) \in \tilde{A}\times J \times X\} ^{\prime }\times \\
diag\{ m^{\ast }( x) ( diag\{ 1/P_{ajx}^{\ast }:a\in \tilde{A}\} +\mathbf{1}_{|\tilde{A}| \times |\tilde{A}|}/P_{0jx}^{\ast }) :( j,x) \in J\times X\} \times \\ 
\{ \sqrt{n}( \hat{P}_{ajx}-P_{ajx}^{\ast }) :( a,j,x) \in \tilde{A}\times J\times X\}
\end{array}
\right] +o_{p}( 1) \\
&=\Psi_{\alpha }^{\prime }\Omega_{PP}^{-1}\sqrt{n}( \hat{P}-P^{\ast }) +o_{p}( 1) ,
\end{align*}
where the first equality holds by Eq.\ \eqref{eq:derivationML} and Assumption \ref{ass:Regularity}, the second equality holds by Assumption \ref{ass:iid}, and the final equality follows from Eq.\ \eqref{eq:InvOmegaPP}.

\underline{Part (d).} Consider the following derivation for any $( \alpha ,g,P) \in \Theta_{\alpha }\times \Theta_{g}\times \Theta_{P}$ s.t.\ $\Psi ( \alpha ,g,P) $ is positive and twice differentiable.
\begin{align}
\frac{\partial ^{2}{\hat{Q}}_{PML}(\alpha ,g,P)}{\partial \alpha \partial \lambda ^{\prime }} &=&\frac{\partial }{\partial \lambda ^{\prime }} \sum_{(a,j,x) \in \tilde{A} \times J\times X}\frac{n_{x}}{n}\left[ \frac{\hat{P}_{ajx}-\Psi_{ajx}( \alpha ,g,P) }{\Psi_{ajx}( \alpha ,g,P) }-\frac{\hat{P}_{a0x}-\Psi_{a0x}( \alpha ,g,P) }{\Psi_{a0x}( \alpha ,g,P) }\right] \frac{\partial \Psi_{ajx}( \alpha ,g,P) }{\partial \alpha } \notag \\
&=&\sum_{(a,j,x) \in \tilde{A} \times J \times X}\frac{n_{x}}{n}
\left\{ 
\begin{array}{c}
\left[ \frac{\hat{P}_{ajx}-\Psi_{ajx}( \alpha ,g,P) }{\Psi_{ajx}( \alpha ,g,P) }-\frac{\hat{P}_{a0x}-\Psi_{a0x}( \alpha ,g,P) }{\Psi_{a0x}( \alpha ,g,P) }\right] \frac{ \partial \Psi_{ajx}( \alpha ,g,P) }{\partial \alpha \partial \lambda ^{\prime }} \\
-\left[ \frac{\Psi_{ajx}( \alpha ,g,P) +( \hat{P}_{ajx}-\Psi_{ajx}( \alpha ,g,P) ) }{ \Psi_{ajx}( \alpha ,g,P) ^{2}}\right] \frac{\partial \Psi_{ajx}( \alpha ,g,P) }{\partial \alpha }\frac{\partial \Psi_{ajx}( \alpha ,g,P) }{\partial \lambda ^{\prime }} \\
+\left[ \frac{\Psi_{a0x}( \alpha ,g,P) +( \hat{P}_{a0x}-\Psi_{a0x}( \alpha ,g,P) ) }{\Psi_{a0x}( \alpha ,g,P) ^{2}}\right] \frac{\partial \Psi_{ajx}( \alpha ,g,P) }{\partial \alpha }\frac{\partial \Psi_{a0x}( \alpha ,g,P) }{\partial \lambda ^{\prime }}
\end{array}
\right\} .\label{eq:derivationML2}
\end{align}
Then, for any $\lambda \in \{ \alpha ,g,P\} $ and $( \tilde{ \alpha},\tilde{g},\tilde{P}) =( \alpha ^{\ast },g^{\ast },P^{\ast }) +o_{p}(1)$,
\begin{align*}
\frac{\partial ^{2}{\hat{Q}}_{PML}(\tilde{\alpha},\tilde{g},\tilde{P})}{ \partial \alpha \partial \lambda ^{\prime }} &\overset{p}{\to}\sum_{(a,j,x) \in \tilde{A} \times J \times X}m^{\ast }( x) \left[
\begin{array}{c}
\frac{1}{P_{0xj}^{\ast }}\frac{\partial \Psi_{ajx}( \alpha ^{\ast },g^{\ast },P^{\ast }) }{\partial \alpha }\frac{\partial \Psi_{a0x}( \alpha ^{\ast },g^{\ast },P^{\ast }) }{\partial \lambda ^{\prime }}- \frac{1}{P_{ajx}^{\ast }}\frac{\partial \Psi_{ajx}( \alpha ^{\ast },g^{\ast },P^{\ast }) }{\partial \alpha }\frac{\partial \Psi_{ajx}( \alpha ^{\ast },g^{\ast },P^{\ast }) }{\partial \lambda ^{\prime }}
\end{array}
\right] \\
&=-\sum_{(j,x) \in J\times X}m^{\ast }( x) \left[ 
\begin{array}{c}
\sum_{a\in \tilde{A}}\frac{1}{P_{ajx}^{\ast }}\frac{\partial \Psi_{ajx}( \alpha ^{\ast },g^{\ast },P^{\ast }) }{\partial \alpha } \frac{\partial \Psi_{ajx}( \alpha ^{\ast },g^{\ast },P^{\ast }) }{\partial \lambda ^{\prime }} +\\
\frac{1}{P_{0xj}^{\ast }}\sum_{\check{a}\in \tilde{A}}\frac{\partial \Psi_{\check{a}jx}( \alpha ^{\ast },g^{\ast },P^{\ast }) }{\partial \alpha }\sum_{\tilde{a}\in \tilde{A}}\frac{\partial \Psi_{\tilde{a}jx}( \alpha ^{\ast },g^{\ast },P^{\ast }) }{\partial \lambda ^{\prime }}
\end{array}
\right] \\
&=-\left[
\begin{array}{c}
\{ \frac{\partial \Psi_{ajx}( \alpha ^{\ast },g^{\ast },P^{\ast }) }{\partial \alpha }:( a,j,x) \in \tilde{A}\times J\times X \} ^{\prime }\times \\ 
diag\{ m^{\ast }( x) [ diag\{ 1/P_{ajx}^{\ast }:a\in \tilde{A}\} +\mathbf{1}_{|\tilde{A}| \times |\tilde{A}| }/P_{0jx}^{\ast }] :( j,x) \in J\times X \} \times \\
\{ \frac{\partial \Psi_{ajx}( \alpha ^{\ast },g^{\ast },P^{\ast }) }{\partial \lambda }:( a,j,x) \in \tilde{A}\times J\times X \}
\end{array}
\right]\\
&=-\Psi_{\alpha }^{\prime }\Omega_{PP}^{-1}\Psi_{\lambda }=\frac{ \partial ^{2}{Q}_{PML}( \alpha ^{\ast },g^{\ast },P^{\ast }) }{ \partial \alpha \partial \lambda ^{\prime }},
\end{align*}
where the first equality holds by Eq.\ \eqref{eq:derivationML2} and Assumptions \ref{ass:iid} and \ref{ass:Regularity}, the second equality holds by $\Psi_{a0x}( \alpha ,g,P) =1-\sum_{a\in \tilde{A}}\Psi_{ajx}( \alpha ,g,P) $ and so $\partial \Psi_{a0x}( \alpha ,g,P) /\partial \lambda ^{\prime }=-\sum_{a\in \tilde{A}}\partial \Psi_{ajx}( \alpha ,g,P) /\partial \lambda ^{\prime }$, and the final equality follows from Eq.\ \eqref{eq:InvOmegaPP}.

\underline{Part (e).} $\partial ^{2}{Q}_{PML}( \alpha ^{\ast },g^{\ast },P^{\ast }) /\partial \alpha \partial \alpha ^{\prime }=-\Psi_{\alpha }^{\prime }\Omega_{PP}^{-1}\Psi_{\alpha }$ is nonsingular because $\Psi_{\alpha }$ has full rank.

This completes the verification of Assumption \ref{ass:EE_HL}. Since we also assume Assumptions \ref{ass:Regularity} and \ref{ass:Baseline}, Theorem \ref{thm:2stepB} applies. In particular, Eq.\ \eqref{eq:Expansions_3} yields
\begin{align}
\sqrt{n}(\hat{\theta}_{K-PML}-\theta ^{\ast })
&=\left[ 
\begin{array}{ccc}
A_{K}+B_{K}\Upsilon_{K,P} & B_{K}\Upsilon_{K,0} & B_{K}\Upsilon
_{K,g}+C_{K} \\ 
\mathbf{0}_{d_{g}\times d_{P}} & \mathbf{0}_{d_{g}\times d_{P}} & \mathbf{I}
_{d_{g}}
\end{array}
\right] \sqrt{n}\left[ 
\begin{array}{c}
\hat{P}-P^{\ast } \\ 
\hat{P}_{0}-P^{\ast } \\ 
\hat{g}-g^{\ast }
\end{array}
\right] +o_{p}(1), \label{eq:KML_almost}
\end{align}
with $A_{K}$, $B_{K}$, and $C_{K}$ determined according to Eq.\ \eqref{eq:ABCdefns}, and $\Upsilon_{K,P}$, $\Upsilon_{K,0}$, and $\Upsilon_{K,g}$ determined according to Eq.\ \eqref{eq:Upsilondefns}. As a next step, we work out these constants.

For $k\leq K$, $\Xi_{k}=\Psi_{\alpha }^{\prime }\Omega_{PP}^{-1}$ and $\partial ^{2}{Q}_{PML}( \alpha ^{\ast },g^{\ast },P^{\ast }) /\partial \alpha \partial \lambda ^{\prime }=-\Psi_{\alpha }^{\prime }\Omega_{PP}^{-1}\Psi_{\lambda }$, and so, according to Eq.\ \eqref{eq:ABCdefns}, $A_{k}=( \Psi_{\alpha }^{\prime }\Omega_{PP}^{-1}\Psi_{\alpha }) ^{-1}\Psi_{\alpha }^{\prime }\Omega_{PP}^{-1}$, $B_{k}=-( \Psi_{\alpha }^{\prime }\Omega_{PP}^{-1}\Psi_{\alpha }) ^{-1}\Psi_{\alpha }^{\prime }\Omega_{PP}^{-1}\Psi_{P}$, and $C_{k}=-( \Psi_{\alpha }^{\prime }\Omega_{PP}^{-1}\Psi_{\alpha }) ^{-1}\Psi_{\alpha }^{\prime }\Omega_{PP}^{-1}\Psi_{g}$. In addition, according to Eq.\ \eqref{eq:Upsilondefns},  $\{ \Upsilon_{k,P}:k\leq K\} $, $\{ \Upsilon_{k,g}:k\leq K\} $, and $ \{ \Upsilon_{k,0}:k\leq K\} $ are as follows. Set $\Upsilon_{1,0}\equiv \mathbf{I}_{d_{P}}$, $\Upsilon_{1,g}\equiv \mathbf{0}_{d_{P}\times d_{g}}$, $\Upsilon_{1,P}\equiv \mathbf{0}_{d_{P}\times d_{P}}$ and, for any $k=1,\ldots ,K-1$,
\begin{align*}
\Upsilon_{k+1,P} &=( \mathbf{I}_{d_{P}}-\Psi_{\alpha }( \Psi_{\alpha }^{\prime }\Omega_{PP}^{-1}\Psi_{\alpha }) ^{-1}\Psi_{\alpha }^{\prime }\Omega_{PP}^{-1}) \Psi_{P}\Upsilon_{k,P}+\Psi_{\alpha }( \Psi_{\alpha }^{\prime }\Omega_{PP}^{-1}\Psi_{\alpha }) ^{-1}\Psi_{\alpha }^{\prime }\Omega_{PP}^{-1} \\
\Upsilon_{k+1,0} &=( \mathbf{I}_{d_{P}}-\Psi_{\alpha }( \Psi_{\alpha }^{\prime }\Omega_{PP}^{-1}\Psi_{\alpha }) ^{-1}\Psi_{\alpha }^{\prime }\Omega_{PP}^{-1}) \Psi_{P}\Upsilon_{k,0}\\
\Upsilon_{k+1,g} &=( \mathbf{I}_{d_{P}}-\Psi_{\alpha }( \Psi_{\alpha }^{\prime }\Omega_{PP}^{-1}\Psi_{\alpha }) ^{-1}\Psi_{\alpha }^{\prime }\Omega_{PP}^{-1}) \Psi_{P}\Upsilon_{k,g}+( \mathbf{I}_{d_{P}}-\Psi_{\alpha }( \Psi_{\alpha }^{\prime }\Omega_{PP}^{-1}\Psi_{\alpha }) ^{-1}\Psi_{\alpha }^{\prime }\Omega_{PP}^{-1}) \Psi_{g}.
\end{align*}
Then, Eq.\ \eqref{eq:coefficients_ML} follows from setting $\Upsilon_{k,P}\equiv \Phi_{k,P}$, and $\Upsilon_{k,g}\equiv\Phi_{k,g}\Psi_{g}$, $\Upsilon_{k,0}\equiv\Phi_{k,0}$ for all $k\leq K$.

If we plug this information into Eq.\ \eqref{eq:KML_almost} and combine with Assumption \ref{ass:Baseline}, we deduce that
\begin{equation}
\sqrt{n}( \hat{\theta}_{K-PML}-\theta ^{\ast }) ~=~\left( 
\begin{array}{c}
\sqrt{n}( \hat{\alpha}_{K-PML}-\alpha ^{\ast })  \\ 
\sqrt{n}( \hat{g}-g^{\ast }) 
\end{array}
\right) ~\overset{d}{\to}~N\left( \left( 
\begin{array}{c}
\mathbf{0}_{d_{\alpha }} \\ 
\mathbf{0}_{d_{g}}
\end{array}
\right) ,\left( 
\begin{array}{cc}
\Sigma_{K-PML}( \hat{P}_{0})  & \Sigma_{\alpha g,K-PML} \\ 
\Sigma_{\alpha g,K-PML}^{\prime } & \Omega_{gg}
\end{array}
\right) \right) ,  \label{eq:JointML}
\end{equation}
where $\Sigma_{K-PML}( \hat{P}_{0})$ is as defined in Theorem \ref{thm:ML} and 
\begin{equation*}
	\Sigma_{\alpha g,K-PML}~\equiv~( \Psi_{\alpha }^{\prime }\Omega_{PP}^{-1}\Psi_{\alpha }) ^{-1}\Psi_{\alpha }^{\prime }\Omega_{PP}^{-1}[ ( \mathbf{I}_{d_{P}}-\Psi_{P}\Phi_{K,P}) \Omega_{Pg}-\Psi_{P}\Phi_{K,0}\Omega_{0g} -( \Psi_{P}\Phi_{K,g}+\mathbf{I}_{d_{P}}) \Psi_{g}\Omega_{gg} ] .
\end{equation*}
The desired result is a corollary of Eq.\ \eqref{eq:JointML}.
\end{proof}


\begin{proof}[Proof of Theorem \ref{thm:MD}]
	This result is a consequence of applying Theorem \ref{thm:2stepB} with $\hat{Q}_{k}\equiv {\hat{Q}}_{k-MD}$ and ${Q}_{k}\equiv {Q}_{k-MD}$ where, for any $ ( \alpha ,g,P) \in \Theta_{\alpha }\times \Theta_{g}\times \Theta_{P}$,
\begin{equation*}
{Q}_{k-MD}(\alpha ,g,P)~\equiv~ -( P^{\ast }-\Psi ( \alpha ,g,P) ) ^{\prime }W_{k}( P^{\ast }-\Psi ( \alpha ,g,P) ) .
\end{equation*}
For any $\lambda \in \{ \alpha ,g,P\} $, notice that ${\partial ^{2}{Q}_{k-MD}( \alpha ^{\ast },g^{\ast },P^{\ast }) }/{\partial \alpha \partial \lambda ^{\prime }}=-2\Psi_{\alpha }^{\prime }W_{k}\Psi_{\lambda }$. 

To apply this result, we first verify Assumption \ref{ass:EE_HL}.

\underline{Part (a).} For any $( \tilde{g},\tilde{P}) =( g^{\ast },P^{\ast }) +o_{p}(1)$,
\begin{align*}
&\sup_{\alpha \in \Theta_{\alpha }}\vert {\hat{Q}}_{k-MD}(\alpha , \tilde{g},\tilde{P})-{Q}_{k-MD}(\alpha ,g^{\ast },P^{\ast })\vert\leq \left[
\begin{array}{c}
\Vert \hat{W}_{k}-W_{k}\Vert +\Vert \tilde{P}-P^{\ast }\Vert ^{2}\Vert W_{k}\Vert +2\Vert \tilde{P}-P^{\ast }\Vert \Vert W_{k}\Vert \\
+2\Vert W_{k}\Vert \sup_{\alpha \in \Theta_{\alpha }}\Vert \Psi ( \alpha ,g^{\ast },P^{\ast }) -\Psi ( \alpha ,\tilde{g} ,\tilde{P}) \Vert
\end{array}
\right] =o_{p}(1),
\end{align*}
where the last equality uses Assumption \ref{ass:Weight}.

\underline{Part (b).} First, consider $\alpha =\alpha ^{\ast }$. Then, Assumption \ref{ass:Identification} implies $\Psi ( \alpha ,g^{\ast },P^{\ast }) =P^{\ast }$, and so ${Q}_{k-MD}(\alpha ^{\ast },g^{\ast },P^{\ast })=0$. Second, consider $\alpha \neq \alpha ^{\ast }$.  Then, Assumption \ref{ass:Identification} implies $\Psi ( \alpha ,g^{\ast },P^{\ast }) \neq P^{\ast }$. This and Assumption \ref{ass:Weight} imply that ${Q}_{k-MD}(\alpha ,g^{\ast },P^{\ast })<0$. Then, ${Q}_{k-MD}(\alpha ,g^{\ast },P^{\ast })$ is uniquely maximized at $\alpha =\alpha ^{\ast }$, as required.

\underline{Part (c).} Consider the following derivation: 
\begin{align*}
\sqrt{n}\frac{\partial {\hat{Q}}_{k-MD}( \alpha ^{\ast },g^{\ast },P^{\ast }) }{\partial \alpha } = 2( \hat{P}-\Psi ( \alpha ^{\ast },g^{\ast },P^{\ast }) ) ^{\prime }\hat{W}_{k}\frac{ \partial \Psi ( \alpha ^{\ast },g^{\ast },P^{\ast }) }{\partial \alpha } = 2\Psi_{\alpha }^{\prime }W_{k}\sqrt{n}( \hat{P}-P^{\ast }) +o_{p}( 1) ,
\end{align*}
where the second equality uses Assumptions \ref{ass:Regularity} and \ref{ass:Weight}.

\underline{Part (d).} For any $\lambda \in \{ \alpha ,g,P\} $ and $( \tilde{\alpha},\tilde{g},\tilde{P}) =( \alpha ^{\ast },g^{\ast },P^{\ast }) +o_{p}(1)$,
\begin{eqnarray*}
\frac{\partial ^{2}{\hat{Q}}_{k-MD}(\tilde{\alpha},\tilde{g},\tilde{P})}{ \partial \alpha \partial \lambda ^{\prime }} &=&-2\sqrt{n}\frac{\partial \Psi ( \tilde{\alpha},\tilde{g},\tilde{P}) ^{\prime }}{\partial \lambda ^{\prime }}\hat{W}_{k}\frac{\partial \Psi ( \tilde{\alpha},\tilde{g}, \tilde{P}) }{\partial \alpha }+2( \hat{P}-\Psi ( \tilde{ \alpha},\tilde{g},\tilde{P}) ) ^{\prime }\hat{W}_{k}\frac{ \partial \Psi ( \tilde{\alpha},\tilde{g},\tilde{P}) }{\partial \alpha \partial \lambda ^{\prime }} \\
&\overset{p}{\to}&\frac{\partial ^{2}{Q}_{k-MD}( \alpha ^{\ast },g^{\ast },P^{\ast }) }{\partial \alpha \partial \lambda ^{\prime }} =-2\Psi_{\alpha }^{\prime }W_{k}\Psi_{\lambda },
\end{eqnarray*}
where the convergence uses Assumptions \ref{ass:Regularity} and \ref{ass:Weight}.

\underline{Part (e).} $\partial ^{2}{Q}_{K-MD}( \alpha ^{\ast },g^{\ast },P^{\ast }) /\partial \alpha \partial \alpha ^{\prime }=-2\Psi_{\alpha }^{\prime }W_{k}\Psi_{\alpha }$ is nonsingular by Assumptions \ref{ass:Regularity} and \ref{ass:Weight}.

This completes the verification of Assumption \ref{ass:EE_HL}. Since we also assume Assumptions \ref{ass:Regularity} and \ref{ass:Baseline}, Theorem \ref{thm:2stepB} applies. In particular, Eq.\ \eqref{eq:Expansions_3} yields:
\begin{align}
\sqrt{n}(\hat{\theta}_{K-MD}-\theta ^{\ast })
&=\left[ 
\begin{array}{ccc}
A_{K}+B_{K}\Upsilon_{K,P} & B_{K}\Upsilon_{K,0} & B_{K}\Upsilon
_{K,g}+C_{K} \\ 
\mathbf{0}_{d_{g}\times d_{P}} & \mathbf{0}_{d_{g}\times d_{P}} & \mathbf{I}
_{d_{g}}
\end{array}
\right] \sqrt{n}\left[ 
\begin{array}{c}
\hat{P}-P^{\ast } \\ 
\hat{P}_{0}-P^{\ast } \\ 
\hat{g}-g^{\ast }
\end{array}
\right] +o_{p}(1), \label{eq:KMD_almost}
\end{align}
with $A_{K}$, $B_{K}$, and $C_{K}$ determined according to Eq.\ \eqref{eq:ABCdefns}, and $\Upsilon_{K,P}$, $\Upsilon_{K,0}$, and $\Upsilon_{K,g}$ determined according to Eq.\ \eqref{eq:Upsilondefns}. As a next step, we work out these constants.

For $k\leq K$, $\Xi_{k}=2\Psi_{\alpha }^{\prime }W_{k}$ and $\partial ^{2}{Q}_{k-MD}( \alpha ^{\ast },g^{\ast },P^{\ast }) /\partial \alpha \partial \lambda ^{\prime }=-2\Psi_{\alpha }^{\prime }W_{k}\Psi_{\lambda }$, and so $A_{k}=( \Psi_{\alpha }^{\prime }W_{k}\Psi_{\alpha }) ^{-1}\Psi_{\alpha }^{\prime }W_{k}$, $B_{k}=-( \Psi_{\alpha }^{\prime }W_{k}\Psi_{\alpha }) ^{-1}\Psi_{\alpha }^{\prime }W_{k}\Psi_{P}$, and $ C_{k}=-( \Psi_{\alpha }^{\prime }W_{k}\Psi_{\alpha }) ^{-1}\Psi_{\alpha }^{\prime }W_{k}\Psi_{g}$. Then, $\{ \Upsilon_{k,P}:k\leq K\} $, $\{ \Upsilon_{k,g}:k\leq K\} $, and $\{ \Upsilon_{k,0}:k\leq K\} $ are as follows. Set $\Upsilon_{1,0}\equiv \mathbf{I}_{d_{P}}$, $\Upsilon_{1,g}\equiv \mathbf{0}_{d_{P}\times d_{g}}$, $\Upsilon_{1,P}\equiv \mathbf{0}_{d_{P}\times d_{P}}$ and, for any $ k=1,\ldots ,K-1$,
\begin{align*}
\Upsilon_{k+1,P} &=( \mathbf{I}_{d_{P}}-\Psi_{\alpha }( \Psi_{\alpha }^{\prime }W_{k}\Psi_{\alpha }) ^{-1}\Psi_{\alpha }^{\prime }W_{k}) \Psi_{P}\Upsilon_{k,P}+\Psi_{\alpha }( \Psi_{\alpha }^{\prime }W_{k}\Psi_{\alpha }) ^{-1}\Psi_{\alpha }^{\prime }W_{k} \\
\Upsilon_{k+1,0} &=( \mathbf{I}_{d_{P}}-\Psi_{\alpha }( \Psi_{\alpha }^{\prime }W_{k}\Psi_{\alpha }) ^{-1}\Psi_{\alpha }^{\prime }W_{k}) \Psi_{P}\Upsilon_{k,0}\\
\Upsilon_{k+1,g} &=( \mathbf{I}_{d_{P}}-\Psi_{\alpha }( \Psi_{\alpha }^{\prime }W_{k}\Psi_{\alpha }) ^{-1}\Psi_{\alpha }^{\prime }W_{k}) \Psi_{P}\Upsilon_{k,g}+( \mathbf{I}_{d_{P}}-\Psi_{\alpha }( \Psi_{\alpha }^{\prime }W_{k}\Psi_{\alpha }) ^{-1}\Psi_{\alpha }^{\prime }W_{k}) \Psi_{g}.
\end{align*}
Then, Eq.\ \eqref{eq:coefficients_MD} follows from setting $\Upsilon_{k,P}\equiv \Phi_{k,P}$, and $\Upsilon_{k,g}\equiv\Phi_{k,g}\Psi_{g}$, $\Upsilon_{k,0}\equiv\Phi_{k,0}$ for all $k\leq K$.

If we plug this information into Eq.\ \eqref{eq:KMD_almost} and combine with Assumption \ref{ass:Baseline}, we deduce that:
\begin{equation}
\sqrt{n}( \hat{\theta}_{K-MD}-\theta ^{\ast }) ~=~\left( 
\begin{array}{c}
\sqrt{n}( \hat{\alpha}_{K-MD}-\alpha ^{\ast })  \\ 
\sqrt{n}( \hat{g}-g^{\ast }) 
\end{array}
\right) ~\overset{d}{\to}~N\left( \left( 
\begin{array}{c}
\mathbf{0}_{d_{\alpha }\times 1} \\ 
\mathbf{0}_{d_{g}\times 1}
\end{array}
\right) ,\left( 
\begin{array}{cc}
\Sigma_{K-MD}( \hat{P}_{0},\{ W_{k}:k\leq K\} ) & \Sigma_{\alpha g,K-MD} \\ 
\Sigma_{\alpha g,K-MD}^{\prime } & \Omega_{gg}
\end{array}
\right) \right) ,  \label{eq:JointMD}
\end{equation}
where $\Sigma_{K-MD}( \hat{P}_{0},\{ W_{k}:k\leq K\} )$ is as defined in Theorem \ref{thm:MD} and 
\begin{equation*}
	\Sigma_{\alpha g,K-MD}~\equiv~( \Psi_{\alpha }^{\prime }W_{K}\Psi_{\alpha }) ^{-1}\Psi_{\alpha }^{\prime }W_{K}[ ( \mathbf{I}_{d_{P}}-\Psi_{P}\Phi_{K,P}) \Omega_{Pg}-\Psi_{P}\Phi_{K,0}\Omega_{0g} -( \Psi_{P}\Phi_{K,g}+\mathbf{I}_{d_{P}}) \Psi_{g}\Omega_{gg} ] .
\end{equation*}
The desired result is a corollary of Eq.\ \eqref{eq:JointMD}.
\end{proof}


\begin{proof}[Proof of Theorem \ref{thm:MD_K1}]
	The asymptotic distribution of the $1$-MD estimator follows from Theorem \ref{thm:MD}. The proof is completed by showing that $\Sigma_{1-MD}(\hat{P}_{0},W_{1})-\Sigma_{1-MD}(\tilde{P},W_{1}^*)$ is PSD and that Eq.\ \eqref{eq:Optimal_AVar_MD} holds.
	
	By combining Theorem \ref{thm:MD} and Eq.\ \eqref{eq:asyequiv_P0}, we obtain the following derivation.
	\begin{align}
	&\Sigma_{1-MD}(\tilde{P},W_{1}) \notag \\
	&=(\Psi_{\alpha }^{\prime }W_{1}\Psi_{\alpha })^{-1}\Psi_{\alpha }^{\prime }W_{1}\left[ \left(
	\begin{array}{ccc}
	\mathbf{I}_{d_{P}} & -\Psi_{P} & -\Psi_{g}
	\end{array}
	\right) \left(
	\begin{array}{ccc}
	\Omega_{PP} & \Omega_{PP} & \Omega_{Pg} \\
	\Omega_{PP} & \Omega_{PP} & \Omega_{Pg} \\
	\Omega_{Pg}^{\prime } & \Omega_{Pg}^{\prime } & \Omega_{gg}
	\end{array}
	\right) \left(
	\begin{array}{c}
	\mathbf{I}_{d_{P}} \\
	-\Psi_{P}^{\prime } \\
	-\Psi_{g}^{\prime }
	\end{array}
	\right) \right] W_{1}^{\prime }\Psi_{\alpha }(\Psi_{\alpha }^{\prime }W_{1}^{\prime }\Psi_{\alpha })^{-1} \label{eq:MD_K1_0} \\
	& = (\Psi_{\alpha }^{\prime }W_{1}\Psi_{\alpha })^{-1}\Psi_{\alpha }^{\prime }W_{1}\left[ \left(
	\begin{array}{cc}
	\mathbf{I}_{d_{P}}-\Psi_{P} & -\Psi_{g}
	\end{array}
	\right) \left(
	\begin{array}{cc}
	\Omega_{PP} & \Omega_{Pg} \\
	\Omega_{Pg}^{\prime } & \Omega_{gg}
	\end{array}
	\right) \left(
	\begin{array}{c}
	\mathbf{I}_{d_{P}}-\Psi_{P}^{\prime } \\
	-\Psi_{g}^{\prime }
	\end{array}
	\right) \right] W_{1}^{\prime }\Psi_{\alpha }(\Psi_{\alpha }^{\prime }W_{1}^{\prime }\Psi_{\alpha })^{-1}.\label{eq:MD_K1_2}
	\end{align}
	
	By Eq.\ \eqref{eq:MD_K1_0} and Lemma \ref{lem:PSD},
	\begin{equation}
	\Sigma_{1-MD}(\hat{P}_{0},W_{1})-\Sigma_{1-MD}(\tilde{P},W_{1})\text{ is PSD.} \label{eq:MD_K1_1}
	\end{equation}
	By Assumptions \ref{ass:Regularity} and \ref{ass:Baseline2}, the expression in brackets in RHS of Eq.\ \eqref{eq:MD_K1_2} is non-singular. Then, standard arguments in GMM estimation (e.g.\ \citealp[page 2165]{mcfadden/newey:1994}) imply that $ W_{1}^{\ast }$ in Eq.\ \eqref{eq:W1_optimal} is efficient. Therefore,
	\begin{equation}
	\Sigma_{1-MD}(\tilde{P},W_{1})-\Sigma_{1-MD}(\tilde{P},W_{1}^{\ast })\text{ is PSD.} \label{eq:MD_K1_3}
	\end{equation}

	By combining Eqs.\ \eqref{eq:MD_K1_1} and \eqref{eq:MD_K1_3}, we conclude that $ \Sigma_{1-MD}(\hat{P}_{0},W_{1})-\Sigma_{1-MD}(\tilde{P},W_{1}^{\ast })$ is PSD, as desired. Finally, Eq.\ \eqref{eq:Optimal_AVar_MD} follows from plugging in $ W_{1}^{\ast }$ in Eq.\ \eqref{eq:W1_optimal} in Eq.\ \eqref{eq:MD_K1_2}.
\end{proof}


\begin{proof}[Proof of Theorem \ref{thm:MD_main}] As in the statement of \underline{optimality}, let $\hat{\alpha}_{K-MD}$ denote the $K$-MD estimator with arbitrary initial CCP estimator $\hat{P}_{0}$ and weight matrices $\{ W_{k}:k\leq K\} $.
By Theorem \ref{thm:MD}, $\sqrt{n}( \hat{\alpha}_{K-MD}-\alpha ^{\ast }) \overset{d}{\to} N( \mathbf{0}_{d_{\alpha }},\Sigma_{K-MD}( \hat{P}_{0},\{ W_{k}:k\leq K\} ) ) $. By combining Eq.\ \eqref{eq:asyequiv_P0} and Lemma \ref{lem:PSD},
\begin{align}
	\Sigma_{K-MD}( \hat{P}_{0},\{ W_{k}:k\leq K\} )  - \Sigma_{K-MD}( \tilde{P},\{ W_{k}:k\leq K\} )~~~ \text{is PSD.}
	\label{eq:MD_main_1}
\end{align}

As in the statement of \underline{invariance}, let $\hat{\alpha}^{*}_{K-MD}$ denote the $K$-MD estimator with initial CCP estimator $\tilde{P}$ and weight matrices $\{ W_{k}:k\leq K-1\} $ for steps $1,\ldots ,K-1$ (if $K>1$), and the corresponding optimal weight matrix in step $K$. By Theorem \ref{thm:MD}, $\sqrt{n}( \hat{\alpha}^*_{K-MD}-\alpha ^{\ast }) \overset{d}{\to} N( \mathbf{0}_{d_{\alpha }},\Sigma_{K-MD}( \tilde{P},\{\{ W_{k}:k\leq K-1\},W_{K}^{*}\} ) ) $. By definition of an optimal choice of $W_{K}$,
\begin{align}
	\Sigma_{K-MD}(\tilde{P},\{\{ W_{k}:k\leq K\}\}) - \Sigma_{K-MD}( \tilde{P},\{\{ W_{k}:k\leq K-1\},W_{K}^{*}\})~~~\text{is PSD.}
	\label{eq:MD_main_2}
\end{align}

As a next step, we provide an explicit formula for $W_{K}^{\ast }$ and we compute the resulting asymptotic variance $\Sigma _{K-MD}(\tilde{P} ,\{\{W_{k}:k\leq K-1\},W_{K}^{\ast }\})$. To this end, consider the following derivation.
\begin{align}
&\Sigma _{K-MD}(\tilde{P},\{W_{k}:k\leq K\}) \notag \\
&=\left\{ 
\begin{array}{c}
(\Psi _{\alpha }^{\prime }W_{K}\Psi _{\alpha })^{-1}\Psi _{\alpha }^{\prime
}W_{K}\times \\ 
\left[ 
\begin{array}{c}
(\mathbf{I}_{d_{P}}-\Psi _{P}\Phi _{K,P})^{\prime } \\ 
-(\Psi _{P}\Phi _{K,0})^{\prime } \\ 
-((\mathbf{I}_{d_{P}}+\Psi _{P}\Phi _{K,g})\Psi _{g})^{\prime }
\end{array}
\right] ^{\prime }\left( 
\begin{array}{ccc}
\Omega _{PP} & \Omega _{PP} & \Omega _{Pg} \\ 
\Omega _{PP}^{\prime } & \Omega _{PP} & \Omega _{Pg} \\ 
\Omega _{Pg}^{\prime } & \Omega _{Pg}^{\prime } & \Omega _{gg}
\end{array}
\right) \left[ 
\begin{array}{c}
(\mathbf{I}_{d_{P}}-\Psi _{P}\Phi _{K,P})^{\prime } \\ 
-(\Psi _{P}\Phi _{K,0})^{\prime } \\ 
-((\mathbf{I}_{d_{P}}+\Psi _{P}\Phi _{K,g})\Psi _{g})^{\prime }
\end{array}
\right] \label{eq:deriv_1} \\ 
\times W_{K}^{\prime }\Psi _{\alpha }(\Psi _{\alpha }^{\prime }W_{K}^{\prime
}\Psi _{\alpha })^{-1}
\end{array}
\right\} \\
&=(\Psi _{\alpha }^{\prime }W_{K}\Psi _{\alpha })^{-1}\Psi _{\alpha }^{\prime }W_{K}\Delta _{K}W_{K}^{\prime }\Psi _{\alpha }(\Psi _{\alpha }^{\prime }W_{K}^{\prime }\Psi _{\alpha })^{-1},\label{eq:deriv_2}
\end{align}
where Eq.\ \eqref{eq:deriv_1} follows from Eq.\ \eqref{eq:asyequiv_P0} and Eq.\ \eqref{eq:deriv_2} follows from defining
\begin{equation}
\Delta _{K}~\equiv~ \left[ 
\begin{array}{c}
(\mathbf{I}_{d_{P}}-\Psi _{P}\Phi _{K,P0})^{\prime } \\ 
-((\mathbf{I}_{d_{P}}+\Psi _{P}\Phi _{K,g})\Psi _{g})^{\prime }
\end{array}
\right] ^{\prime }\left( 
\begin{array}{cc}
\Omega _{PP} & \Omega _{Pg} \\ 
\Omega _{Pg}^{\prime } & \Omega _{gg}
\end{array}
\right) \left[ 
\begin{array}{c}
(\mathbf{I}_{d_{P}}-\Psi _{P}\Phi _{K,P0})^{\prime } \\ 
-((\mathbf{I}_{d_{P}}+\Psi _{P}\Phi _{K,g})\Psi _{g})^{\prime }
\end{array}
\right]  \label{eq:Delta_K}
\end{equation}
and $\{\Phi _{k,P0}:k\leq K\}$ is defined by $\Phi _{k,P0}\equiv \Phi_{k,P}+\Phi _{k,0}$ for $k\leq K$. Under the assumption that $\Delta _{K}$ is non-singular, standard arguments in GMM (e.g.\ \citet[Theorem 5.2]{mcfadden/newey:1994}) imply that $W_{K}^{\ast } =\Delta _{K}^{-1}$, resulting in an (optimal) asymptotic variance $\Sigma _{K-MD}(\hat{P} ,\{\{W_{k}:k\leq K-1\},W_{K}^{\ast }\})=(\Psi _{\alpha }^{\prime }\Delta _{K}^{-1}\Psi _{\alpha })^{-1}$. Note that the statement assumes that $\Delta _{K}$ is non-singular. For completeness, we now provide conditions under which this occurs. For any $K\geq 1$, some algebra shows that 
\begin{eqnarray}
	 [\begin{array}{cc}
	 \mathbf{I}_{d_{P}}-\Psi _{P}\Phi _{K,P0}& (\mathbf{I}_{d_{P}}+\Psi _{P}\Phi _{K,g}) \Psi _{g})
	 \end{array}
	 ] =
	 \Lambda _{K}
	 [\begin{array}{cc}
	 \mathbf{I }_{d_{P}}-\Psi _{P}&\Psi _{g}
	 \end{array}
	 ].
	 \label{eq:Seq_P0_and_G}
\end{eqnarray}
with $\Lambda _{K}~\equiv~ \sum_{i=0}^{K-1}( \Psi _{P}( \mathbf{I} _{d_{P}}-\Psi _{\alpha }(\Psi _{\alpha }^{\prime }\Omega _{PP}^{-1}\Psi _{\alpha })^{-1}\Psi _{\alpha }^{\prime }\Omega _{PP}^{-1}) ) ^{i}$. Therefore,
\begin{equation*}
\Delta _{K}=\Lambda _{K}\left\{ \left[ 
\begin{array}{c}
( \mathbf{I}_{d_{P}}-\Psi _{P}) ^{\prime } \\ 
-\Psi _{g}^{\prime }
\end{array}
\right] ^{\prime }\left( 
\begin{array}{cc}
\Omega _{PP} & \Omega _{Pg} \\ 
\Omega _{Pg}^{\prime } & \Omega _{gg}
\end{array}
\right) \left[ 
\begin{array}{c}
( \mathbf{I}_{d_{P}}-\Psi _{P}) ^{\prime } \\ 
-\Psi _{g}^{\prime }
\end{array}
\right] \right\} \Lambda _{K}^{\prime }.
\end{equation*}
By Assumptions \ref{ass:Regularity} and \ref{ass:Baseline2}, the matrix in braces in the last display is non-singular. Then, \citet[Proposition 2.7.3 and Corollary 2.7.6]{bernstein:2009} imply that $\Delta _{K}$ is non-singular if and only if $\Lambda _{K}$ is non-singular. This condition holds automatically in single-agent problems, and has been numerically verified in our simulations in Section \ref{sec:MCsimulations}.

Since the choice of $\{W_{k}:k\leq K-1\}$ was completely arbitrary, the proof of \underline{invariance} follows from showing that
\begin{align}
	\Psi_{\alpha }^{\prime }\Delta_{K}^{-1}\Psi_{\alpha } =(\Sigma ^{\ast })^{-1},
	\label{eq:Goal}
\end{align}
where the non-singularity of $\Sigma ^{\ast } $ follows from the non-singularity of $\Delta_{K}^{-1}$ and Assumption \ref{ass:Regularity}. To this end, define the following matrices
\begin{align}
A_{K} &\equiv ( \mathbf{I}_{d_{P}}-\Psi_{P}\Phi_{K,P0}) \Omega_{PP}( \mathbf{I}_{d_{P}}-\Psi_{P}\Phi_{K,P0}) ^{\prime } \notag \\
B_{K} &\equiv \left[ 
\begin{array}{c}
( \mathbf{I}_{d_{P}}+\Psi_{P}\Phi_{K,g}) \Psi_{g}\Omega_{gg}\Psi_{g}^{\prime }( \mathbf{I}_{d_{P}}+\Psi_{P}\Phi_{K,g}) ^{\prime } \\
-( \mathbf{I}_{d_{P}}+\Psi_{P}\Phi_{K,g}) \Psi_{g}\Omega_{Pg}^{\prime }( \mathbf{I}_{d_{P}}-\Psi_{P}\Phi_{K,P0}) ^{\prime }  \\
-( \mathbf{I}_{d_{P}}-\Psi_{P}\Phi_{K,P0}) \Omega_{Pg}\Psi_{g}^{\prime }( \mathbf{I}_{d_{P}}+\Psi_{P}\Phi_{K,g}) ^{\prime }
\end{array}
\right]  \notag \\
C_{K} &\equiv ( \mathbf{I}_{d_{P}}-\Psi_{P}\Phi_{K,P0}) ^{-1}B_{K}( \mathbf{I}_{d_{P}}-\Psi_{P}\Phi_{K,P0})'^{-1}, \label{eq:MatrixDefns}
\end{align}
where we have used that $( \mathbf{I}_{d_{P}}-\Psi_{P}\Phi_{K,P0}) $ is non-singular, which follows from Eq.\ \eqref{eq:Seq_P0_and_G}, Assumption \ref{ass:Regularity}, and that $\Lambda_{K}$ is non-singular. In turn, this implies that $A_{K}$ is non-singular and, in fact,
\begin{equation}
	A_{K}^{-1}=( \mathbf{I}_{d_{P}}-\Psi_{P}\Phi_{K,P0})'^{-1}\Omega_{PP}^{-1}( \mathbf{I}_{d_{P}}-\Psi_{P}\Phi_{K,P0}) ^{-1}.
	\label{eq:InverseOfA}
\end{equation}

The following derivation proves that $C_{K}=C_{1}$.
\begin{eqnarray*}
C_{K} &=&\left[ 
\begin{array}{c}
( \mathbf{I}_{d_{P}}-\Psi_{P}\Phi_{K,P0}) ^{-1}( \mathbf{I}_{d_{P}}+\Psi_{P}\Phi_{K,g}) \Psi_{g}\Omega_{gg}\Psi_{g}^{\prime }( \mathbf{I}_{d_{P}}+\Psi_{P}\Phi_{K,g}) ^{\prime }( \mathbf{I}_{d_{P}}-\Psi_{P}\Phi_{K,P0})'^{-1}\\
-( \mathbf{I}_{d_{P}}-\Psi_{P}\Phi_{K,P0}) ^{-1}( \mathbf{I }_{d_{P}}+\Psi_{P}\Phi_{K,g}) \Psi_{g}\Omega_{Pg}^{\prime } -\Omega_{Pg}\Psi_{g}^{\prime }( \mathbf{I}_{d_{P}}+\Psi_{P}\Phi_{K,g}) ^{\prime }( \mathbf{I}_{d_{P}}-\Psi_{P}\Phi_{K,P0})'^{-1}
\end{array}
\right] \\
&=&\left[ 
\begin{array}{c}
	( \mathbf{I}_{d_{P}}-\Psi_{P}) ^{-1}\Psi_{g}\Omega_{gg}\Psi_{g}^{\prime }( \mathbf{I}_{d_{P}}-\Psi_{P})'^{-1}\\
-( \mathbf{I}_{d_{P}}-\Psi_{P}) ^{-1}\Psi_{g}\Omega_{Pg}^{\prime }-\Omega_{Pg}\Psi_{g}^{\prime }( \mathbf{I}_{d_{P}}-\Psi_{P})'^{-1}
\end{array}
\right] =C_{1}.
\end{eqnarray*}
where the first equality uses Eq.\ \eqref{eq:MatrixDefns}, the second equality uses Lemma \ref{lem:Key_aux_MD}(b), and the final equality holds by Eq.\ \eqref{eq:MatrixDefns} with $K=1$.

Finally, Eq.\ \eqref{eq:Goal} follows immediately from the next derivation.
\begin{eqnarray*}
\Psi_{\alpha }^{\prime }\Delta_{K}^{-1}\Psi_{\alpha } &=&\Psi_{\alpha }^{\prime }( A_{K}+B_{K}) ^{-1}\Psi_{\alpha } \\
&=&\Psi_{\alpha }^{\prime }( A_{K}^{-1}-( \mathbf{I}_{d_{P}}+A_{K}^{-1}B_{K}) ^{-1}A_{K}^{-1}B_{K}A_{K}^{-1}) \Psi_{\alpha } \\
&=&\Psi_{\alpha }^{\prime }( \mathbf{I}_{d_{P}}-\Psi_{P}\Phi_{K,P0})'^{-1}\Omega_{PP}^{-1}\{ \Omega_{PP}-\Omega_{PP}( \mathbf{I}_{d_{P}}+\Omega_{PP}^{-1}C_{K}) ^{-1}\Omega_{PP}^{-1}C_{K}\} \Omega_{PP}^{-1}( \mathbf{I}_{d_{P}}-\Psi_{P}\Phi_{K,P0}) ^{-1}\Psi_{\alpha } \\
&=&\Psi_{\alpha }^{\prime }( \mathbf{I}_{d_{P}}-\Psi_{P})'^{-1}\Omega_{PP}^{-1}\{ \Omega_{PP}-\Omega_{PP}( \mathbf{ I}_{d_{P}}+\Omega_{PP}^{-1}C_{1}) ^{-1}\Omega_{PP}^{-1}C_{1}\} \Omega_{PP}^{-1}( \mathbf{I}_{d_{P}}-\Psi_{P}) ^{-1}\Psi_{\alpha } \\
&=&\Psi_{\alpha }^{\prime }\left( \left[ 
	\begin{array}{c}
	( \mathbf{I}_{d_{P}}-\Psi_{P}) ^{\prime } \\ 
	-\Psi_{g}^{\prime }
	\end{array}
	\right] ^{\prime } \left[ 
	\begin{array}{cc}
	\Omega_{PP} & \Omega_{Pg} \\ 
	\Omega_{Pg}^{\prime } & \Omega_{gg}
	\end{array}
	\right] \left[ 
	\begin{array}{c}
	( \mathbf{I}_{d_{P}}-\Psi_{P}) ^{\prime } \\ 
	-\Psi_{g}^{\prime }
	\end{array}
	\right]\right) ^{-1}\Psi_{\alpha }=\Psi_{\alpha }^{\prime }\Delta_{1}^{-1}\Psi_{\alpha } = (\Sigma ^{\ast })^{-1},
\end{eqnarray*}
where the first equality follows from $\Delta_{K}=A_{K}+B_{K}$, which is implied by combining Eqs.\ \eqref{eq:Delta_K} and \eqref{eq:MatrixDefns}, the second equality follows from $\Delta_{K}$ and $A_{K}$ being non-singular, the third equality follows from Eqs.\ \eqref{eq:MatrixDefns} and \eqref{eq:InverseOfA}, the fourth equality is based on Lemma \ref{lem:Key_aux_MD}(a) and $C_{K}=C_{1}$, the fifth equality follows from algebra and the final equality holds by Eq.\ \eqref{eq:Optimal_AVar_MD}.

Since the choice of $\{W_{k}:k\leq K-1\}$ was completely arbitrary, the proof of \underline{optimality} follows from showing the following argument. Note that
\begin{equation*}
\Sigma_{K-MD}( \hat{P}_{0},\{ W_{k}:k\leq K\} )
-\Sigma ^{\ast }=\left[ 
\begin{array}{c} 
(\Sigma_{K-MD}( \hat{P}_{0},\{ W_{k}:k\leq K\} ) -\Sigma_{K-MD}( \tilde{P},\{ W_{k}:k\leq K\} ) ) \\
+(\Sigma_{K-MD}( \tilde{P},\{ W_{k}:k\leq K\} ) -\Sigma_{K-MD}( \tilde{P},\{ \tilde{W}_{k}:k\leq K\} ) ) \\
+ \Sigma_{K-MD}( \tilde{P},\{ \tilde{W}_{k}:k\leq K\} ) -\Sigma ^{\ast }
\end{array}
\right] .
\end{equation*}
The RHS is the sum of three terms. The first term is PSD by Eq.\ \eqref{eq:MD_main_1}, the second term is PSD by Eq.\ \eqref{eq:MD_main_2}, and the third bracket is zero by Eq.\ \eqref{eq:Goal}. Then, $\Sigma_{K-MD}( \hat{P}_{0},\{ W_{k}:k\leq K\} ) -\Sigma ^{\ast }$ is PSD, as desired.
\end{proof}

\subsection{Additional auxiliary results}

\begin{theorem}[\textbf{General result for iterated extremum estimators}] \label{thm:2stepB}
Fix $K\in \mathbb{N}$ arbitrarily. Assume Assumptions \ref{ass:Regularity}, \ref{ass:Baseline}, and \ref{ass:EE_HL}. Then, for all $ k\leq K$,
\begin{align}
\sqrt{n}(\hat{\alpha}_{k}-\alpha ^{\ast })& ~=~\left[ 
\begin{array}{ccc}
(A_{k}+B_{k}\Upsilon_{k,P}) & B_{k}\Upsilon_{k,0} & (B_{k}\Upsilon_{k,g}+C_{k})
\end{array}
\right] \sqrt{{n}}\left[ 
\begin{array}{c}
\hat{P}-P^{\ast } \\ 
\hat{P}_{0}-P^{\ast } \\ 
\hat{g}-g^{\ast }
\end{array}
\right] +o_{p}(1)  \label{eq:Expansions_1} \\
\sqrt{n}(\hat{P}_{k-1}-P^{\ast })& ~=~\left[ 
\begin{array}{ccc}
\Upsilon_{k,P} & \Upsilon_{k,0} & \Upsilon_{k,g}
\end{array}
\right] \sqrt{{n}}\left[ 
\begin{array}{c}
\hat{P}-P^{\ast } \\ 
\hat{P}_{0}-P^{\ast } \\ 
\hat{g}-g^{\ast }
\end{array}
\right] +o_{p}(1),  \label{eq:Expansions_2} \\
\sqrt{n}(\hat{\theta}_{k}-\theta ^{\ast })~& =~\left[ 
\begin{array}{ccc}
(A_{k}+B_{k}\Upsilon_{k,P}) & B_{k}\Upsilon_{k,0} & (B_{k}\Upsilon_{k,g}+C_{k}) \\
\mathbf{0}_{d_{g\times }d_{P}} & \mathbf{0}_{d_{g\times }d_{P}} & \mathbf{I}_{d_{g}}
\end{array}
\right] \sqrt{{n}}\left[ 
\begin{array}{c}
\hat{P}-P^{\ast } \\ 
\hat{P}_{0}-P^{\ast } \\ 
\hat{g}-g^{\ast }
\end{array}
\right] +o_{p}(1),\label{eq:Expansions_3}
\end{align}
where $\hat{\theta}_{k}\equiv (\hat{\alpha}_{k},\hat{g})$, $\theta ^{\ast }\equiv (\alpha ^{\ast },g^{\ast })$, $\{A_{k}:k\leq K\}$, $\{B_{k}:k\leq K\} $, and $\{C_{k}:k\leq K\}$ are defined by:
\begin{align}
A_{k}& ~\equiv ~-\left( \frac{\partial ^{2}Q_{k}(\alpha ^{\ast },g^{\ast },P^{\ast })}{\partial \alpha \partial \alpha ^{\prime }}\right) ^{-1}\Xi_{k} \nonumber \\
B_{k}& ~\equiv ~-\left( \frac{\partial ^{2}Q_{k}(\alpha ^{\ast },g^{\ast },P^{\ast })}{\partial \alpha \partial \alpha ^{\prime }}\right) ^{-1}\left( \frac{\partial ^{2}Q_{k}(\alpha ^{\ast },g^{\ast },P^{\ast })}{\partial \alpha \partial P^{\prime }}\right) \nonumber \\
C_{k}& ~\equiv ~-\left( \frac{\partial ^{2}Q_{k}(\alpha ^{\ast },g^{\ast },P^{\ast })}{\partial \alpha \partial \alpha ^{\prime }}\right) ^{-1}\left( \frac{\partial ^{2}Q_{k}(\alpha ^{\ast },g^{\ast },P^{\ast })}{\partial \alpha \partial g^{\prime }}\right) , \label{eq:ABCdefns}
\end{align}
and $\{\Upsilon_{k,P}:k\leq K\}$, $\{\Upsilon_{k,g}:k\leq K\}$, and $ \{\Upsilon_{k,0}:k\leq K\}$ are iteratively defined as follows. Set $ \Upsilon_{1,P}\equiv \mathbf{0}_{d_{P}\times d_{P}}$, $\Upsilon_{1,g}\equiv \mathbf{0}_{d_{P}\times d_{g}}$, $\Upsilon_{1,0}\equiv \mathbf{ I}_{d_{P}}$ and, for any $k\leq K-1$,
\begin{align}
\Upsilon_{k+1,P}& ~\equiv ~(\Psi_{P}+\Psi_{\alpha }B_{k})\Upsilon_{k,P}+\Psi_{\alpha }A_{k} \nonumber \\
\Upsilon_{k+1,0}& ~\equiv ~(\Psi_{P}+\Psi_{\alpha }B_{k})\Upsilon_{k,0}. \nonumber \\
\Upsilon_{k+1,g}& ~\equiv ~(\Psi_{P}+\Psi_{\alpha }B_{k})\Upsilon_{k,g}+\Psi_{g}+\Psi_{\alpha }C_{k}. \label{eq:Upsilondefns}
\end{align}
\end{theorem}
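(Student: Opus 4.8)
The plan is to prove the three representations \eqref{eq:Expansions_1}--\eqref{eq:Expansions_3} simultaneously by induction on $k$, after first establishing that $\hat{\alpha}_k\overset{p}{\to}\alpha^*$ and $\hat{P}_k\overset{p}{\to}P^*$ for every $k\leq K$. Consistency is itself a nested induction: Assumption \ref{ass:Baseline} gives $\hat{P}_0\overset{p}{\to}P^*$ and $\hat{g}\overset{p}{\to}g^*$; then, assuming $\hat{P}_{k-1}\overset{p}{\to}P^*$, the uniform convergence in Assumption \ref{ass:EE_HL}(a) together with the unique-maximizer condition in Assumption \ref{ass:EE_HL}(b) yields $\hat{\alpha}_k\overset{p}{\to}\alpha^*$ by the standard extremum-estimator argument; and since $\hat{P}_k=\Psi(\hat{\alpha}_k,\hat{g},\hat{P}_{k-1})$, continuity of $\Psi$ (Assumption \ref{ass:Regularity}(iv)) and the fixed-point identity $P^*=\Psi(\alpha^*,g^*,P^*)$ from Eq.\ \eqref{eq:FP} give $\hat{P}_k\overset{p}{\to}P^*$, closing the consistency induction.

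For the linear expansions, the first step is to linearize the first-order condition $\partial\hat{Q}_k(\hat{\alpha}_k,\hat{g},\hat{P}_{k-1})/\partial\alpha=0$, which is valid for interior $\hat{\alpha}_k$ by Assumption \ref{ass:Regularity}(i). A mean-value expansion in $(\alpha,g,P)$ around $(\alpha^*,g^*,P^*)$, combined with the score representation in Assumption \ref{ass:EE_HL}(c), the convergence of the Hessian blocks in Assumption \ref{ass:EE_HL}(d) evaluated at the (consistent) intermediate points, and the nonsingularity in Assumption \ref{ass:EE_HL}(e), delivers $\sqrt{n}(\hat{\alpha}_k-\alpha^*)=A_k\sqrt{n}(\hat{P}-P^*)+B_k\sqrt{n}(\hat{P}_{k-1}-P^*)+C_k\sqrt{n}(\hat{g}-g^*)+o_p(1)$, with $A_k,B_k,C_k$ exactly as in Eq.\ \eqref{eq:ABCdefns}. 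Substituting the expansion \eqref{eq:Expansions_2} for $\sqrt{n}(\hat{P}_{k-1}-P^*)$ and collecting terms against $\sqrt{n}(\hat{P}-P^*)$, $\sqrt{n}(\hat{P}_0-P^*)$, and $\sqrt{n}(\hat{g}-g^*)$ produces precisely the coefficients $(A_k+B_k\Upsilon_{k,P})$, $B_k\Upsilon_{k,0}$, $(B_k\Upsilon_{k,g}+C_k)$ of Eq.\ \eqref{eq:Expansions_1}; here Assumption \ref{ass:Baseline} guarantees the three driving terms are $O_p(1)$, so the quadratic Taylor remainders are $o_p(1)$.

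The inductive engine is the propagation of \eqref{eq:Expansions_2}. For $k=1$ the trivial identity $\sqrt{n}(\hat{P}_0-P^*)=\mathbf{I}_{d_P}\sqrt{n}(\hat{P}_0-P^*)$ gives the initialization $\Upsilon_{1,P}=\mathbf{0}_{d_P\times d_P}$, $\Upsilon_{1,0}=\mathbf{I}_{d_P}$, $\Upsilon_{1,g}=\mathbf{0}_{d_P\times d_g}$. For the step, I would linearize $\hat{P}_k=\Psi(\hat{\alpha}_k,\hat{g},\hat{P}_{k-1})$ around $(\alpha^*,g^*,P^*)$ using Assumption \ref{ass:Regularity}(iv) and Eq.\ \eqref{eq:FP}, obtaining $\sqrt{n}(\hat{P}_k-P^*)=\Psi_\alpha\sqrt{n}(\hat{\alpha}_k-\alpha^*)+\Psi_g\sqrt{n}(\hat{g}-g^*)+\Psi_P\sqrt{n}(\hat{P}_{k-1}-P^*)+o_p(1)$. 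Inserting the intermediate form of $\sqrt{n}(\hat{\alpha}_k-\alpha^*)$ above and then the inductive expansion of $\sqrt{n}(\hat{P}_{k-1}-P^*)$, the coefficients on the three driving terms collapse to $(\Psi_P+\Psi_\alpha B_k)\Upsilon_{k,P}+\Psi_\alpha A_k$, $(\Psi_P+\Psi_\alpha B_k)\Upsilon_{k,0}$, and $(\Psi_P+\Psi_\alpha B_k)\Upsilon_{k,g}+\Psi_g+\Psi_\alpha C_k$, which are exactly $\Upsilon_{k+1,P}$, $\Upsilon_{k+1,0}$, $\Upsilon_{k+1,g}$ in Eq.\ \eqref{eq:Upsilondefns}. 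This closes the induction; Eq.\ \eqref{eq:Expansions_3} then follows by stacking \eqref{eq:Expansions_1} with the identity $\sqrt{n}(\hat{g}-g^*)=\mathbf{I}_{d_g}\sqrt{n}(\hat{g}-g^*)$.

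The hard part will be the bookkeeping of the mean-value expansions rather than any single deep idea. Because the gradient $\partial\hat{Q}_k/\partial\alpha$ is vector-valued, the multivariate mean value theorem must be applied row-by-row, producing distinct intermediate points for each component; the argument that these all converge to $(\alpha^*,g^*,P^*)$ -- so that Assumption \ref{ass:EE_HL}(d) can be invoked -- relies on the consistency established up front, which is precisely why the consistency induction must precede the linearization. A secondary point requiring care is verifying the $\sqrt{n}$-rate $\hat{\alpha}_k-\alpha^*=O_p(n^{-1/2})$ before discarding the quadratic remainders; this follows from the $O_p(1)$ score (Assumptions \ref{ass:EE_HL}(c) and \ref{ass:Baseline}) together with the nonsingular limiting Hessian (Assumption \ref{ass:EE_HL}(e)), in the usual extremum-estimator fashion.
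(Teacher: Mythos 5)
Your proposal is correct and follows essentially the same route as the paper's proof: a consistency induction (extremum-estimator argument for $\hat{\alpha}_k$, linearization of $\Psi$ at the fixed point for $\hat{P}_k$), then a mean-value expansion of the first-order condition yielding the intermediate representation $\sqrt{n}(\hat{\alpha}_k-\alpha^*)=A_k\sqrt{n}(\hat{P}-P^*)+B_k\sqrt{n}(\hat{P}_{k-1}-P^*)+C_k\sqrt{n}(\hat{g}-g^*)+o_p(1)$, and finally an induction propagating Eq.\ \eqref{eq:Expansions_2} through the linearized CCP mapping to obtain the $\Upsilon$ recursions, with Eq.\ \eqref{eq:Expansions_3} obtained by stacking. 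The technical cautions you flag (row-wise mean-value points, consistency preceding linearization, the $O_p(1)$ score plus nonsingular Hessian giving the $\sqrt{n}$-rate) are exactly the ones the paper's proof relies on.
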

\begin{proof}
We divide the proof into three steps.

\underline{Step 1.} Show that $( \hat{\alpha}_{k},\hat{P}_{k-1}) =( \alpha ^{\ast },P^{\ast }) +o_{p}( 1) $ for any $k\leq K$. We prove this by induction.

We begin with the initial step, i.e., show that the result holds for $k=1$. First, $\hat{P}_{0}=P^{\ast }+o_{p}( 1) $ follows directly from Assumption \ref{ass:Baseline}. Assumptions \ref{ass:Baseline} and \ref{ass:EE_HL} imply that $\sup_{\alpha \in \Theta_{\alpha}}|{\hat{Q}}_{1}(\alpha ,\hat{g},\hat{P}_{0})-{Q}_{1}(\alpha ,g^{\ast },P^{\ast })|=o_{p}(1)$, ${Q}_{1}(\alpha ,g^{\ast },P^{\ast })$ is upper semi-continuous function of $\alpha$, and ${Q }_{1}(\alpha ,g^{\ast },P^{\ast })$ is uniquely maximized at $\alpha ^{\ast } $. From these conditions, $\hat{\alpha}_{1}=\alpha ^{\ast }+o_{p}( 1)$ follows from standard results for extremum estimators.

We next show the inductive step, i.e., assume that the result holds for $ k\leq K-1$ and show that it holds for $k+1$. First, notice that:
\begin{eqnarray*}
\hat{P}_{k}-P^{\ast } &=&\Psi ( \hat{\alpha}_{k},\hat{g},\hat{P}_{k-1}) -\Psi ( \alpha ^{\ast },g^{\ast },P^{\ast }) \\
&=&\Psi_{\alpha }( \alpha ^{\ast },g^{\ast },P^{\ast }) ( \hat{\alpha}_{k}-\alpha ^{\ast }) +\Psi_{g}( \alpha ^{\ast },g^{\ast },P^{\ast }) ( \hat{g}-g^{\ast }) +\Psi_{P}( \alpha ^{\ast },g^{\ast },P^{\ast }) ( \hat{P}_{k-1}-P^{\ast }) +o_{p}( 1) =o_{p}( 1) ,
\end{eqnarray*}
where the second line follows from the intermediate value theorem, the inductive hypothesis, and Assumptions \ref{ass:Regularity} and \ref {ass:Baseline}. Assumptions \ref{ass:Baseline} and \ref{ass:EE_HL} imply that $\sup_{\alpha \in \Theta_{\alpha}}|{\hat{Q}}_{k+1}(\alpha ,\hat{g},\hat{ P}_{0})-{Q}_{k+1}(\alpha ,g^{\ast },P^{\ast })|=o_{p}(1)$, ${Q}_{k+1}(\alpha ,g^{\ast },P^{\ast })$ is upper semi-continuous function of $\alpha$, and ${Q }_{k+1}(\alpha ,g^{\ast },P^{\ast })$ is uniquely maximized at $\alpha ^{\ast }$. By repeating previous arguments, $\hat{\alpha}_{k+1}=\alpha ^{\ast }+o_{p}( 1) $ follows.

\underline{Step 2.} Derive an expansion for $\sqrt{n}( \hat{\alpha}_{k}-\alpha ^{\ast })$ for any $k\leq K$.

For any $k\leq K$, consider the following derivation.
\begin{align}
\mathbf{0}_{d_{\alpha }\times 1} &=\sqrt{n}\frac{\partial \hat{Q}_{k}( \hat{ \alpha}_{k},\hat{g},\hat{P}_{k-1}) }{\partial \alpha } +o_{p}( 1) \nonumber \\
&=\left[ 
\begin{array}{c}
\sqrt{n}\frac{\partial \hat{Q}_{k}( \alpha ^{\ast },g^{\ast },P^{\ast }) }{ \partial \alpha }+\frac{\partial ^{2}Q_{k}( \alpha ^{\ast },g^{\ast },P^{\ast }) }{\partial \alpha \partial \alpha ^{\prime }} \sqrt{n}( \hat{ \alpha}_{k}-\alpha ^{\ast }) + \\
\frac{\partial ^{2}Q_{k}( \alpha ^{\ast },g^{\ast },P^{\ast }) }{ \partial \alpha \partial P^{\prime }}\sqrt{n}( \hat{P}_{k-1}-P^{\ast }) +\frac{ \partial ^{2}Q_{k}( \alpha ^{\ast },g^{\ast },P^{\ast }) }{\partial \alpha \partial g^{\prime }}\sqrt{n}( \hat{g} -g^{\ast })
\end{array}
\right] +o_{p}( 1)  \nonumber \\
&=\left[ 
\begin{array}{c}
\Xi_{k}\sqrt{{n}}( \hat{P}-P^{\ast }) +\frac{\partial ^{2}Q_{k}( \alpha ^{\ast },g^{\ast },P^{\ast }) }{\partial \alpha \partial \alpha ^{\prime }} \sqrt{n}( \hat{\alpha}_{k}-\alpha ^{\ast }) + \\
\frac{\partial ^{2}Q_{k}( \alpha ^{\ast },g^{\ast },P^{\ast }) }{ \partial \alpha \partial P^{\prime }}\sqrt{n}( \hat{P}_{k-1}-P^{\ast }) +\frac{ \partial ^{2}Q_{k}( \alpha ^{\ast },g^{\ast },P^{\ast }) }{\partial \alpha \partial g^{\prime }}\sqrt{n}( \hat{g} -g^{\ast }),
\end{array}
\right] +o_{p}( 1) ,  \label{eq:PreExpansionTheta}
\end{align}
where the first line holds because $( \hat{\alpha}_{k},\hat{g},\hat{P }_{k-1}) =( \alpha ^{\ast },g^{\ast },P^{\ast }) +o_{p}( 1) $ (due to the step 1 and Assumption \ref{ass:Baseline}), $\hat{\alpha}_{k}$ is the maximizer of $\hat{Q}_{k}( \alpha ,\hat{g},\hat{P}_{k-1}) $ in $\Theta_{\alpha }$, and $\hat{\alpha}_{k}$ belongs to the interior of $\Theta_{\alpha }$ with probability approaching one (due to the preliminary result and Assumption \ref {ass:Regularity}), the second line holds by the intermediate value theorem and elementary convergence arguments based on Assumption \ref{ass:EE_HL}, and the third line holds by Assumption \ref{ass:EE_HL}.

We are now ready to derive the desired expansion.
\begin{align}
&\sqrt{n}( \hat{\alpha}_{k}-\alpha ^{\ast }) ~~= \nonumber \\ 
& - \frac{\partial ^{2}Q_{k}( \alpha ^{\ast },g^{\ast },P^{\ast }) }{ \partial \alpha \partial \alpha ^{\prime }} ^{-1}\left[ \Xi_{k}\sqrt{{n}}( \hat{P}-P^{\ast }) +\frac{\partial ^{2}\hat{Q}_{k}( \alpha ^{\ast },g^{\ast },P^{\ast }) }{\partial \alpha \partial P^{\prime }}\sqrt{n}( \hat{P}_{k-1}-P^{\ast }) +\frac{ \partial ^{2}\hat{Q}_{k}( \alpha ^{\ast },g^{\ast },P^{\ast }) }{ \partial \alpha \partial g^{\prime }}\sqrt{n}( \hat{g} -g^{\ast }) \right] \nonumber \\
& =~~ A_{k}\sqrt{{n}}( \hat{P}-P^{\ast }) + B_{k}\sqrt{n}( \hat{P }_{k-1}-P^{\ast }) + C_{k}\sqrt{n}( \hat{g}-g^{\ast }) +o_{p}( 1) , \label{eq:ExpansionTheta}
\end{align}
where the first line holds by Eq.\ \eqref{eq:PreExpansionTheta} and Assumption \ref{ass:EE_HL}, and the second line holds by Eq.\ \eqref{eq:ABCdefns} and Assumption \ref{ass:EE_HL}.

\underline{Step 3.} Show Eqs.\ \eqref{eq:Expansions_1}, \eqref{eq:Expansions_2}, and \eqref{eq:Expansions_3}. Eq. \eqref{eq:Expansions_3} follows immediately from Eq.\ \eqref{eq:Expansions_1}. Eqs.\ \eqref{eq:Expansions_1} and \eqref{eq:Expansions_2} are the result of the following inductive argument.

We begin with the initial step, i.e., show that the result holds for $k=1$. By $\Upsilon_{1,P}\equiv \mathbf{0}_{d_{P}\times d_{P}}$, $\Upsilon_{1,g}\equiv \mathbf{0}_{d_{P}\times d_{g}}$, $\Upsilon_{1,0}\equiv \mathbf{I}_{d_{P}}$, Eq.\ \eqref{eq:Expansions_2} holds for $k=1$. By the same argument and step 2, Eq.\ \eqref{eq:Expansions_1} holds for $k=1$.

We next show the inductive step, i.e., assume that the result holds for $ k\leq K-1$ and show that it holds for $k+1$. First, consider the following derivation:
\begin{align}
& \sqrt{n}(\hat{P}_{k}-P^{\ast })=\Psi_{\alpha }\sqrt{n}(\hat{\alpha}_{k}-\alpha ^{\ast })+\Psi_{g}\sqrt{n}(\hat{g}-g^{\ast })+\Psi_{P}\sqrt{n}( \hat{P}_{k-1}-P^{\ast })+o_{p}(1) \nonumber \\
& =\left[ 
\begin{array}{c}
\lbrack \Psi_{\alpha }A_{k}+(\Psi_{P}+\Psi_{\alpha }B_{k})\Upsilon_{k,P}] \sqrt{{n}}(\hat{P}-P^{\ast })+[\Psi_{\alpha }C_{k}+\Psi_{g}+(\Psi_{P}+\Psi_{\alpha }B_{k})\Upsilon_{k,g}]\sqrt{n}(\hat{g}-g^{\ast }) \\
+(\Psi_{P}+\Psi_{\alpha }B_{k})\Upsilon_{k,0}\sqrt{{n}}(\hat{P}_{0}-P^{\ast })
\end{array}
\right] +o_{p}(1)  \nonumber \\
& =\Upsilon_{k+1,P}\sqrt{{n}}(\hat{P}-P^{\ast })+\Upsilon_{k+1,g}\sqrt{n}( \hat{g}-g^{\ast })+\Upsilon_{k+1,0}\sqrt{{n}}(\hat{P}_{0}-P^{\ast })+o_{p}(1), \label{eq:Expansion_P_kp1}
\end{align}
where the first equality holds by $\hat{P}_{k}\equiv \Psi (\hat{\alpha}_{k}, \hat{g},\hat{P}_{k-1})$, $P^{\ast }=\Psi (\alpha ^{\ast },g^{\ast },P^{\ast })$, Assumption \ref{ass:Regularity}, and the intermediate value theorem, the second line holds by step 2 and the inductive hypothesis, and the last equality holds by Eq.\ \eqref{eq:Upsilondefns}. This verifies Eq.\ \eqref{eq:Expansions_2} for $k+1$. Second, consider the following derivation:
\begin{align*}
& \sqrt{n}(\hat{\alpha}_{k+1}-\alpha ^{\ast })=A_{k+1}\sqrt{{n}}(\hat{P} -P^{\ast })+B_{k+1}\sqrt{n}(\hat{P}_{k}-P^{\ast })+C_{k+1}\sqrt{n}(\hat{g} -g^{\ast })+o_{p}(1) \\
& =(A_{k+1}+B_{k+1}\Upsilon_{k+1,P})\sqrt{{n}}(\hat{P}-P^{\ast })+(C_{k+1}+B_{k+1}\Upsilon_{k+1,g})\sqrt{n}(\hat{g}-g^{\ast })+B_{k+1}\Upsilon_{k+1,0}\sqrt{{n}}(\hat{P}_{0}-P^{\ast })+o_{p}(1),
\end{align*}
where the first equality holds by step 2 and the second equality follows from Eq.\ \eqref{eq:Expansion_P_kp1}. This verifies Eq.\ \eqref{eq:Expansions_1} for $k+1$, and completes the proof.
\end{proof}

\begin{lemma}\label{lem:Key_aux_MD}
Assume the conditions in Theorem \ref{thm:MD_main}. Let $\{ \Phi_{k,0}:k\leq K\} $, $\{ \Phi_{k,P}:k\leq K\} $, and $\{ \Phi_{k,g}:k\leq K\} $ defined as in Eq.\ \eqref{eq:coefficients_MD}, and let $\Phi_{k,P0}\equiv \Phi_{k,P}+\Phi_{k,0}$ for all $k\leq K$. Then,
\begin{enumerate}
\item $( \mathbf{I}_{d_{P}}-\Psi_{P}\Phi_{K,P0}) ^{-1}\Psi_{\alpha }=( \mathbf{I}_{d_{P}}-\Psi_{P}) ^{-1}\Psi_{\alpha }$.
\item $( \mathbf{I}_{d_{P}}-\Psi_{P}\Phi_{K,P0}) ^{-1}( \mathbf{I}_{d_{P}}+\Psi_{P}\Phi_{K,g}) =( \mathbf{I}_{d_{P}}-\Psi_{P}) ^{-1}$.
\end{enumerate}
\end{lemma}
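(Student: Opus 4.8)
The plan is to prove both identities by induction on the iteration index, after exposing the projection structure hidden in the recursions \eqref{eq:coefficients_MD}. For each $k$ I would write $N_k \equiv \Psi_\alpha(\Psi_\alpha' W_k \Psi_\alpha)^{-1}\Psi_\alpha' W_k$ and $M_k \equiv \mathbf{I}_{d_P} - N_k$, so that every line of \eqref{eq:coefficients_MD} takes the form $\Phi_{k+1,\bullet} = M_k \Psi_P \Phi_{k,\bullet} + (\text{source})$. The two facts that make the argument collapse are $M_k + N_k = \mathbf{I}_{d_P}$ and $N_k\Psi_\alpha = \Psi_\alpha$ (the latter using that $\Psi_\alpha'W_k\Psi_\alpha$ is invertible by Assumptions \ref{ass:Regularity} and \ref{ass:Weight}), which together give the annihilation property $M_k\Psi_\alpha = \mathbf{0}$. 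First I would add the $\Phi_{k+1,P}$ and $\Phi_{k+1,0}$ recursions and use $\Phi_{k,P0}\equiv\Phi_{k,P}+\Phi_{k,0}$ to obtain $\Phi_{k+1,P0} = M_k\Psi_P\Phi_{k,P0} + N_k$ with $\Phi_{1,P0}=\mathbf{I}_{d_P}$, while the last line of \eqref{eq:coefficients_MD} already reads $\Phi_{k+1,g} = M_k(\mathbf{I}_{d_P} + \Psi_P\Phi_{k,g})$ with $\Phi_{1,g}=\mathbf{0}$. Introducing $H_k \equiv \mathbf{I}_{d_P} - \Psi_P\Phi_{k,P0}$, $G_k \equiv \mathbf{I}_{d_P} + \Psi_P\Phi_{k,g}$, and $S \equiv (\mathbf{I}_{d_P} - \Psi_P)^{-1}$ (well-defined under the conditions of Theorem \ref{thm:MD_main}), substituting $\Psi_P\Phi_{k,P0} = \mathbf{I}_{d_P} - H_k$ and using $M_k + N_k = \mathbf{I}_{d_P}$ turns these into the clean recursions $H_{k+1} = (\mathbf{I}_{d_P}-\Psi_P) + \Psi_P M_k H_k$ and $G_{k+1} = \mathbf{I}_{d_P} + \Psi_P M_k G_k$.

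The core claim is $G_k = H_k S$ for all $k$, which is exactly part (b) once evaluated at $k=K$ and multiplied on the left by $H_K^{-1}$. The base case $k=1$ is immediate from $H_1 = \mathbf{I}_{d_P}-\Psi_P$ and $G_1 = \mathbf{I}_{d_P}$, since $H_1 S = (\mathbf{I}_{d_P}-\Psi_P)S = \mathbf{I}_{d_P}=G_1$. For the inductive step I would compute $H_{k+1}S = (\mathbf{I}_{d_P}-\Psi_P)S + \Psi_P M_k H_k S = \mathbf{I}_{d_P} + \Psi_P M_k (H_k S)$, and then insert the hypothesis $H_k S = G_k$ to reach $\mathbf{I}_{d_P} + \Psi_P M_k G_k = G_{k+1}$, closing the induction and hence part (b).

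For part (a) I would first record the auxiliary fact $\Phi_{k,g}\Psi_\alpha = \mathbf{0}$ for all $k$ by a one-line induction: it holds at $k=1$, and $\Phi_{k+1,g}\Psi_\alpha = M_k\Psi_\alpha + M_k\Psi_P\Phi_{k,g}\Psi_\alpha = \mathbf{0}$ using $M_k\Psi_\alpha = \mathbf{0}$ together with the hypothesis. Consequently $G_K\Psi_\alpha = (\mathbf{I}_{d_P}+\Psi_P\Phi_{K,g})\Psi_\alpha = \Psi_\alpha$, and combining with part (b) in the form $H_K S = G_K$ yields $H_K S\Psi_\alpha = \Psi_\alpha$. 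Reading this as $S\Psi_\alpha = H_K^{-1}\Psi_\alpha$ gives precisely $(\mathbf{I}_{d_P}-\Psi_P)^{-1}\Psi_\alpha = (\mathbf{I}_{d_P}-\Psi_P\Phi_{K,P0})^{-1}\Psi_\alpha$, which is part (a).

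I expect no genuine analytic difficulty: the entire argument is linear algebra driven by the two projection identities. The main thing to get right is identifying the correct invariant, namely that the $g$-block and the $P0$-block are linked through the single relation $G_k = H_k S$, rather than attempting closed forms for $\Phi_{K,g}$ and $\Phi_{K,P0}$ separately, which would be unwieldy because the weights $W_k$ vary across steps. The only point demanding care is the well-definedness of the inverses $(\mathbf{I}_{d_P}-\Psi_P)^{-1}$ and $H_K^{-1}=(\mathbf{I}_{d_P}-\Psi_P\Phi_{K,P0})^{-1}$, both of which are guaranteed under the maintained conditions of Theorem \ref{thm:MD_main}.
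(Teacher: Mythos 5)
Your proof is correct and takes essentially the same route as the paper's: both arguments are inductions on $k$ powered by the annihilation identity $(\mathbf{I}_{d_P}-\Pi_k)\Psi_\alpha=\mathbf{0}$ with $\Pi_k\equiv\Psi_\alpha(\Psi_\alpha'W_k\Psi_\alpha)^{-1}\Psi_\alpha'W_k$ (your $M_k\Psi_\alpha=\mathbf{0}$), and your invariant $G_k=H_kS$ is exactly the paper's inductive claim $(\mathbf{I}_{d_P}+\Psi_P\Phi_{k,g})(\mathbf{I}_{d_P}-\Psi_P)=\mathbf{I}_{d_P}-\Psi_P\Phi_{k,P0}$ post-multiplied by $(\mathbf{I}_{d_P}-\Psi_P)^{-1}$. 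The only structural difference is part 1, which the paper proves by a second standalone induction on $(\mathbf{I}_{d_P}-\Psi_P\Phi_{k,P0})(\mathbf{I}_{d_P}-\Psi_P)^{-1}\Psi_\alpha=\Psi_\alpha$, whereas you deduce it from part 2 together with the auxiliary induction $\Phi_{k,g}\Psi_\alpha=\mathbf{0}$; since both routes rest on the same projection facts and the same invertibility of $\mathbf{I}_{d_P}-\Psi_P$ and $\mathbf{I}_{d_P}-\Psi_P\Phi_{K,P0}$ maintained in Theorem \ref{thm:MD_main}, this is a reshuffling of the same argument rather than a different one.
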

\begin{proof}
Throughout this proof, denote $\Pi_{k}\equiv \Psi_{\alpha }( \Psi_{\alpha }^{\prime }W_{k}\Psi_{\alpha }) ^{-1}\Psi_{\alpha }^{\prime }W_{k}$ for all $k\leq K$.

\underline{Part 1.} It suffices to show that $( \mathbf{I}_{d_{P}}-\Psi_{P}\Phi_{k,P0}) ( \mathbf{I}_{d_{P}}-\Psi_{P}) ^{-1}\Psi_{\alpha }=\Psi_{\alpha }$ for $k\leq K$. We show this by induction. The initial step follows from $\Phi_{k,P0}=\mathbf{I}_{d_{P}}$. We next show the inductive step, i.e., assume the result holds for $k\leq K-1$ and show it also holds for $k+1$. Consider the following derivation.
\begin{eqnarray*}
( \mathbf{I}_{d_{P}}-\Psi_{P}\Phi_{k+1,P0}) ( \mathbf{I}_{d_{P}}-\Psi_{P}) ^{-1}\Psi_{\alpha } 
&=&( \mathbf{I}_{d_{P}}-\Psi_{P}+\Psi_{P}( \mathbf{I}_{d_{P}}-\Pi_{k}) ( \mathbf{I}_{d_{P}}-\Psi_{P}\Phi_{k,P0}) ) ( \mathbf{I}_{d_{P}}-\Psi_{P}) ^{-1}\Psi_{\alpha } \\
&=&\Psi_{\alpha }+\Psi_{P}( \mathbf{I}_{d_{P}}-\Pi_{k}) ( \mathbf{I}_{d_{P}}-\Psi_{P}\Phi_{k,P0}) ( \mathbf{I}_{d_{P}}-\Psi_{P}) ^{-1}\Psi_{\alpha } \\
&=&\Psi_{\alpha }+\Psi_{P}( \mathbf{I}_{d_{P}}-\Pi_{k}) \Psi_{\alpha }=\Psi_{\alpha },
\end{eqnarray*}
as required, where the first equality follows from Eq.\ \eqref{eq:coefficients_MD} and some algebra, the second equality follows from the inductive hypothesis, and the final equality follows from $\Pi_{k}\Psi_{\alpha }=\Psi_{\alpha }$.

\underline{Part 2.} It suffices to show that $( \mathbf{I}_{d_{P}}+\Psi_{P}\Phi_{k,g}) ( \mathbf{I}_{d_{P}}-\Psi_{P}) =( \mathbf{I}_{d_{P}}-\Psi_{P}\Phi_{k,P0}) $ for $k\leq K$. We show this by induction. The initial step follows from $\Phi_{0,g}=\mathbf{0}_{d_{P}\times d_{g}}$ and $\Phi_{0,P}=\mathbf{I}_{d_{P}}$. We next show the inductive step, i.e., assume the result holds for $k\leq K-1$ and show it also holds for $k+1$. Consider the following derivation.
\begin{eqnarray*}
( \mathbf{I}_{d_{P}}+\Psi_{P}\Phi_{k+1,g}) ( \mathbf{I}_{d_{P}}-\Psi_{P}) &=&( \mathbf{I}_{d_{P}}-\Psi_{P}) +\Psi_{P}( \mathbf{I}_{d_{P}}-\Pi_{k}) ( \Psi_{P}\Phi_{k,g}+\mathbf{I}_{d_{P}}) ( \mathbf{I}_{d_{P}}-\Psi_{P}) \\
&=&( \mathbf{I}_{d_{P}}-\Psi_{P}) +\Psi_{P}( \mathbf{I}_{d_{P}}-\Pi_{k}) ( \mathbf{I}_{d_{P}}-\Psi_{P}\Phi_{k,P0}) \\
&=& -\Psi_{P}( \mathbf{I}_{d_{P}}-\Pi_{k}) \Psi_{P}\Phi_{k,P0}+\mathbf{I}_{d_{P}}-\Psi_{P}\Pi_{k} \\
&=& \mathbf{I}_{d_{P}}-\Psi_{P}\Phi_{k+1,P0},
\end{eqnarray*}
where the first and fourth equalities follows from Eq.\ \eqref{eq:coefficients_MD}, the second equality follows from the inductive hypothesis, and the third equality follows from algebra.
\end{proof}

\begin{lemma}\label{lem:PSD} 
Under Assumptions \ref{ass:Baseline} and \ref{ass:Baseline2},
\begin{equation*}
\left( 
\begin{array}{ccc}
\Omega_{PP} & \Omega_{P0} & \Omega_{Pg} \\ 
\Omega_{P0}^{\prime } & \Omega_{00} & \Omega_{0g} \\ 
\Omega_{Pg}^{\prime } & \Omega_{0g}^{\prime } & \Omega_{gg}
\end{array}
\right) - \left( 
\begin{array}{ccc}
\Omega_{PP} & \Omega_{PP} & \Omega_{Pg} \\ 
\Omega_{PP} & \Omega_{PP} & \Omega_{Pg} \\ 
\Omega_{Pg}^{\prime } & \Omega_{Pg}^{\prime } & \Omega_{gg}
\end{array}
\right)\text{~~~~is PSD.}
\end{equation*}
\end{lemma}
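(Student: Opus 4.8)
The plan is to reduce the claimed Loewner inequality to three facts about the blocks of the asymptotic covariance matrix
\begin{equation*}
V ~\equiv~ \left(\begin{array}{ccc} \Omega_{PP} & \Omega_{P0} & \Omega_{Pg} \\ \Omega_{P0}' & \Omega_{00} & \Omega_{0g} \\ \Omega_{Pg}' & \Omega_{0g}' & \Omega_{gg} \end{array}\right)
\end{equation*}
from Assumption \ref{ass:Baseline}. First I would subtract the two matrices. Writing $E \equiv \Omega_{P0}-\Omega_{PP}$ and $F \equiv \Omega_{0g}-\Omega_{Pg}$ and using $\Omega_{PP}=\Omega_{PP}'$, the difference equals
\begin{equation*}
\left(\begin{array}{ccc} \mathbf{0} & E & \mathbf{0} \\ E' & \Omega_{00}-\Omega_{PP} & F \\ \mathbf{0} & F' & \mathbf{0} \end{array}\right).
\end{equation*}
Hence it suffices to establish (i) $E=\mathbf{0}$, (ii) $F=\mathbf{0}$, and (iii) $\Omega_{00}-\Omega_{PP}$ is PSD, since then the difference is block-diagonal with the single nontrivial block $\Omega_{00}-\Omega_{PP}$. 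Fact (iii) is immediate once (i) holds: with $\Omega_{P0}=\Omega_{PP}$ one checks that $\Omega_{00}-\Omega_{PP}$ equals the asymptotic variance of $\hat{P}_0-\hat{P}$, which is PSD by construction.

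The core of the argument is to extract (i) and (ii) from Assumption \ref{ass:Baseline2}(ii). I would introduce the linear map $T_M$ sending $(\hat{P}-P^*,\hat{P}_0-P^*,\hat{g}-g^*)$ to $((\mathbf{I}_{d_P}-M)\hat{P}+M\hat{P}_0-P^*,\hat{g}-g^*)$, so that by Assumption \ref{ass:Baseline} the asymptotic variance of $((\mathbf{I}_{d_P}-M)\hat{P}+M\hat{P}_0,\hat{g})$ is $\Phi(M)\equiv T_M V T_M'$. Assumption \ref{ass:Baseline2}(ii) states precisely that $\Phi(M)-\Phi(\mathbf{0})$ is PSD for every $M$, i.e.\ $M=\mathbf{0}$ minimizes $\Phi$ in the Loewner order. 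For each fixed direction $N$, the curve $t\mapsto \Phi(tN)$ is a quadratic matrix polynomial attaining its Loewner minimum at $t=0$; sandwiching by an arbitrary vector $v$ produces a scalar quadratic $t\mapsto v'\Phi(tN)v$ minimized at $t=0$, whose vanishing derivative gives $v'\Phi'(\mathbf{0})[N]\,v=0$ for all $v$, and therefore $\Phi'(\mathbf{0})[N]=\mathbf{0}$ by symmetry.

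I would then compute this directional derivative. A short calculation using $\tfrac{d}{dt}T_{tN}|_{t=0}$ gives
\begin{equation*}
\Phi'(\mathbf{0})[N] ~=~ \left(\begin{array}{cc} NE'+EN' & NF \\ F'N' & \mathbf{0} \end{array}\right),
\end{equation*}
which must vanish for every $N$. Letting $N$ range over the matrix units forces $F=\mathbf{0}$ from the block $NF\equiv\mathbf{0}$, and setting $N=E$ in $NE'+EN'\equiv\mathbf{0}$ yields $2EE'=\mathbf{0}$, hence $E=\mathbf{0}$; combined with the reduction above this proves the lemma. The main obstacle I anticipate is not the algebra but the justification that Loewner-minimality of the smooth matrix curve $\Phi(tN)$ forces its first derivative to vanish; once that is handled through the scalar quadratics $t\mapsto v'\Phi(tN)v$, everything else is routine block-matrix manipulation.
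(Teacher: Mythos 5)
Your proof is correct and takes essentially the same route as the paper's: both are Hausman (1978)-style arguments that perturb the preliminary estimator to $(\mathbf{I}_{d_P}-M)\hat{P}+M\hat{P}_0$, use the fact that Assumption \ref{ass:Baseline2}(ii) makes $M=\mathbf{0}$ a Loewner minimizer of the resulting asymptotic variance, and read off $\Omega_{P0}=\Omega_{PP}$ and $\Omega_{0g}=\Omega_{Pg}$ from the vanishing first-order condition, leaving only the PSD block $\Omega_{00}-\Omega_{PP}$ in the difference. The paper implements the same first-order condition through scalar quadratics $H(r)$ with specific choices of $A$ (first $A=\mathbf{I}_{d_P}$ for the symmetric part, then $A=\Omega_{P0}-\Omega_{P0}'$ for the antisymmetric part), while you phrase it as a vanishing matrix directional derivative and take $N=E$ to get $2EE'=\mathbf{0}$ in one step — an organizational, not substantive, difference.
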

\begin{proof}
First, note that Assumption \ref{ass:Baseline} implies  
\[
\sqrt{{n}}\left( 
\begin{array}{c}
\hat{P}_{0}-P^{\ast } \\ 
\hat{g}-g^{\ast }
\end{array}
\right) \overset{d}{\to}N\left(\left( 
\begin{array}{c}
\mathbf{0}_{d_{P}\times 1} \\ 
\mathbf{0}_{d_{g}\times 1}
\end{array}
\right) ,\left( 
\begin{array}{cc}
\Omega_{00} & \Omega_{0g} \\ 
\Omega_{0g}^{\prime } & \Omega_{gg}
\end{array}
\right) \right) .
\]
Second, note that Assumption \ref{ass:Baseline2} implies that  
\[
\left( 
\begin{array}{cc}
\Omega_{00} & \Omega_{0g} \\ 
\Omega_{0g}^{\prime } & \Omega_{gg}
\end{array}
\right) -\left( 
\begin{array}{cc}
\Omega_{PP} & \Omega_{Pg} \\ 
\Omega_{Pg}^{\prime } & \Omega_{gg}
\end{array}
\right)
\]
is PSD, i.e, for any $\gamma_{P}\in \mathbb{R} ^{d_{P}}$ and $\gamma_{g}\in \mathbb{R} ^{d_{g}}$, \begin{equation} \gamma_{P}^{\prime }( \Omega_{00}-\Omega_{PP}) \gamma_{P}+2\gamma_{P}^{\prime }( \Omega_{0g}-\Omega_{Pg}) \gamma_{g}\geq 0. \label{eq:PSD1}
\end{equation}

The remainder of this proof follows arguments similar to those used to show \citet[Lemma 2.1]{hausman:1978}. Fix $r\in \mathbb{R}$ and $A\in \mathbb{R} ^{d_{P}\times d_{P}}$ arbitrarily. By Assumption \ref{ass:Baseline},
\[
\sqrt{{n}}\left( 
\begin{array}{c}
\hat{P}+rA(\hat{P}_{0}-\hat{P})-P^{\ast } \\ 
\hat{g}-g^{\ast }
\end{array}
\right) =\sqrt{{n}}\left( 
\begin{array}{c}
(\mathbf{I}_{d_{P}}-rA)(\hat{P}-P^{\ast })+rA(\hat{P}_{0}-P^{\ast }) \\ 
\hat{g}-g^{\ast }
\end{array}
\right) \overset{d}{\rightarrow }N\left( \left( 
\begin{array}{c}
\mathbf{0}_{d_{P}} \\ 
\mathbf{0}_{d_{g}}
\end{array}
\right) ,\Sigma \right) ,
\]
and 
\[
\Sigma \equiv \left( 
\begin{array}{cc}
\left( 
\begin{array}{c}
r^{2}A\Omega_{00}A^{\prime }+(\mathbf{I}_{d_{P}}-rA)\Omega_{PP}(\mathbf{I}_{d_{P}}-rA^{\prime }) \\
+r(\mathbf{I}_{d_{P}}-rA)\Omega_{P0}A^{\prime }+rA\Omega_{P0}^{\prime }( \mathbf{I}_{d_{P}}-rA^{\prime })
\end{array}
\right)  & rA(\Omega_{0g}-\Omega_{Pg})+\Omega_{Pg} \\ 
r(\Omega_{0g}-\Omega_{Pg})^{\prime }A^{\prime }+\Omega_{Pg}^{\prime } & \Omega_{gg}
\end{array}
\right) .
\]
Assumption \ref{ass:Baseline2} implies that 
\[
\Sigma -\left( 
\begin{array}{cc}
\Omega_{PP} & \Omega_{Pg} \\ 
\Omega_{Pg}^{\prime } & \Omega_{gg}
\end{array}
\right) =\left( 
\begin{array}{cc}
\left( 
\begin{array}{c}
r^{2}A\Omega_{00}A^{\prime }+(\mathbf{I}_{d_{P}}-rA)\Omega_{PP}(\mathbf{I}_{d_{P}}-rA^{\prime })+ \\
r(\mathbf{I}_{d_{P}}-rA)\Omega_{P0}A^{\prime }+rA\Omega_{P0}^{\prime }( \mathbf{I}_{d_{P}}-rA^{\prime })-\Omega_{PP}
\end{array}
\right) & rA(\Omega_{0g}-\Omega_{Pg}) \\
r(\Omega_{0g}-\Omega_{Pg})^{\prime }A^{\prime } & \mathbf{0}_{d_{g}\times d_{g}}
\end{array}
\right) 
\]
is PSD, i.e., for any $\lambda_{P}\in \mathbb{R}^{d_{P}}$ and $\lambda_{g}\in \mathbb{R}^{d_{g}}$,
\begin{equation}
H(r)\equiv \left( 
\begin{array}{c}
\lambda_{P}^{\prime }(r^{2}A\Omega_{00}A^{\prime }+(\mathbf{I}_{d_{P}}-rA)\Omega_{PP}(\mathbf{I}_{d_{P}}-rA^{\prime })+r(\mathbf{I}_{d_{P}}-rA)\Omega_{P0}A^{\prime }+rA\Omega_{P0}^{\prime }(\mathbf{I}_{d_{P}}-rA^{\prime })-\Omega_{PP})\lambda_{P} \\ +2r\lambda_{P}^{\prime }(\Omega_{0g}-\Omega_{Pg})^{\prime }A^{\prime }\lambda_{g}
\end{array}
\right) \geq 0.  \label{eq:PSD2}
\end{equation}
Note that $H(0)=0$, i.e., $H(r)$ achieves a minimum at $r=0$. Then, the first order condition for a minimization has to be satisfied at $r=0$, which implies
\begin{equation}
H^{\prime }(0)=\lambda_{P}^{\prime }(\Omega_{P0}A^{\prime }+A\Omega_{P0}^{\prime }-A\Omega_{PP}-\Omega_{PP}A^{\prime })\lambda_{P}+2\lambda_{g}^{\prime }(\Omega_{0g}-\Omega_{Pg})^{\prime }A^{\prime }\lambda_{P}=0.
\label{eq:PSD3}
\end{equation}
Since Eq.\ \eqref{eq:PSD3} has to hold for $\lambda_{g}=\mathbf{0}_{d_{g}}$, $A=\mathbf{I}_{d_{P}}$, and all $\lambda_{P}\in \mathbb{R}^{d_{P}}$, we deduce that
\begin{equation}
2\Omega_{PP}=\Omega_{P0}+\Omega_{P0}^{\prime }.  \label{eq:PSD4}
\end{equation}
Plugging this information into Eq.\ \eqref{eq:PSD3} yields
\begin{equation}
H^{\prime }(0)=\lambda_{P}^{\prime }((\Omega_{P0}-\Omega_{P0}^{\prime
})A^{\prime }+A(\Omega_{P0}^{\prime }-\Omega_{P0}))\lambda_{P}/2+2\lambda
_{P}^{\prime }(\Omega_{0g}-\Omega_{Pg})^{\prime }A^{\prime }\lambda_{g}=0.
\label{eq:PSD5}
\end{equation}
Since Eq.\ \eqref{eq:PSD5} has to hold for $\lambda_{g}=\mathbf{0}_{d_{g}}$, $ A=\Omega_{P0}-\Omega_{P0}^{\prime }$ and all $\lambda_{P}\in \mathbb{R}^{d_{P}}$, we deduce that $ \Omega_{P0}'=\Omega_{P0}$. If we combine this with Eq.\ \eqref{eq:PSD4}, we conclude that
\begin{equation}
\Omega_{PP}=\Omega_{P0}=\Omega_{P0}^{\prime }.  \label{eq:PSD6}
\end{equation}
Plugging this information into Eq.\ \eqref{eq:PSD5} yields
\begin{equation}
H^{\prime }(0)=2\lambda_{P}^{\prime }(\Omega_{0g}-\Omega_{Pg})^{\prime }A^{\prime }\lambda_{g}=0. \label{eq:PSD7}
\end{equation}
Since Eq.\ \eqref{eq:PSD7} has hold for $A=\mathbf{I}_{d_{P}}$ and all $\lambda_{P}\in \mathbb{R}^{d_{P}}$ and $\lambda_{g}\in \mathbb{R}^{d_{g}}$, we conclude that
\begin{equation}
\Omega_{0g}=\Omega_{Pg}. \label{eq:PSD8}
\end{equation}

For any $\mu =(\mu_{P}^{\prime },{\mu}_{0}^{\prime },\mu_{g}^{\prime })^{\prime }$ with $\mu_{P}, \mu_{0}\in \mathbb{R} ^{d_{P}}$ and $\mu_{g}\in \mathbb{R}^{d_{g}}$, consider the following argument.
\begin{align*}
\mu ^{\prime }\left[ \left( 
\begin{array}{ccc}
\Omega_{PP} & \Omega_{P0} & \Omega_{Pg} \\ 
\Omega_{P0}^{\prime } & \Omega_{00} & \Omega_{0g} \\ 
\Omega_{Pg}^{\prime } & \Omega_{0g}^{\prime } & \Omega_{gg}
\end{array}
\right) -\left( 
\begin{array}{ccc}
\Omega_{PP} & \Omega_{PP} & \Omega_{Pg} \\ 
\Omega_{PP} & \Omega_{PP} & \Omega_{Pg} \\ 
\Omega_{Pg}^{\prime } & \Omega_{Pg}^{\prime } & \Omega_{gg}
\end{array}
\right) \right] \mu  
={\mu}_{0}^{\prime }(\Omega_{00}-\Omega_{PP}){\mu}_{0}\geq 0,
\end{align*}
where the equality uses Eqs.\ \eqref{eq:PSD6} and \eqref{eq:PSD8}, and the inequality uses Eq.\ \eqref{eq:PSD1} for $\gamma_{P}={\mu}_{0}$ and $\gamma_g = {\bf 0}_{d_g \times 1}$. Since the choice of $\mu$ was arbitrary, the desired result follows.
\end{proof}

\subsection{Comparison with maximum likelihood estimator}\label{sec:MLE}

If we could abstract from the complications of computational complexity, the parameters of the model in Section \ref{sec:Model} could be estimated via MLE.\footnote{For example, the MLE could be computed via MPEC method proposed in \cite{su/judd:2012} or approximated via the one-step MLE in \citet[Section 3.6]{aguirregabiria/mira:2007}.} The goal of this section is to derive the asymptotic distribution of the MLE of $\alpha ^{\ast }$, and compare it with that of the optimal $1$-MD estimator considered in Section \ref{sec:MD}. Under reasonable conditions, we show that the MLE of $ \alpha ^{\ast }$ is asymptotically normal and its variance-covariance matrix coincides with that of the optimal $1$-MD estimator. By the well-known results in econometrics and statistics, the MLE is an efficient estimator for general classes of estimators (see, e.g., \citet[Section 5]{mcfadden/newey:1994}). As a corollary, under these conditions, the optimal $1$-MD estimator is also the efficient estimator of $ \alpha ^{\ast }$.

The MLE estimator has been studied in \citet[Section 3.2]{aguirregabiria/mira:2007} in the context of dynamic games, and in \citet[Section 3]{aguirregabiria/mira:2002} and \citet[Section 3]{kasahara/shimotsu:2008} in the context of dynamic single-agent problems. Relative to some of these references, our analysis takes into account the effect of the transition probabilities and the marginal state distribution on the likelihood function. To achieve this goal, we introduce the following notation. We use $ \Pi _{\theta }=\{\Pi _{\theta }(\vec{a},x^{\prime },x):(\vec{a},x^{\prime },x)\in A\times X\times X\}$, where $\Pi _{\theta }(\vec{a},x,x^{\prime })$ denotes the model-implied probability that players choose action $\vec{a}$ and the current state evolves from $x$ to $x^{\prime }$ for a generic parameter $\theta =(\alpha ,g)\in \Theta =\Theta _{\alpha }\times \Theta _{g} $. As the notation suggests, $\Pi _{\theta }$ is the model-implied analog of the DGP $\Pi ^{\ast }$ introduced in Assumption \ref{ass:iid}. This allows us to deduce the model-implied CCPs, transition probabilities, and marginal state distribution. For all $(\vec{a},j,x,x^{\prime })\in A^{|J|}\times J\times X\times X$ and $\vec{a}=(a,\vec{a}_{-j})$, we denote the model-implied CCPs, transition probabilities, and marginal state distribution for each $\theta $ as follows:
\begin{align}
P_{\theta ,j}(a|x)~& \equiv ~\frac{\sum_{(\vec{a}_{-j},x^{\prime })\in A^{|J|-1}\times X}\Pi _{\theta }((a,\vec{a}_{-j}),x,x^{\prime })}{\sum_{( \vec{a},x^{\prime })\in A^{|J|}\times X}\Pi _{\theta }(\vec{a},x,x^{\prime }) } \notag \\
\Lambda _{\theta }(x^{\prime }|x,\vec{a})~& \equiv ~\frac{\Pi _{\theta }( \vec{a},x,x^{\prime })}{\sum_{(\vec{a},x)\in A^{|J|}\times X}\Pi _{\theta }( \vec{a},x,x^{\prime })}, \notag \\
m_{\theta }(x)~& \equiv ~\sum_{(\vec{a},x^{\prime })\in A^{|J|}\times X}\Pi _{\theta }(\vec{a},x,x^{\prime }). \label{eq:ModelImpliedDistributions}
\end{align}
Our notation in Eq.\ \eqref{eq:ModelImpliedDistributions} requires that model-implied CCPs are uniquely defined for each $\theta $. Note that this is stronger than assuming that the data are sampled from a unique equilibrium, as imposed in Assumption \ref{ass:iid}. In principle, this restriction could be removed by considering the ideas in \citet[Eq.\ (26)]{aguirregabiria/mira:2007}, but we do not pursue this extension here for simplicity.

Under these restrictions, the MLE of $\theta ^{\ast }$, denoted by $\hat{ \theta}_{MLE}$, is given by
\begin{equation}
\hat{\theta}_{MLE}~=~\underset{\theta \in \Theta}{\arg \max}~\frac{1}{n}
\sum_{i=1}^{n}\left\{ 
\begin{array}{c}
\sum_{(j,a,x)\in J\times A\times X}\ln P_{\theta ,j}(a|x)1[(x_{i},a_{j,i})=(x,a)]+ \\
\sum_{(\vec{a},x,x^{\prime })\in A^{|J|}\times X\times X}\ln \Lambda _{\theta }(x^{\prime }|x,\vec{a})1[(x_{i}^{\prime },x_{i},\vec{a} _{i})=(x^{\prime },x,\vec{a})] \\
+\sum_{x\in X}\ln m_{\theta }(x)1[x_{i}=x]
\end{array}
\right\}   \label{eq:avg_log_lik}
\end{equation}
Under standard regularity conditions (see, e.g., \citet[page 120]{amemiya:1985}), $\hat{\theta}_{MLE}$ is the solution to the first-order condition of the MLE problem, which can be expressed as follows:
\begin{equation*}
\frac{1}{n}\sum_{i=1}^{n}[T_{i,1}(\hat{\theta}_{MLE})+T_{i,2}(\hat{\theta}_{MLE})+T_{i,3}(\hat{\theta}_{MLE})]=\mathbf{0}_{d_{\theta }\times 1},
\end{equation*}
where 
\begin{align*}
T_{i,1}(\theta )& \equiv \sum_{(j,a,x)\in J\times \tilde{A}\times X}\frac{ \partial P_{\theta ,j}(a|x)}{\partial \theta }\left( \frac{ 1[(x_{i},a_{j,i})=(x,a)]}{P_{\theta ,j}(a|x)}-\frac{1[(x_{i},a_{j,i})=(x,0)] }{P_{\theta ,j}(0|x)}\right) \\
T_{i,2}(\theta )& \equiv \sum_{(\tilde{x},\vec{a},x)\in \tilde{X}\times A^{|J|}\times X}\frac{\partial \Lambda _{\theta }(x^{\prime }|x,\vec{a})}{ \partial \theta }\left( \frac{1[(x_{i}^{\prime },x_{i},\vec{a} _{i})=(x^{\prime },x,\vec{a})]}{\Lambda _{\theta }(x^{\prime }|x,\vec{a})}- \frac{1[(x_{i}^{\prime },x_{i},\vec{a}_{i})=(1,x,\vec{a})]}{\Lambda _{\theta }(1|x,\vec{a})}\right) \\
T_{i,3}( \theta ) & \equiv \sum_{x\in \tilde{X}}\frac{ \partial m_{\theta }( x) }{\partial \theta }\left( \frac{1[ x_{i}=x] }{m_{\theta }( x) }-\frac{1[ x_{i}=1] }{ m_{\theta }( 1) }\right) .
\end{align*}

Furthermore, under these regularity conditions, $\hat{\theta}_{MLE}$ has the following asymptotic distribution:
\begin{equation}
\sqrt{n}(\hat{\theta}_{MLE}-\theta ^{\ast })\overset{d}{\rightarrow }N( \mathbf{0}_{d_{\theta }\times 1},\Sigma _{\theta ,MLE}^{\ast }), \label{eq:MLE_theta}
\end{equation}
where 
\begin{equation*}
\Sigma _{\theta ,MLE}^{\ast }\equiv \left[ (\frac{\partial P_{\theta ^{\ast }}}{\partial \theta })^{\prime }\Omega _{PP}^{-1}(\frac{\partial P_{\theta ^{\ast }}}{\partial \theta })+(\frac{\partial \Lambda _{\theta ^{\ast }}}{ \partial \theta })^{\prime }\Omega _{\Lambda \Lambda }^{-1}(\frac{\partial \Lambda _{\theta ^{\ast }}}{\partial \theta })+(\frac{\partial m_{\theta ^{\ast }}}{\partial \theta })^{\prime }\Omega _{mm}^{-1}(\frac{\partial m_{\theta ^{\ast }}}{\partial \theta })\right] ^{-1},
\end{equation*}
with 
\begin{align*}
\partial P_{\theta ^{\ast }}/\partial \theta & \equiv \{\partial P_{\theta ^{\ast },j}(a|x)/\partial \theta :(j,a,x)\in J\times \tilde{A}\times X\} \\ \partial \Lambda _{\theta ^{\ast }}/\partial \theta & \equiv \{\partial \Lambda _{\theta ^{\ast }}(x^{\prime }|x,\vec{a})/\partial \theta :(x^{\prime },\vec{a},x)\in \tilde{X}\times A^{|J|}\times X\} \\ \partial m_{\theta ^{\ast }}/\partial \theta & \equiv \{\partial m_{\theta ^{\ast }}(x)/\partial \theta :x\in \tilde{X}\} \\
\Omega _{\Lambda \Lambda }^{-1}& \equiv diag\{\Pi ^{\ast }(\vec{a} ,x)\{diag\{1/\Lambda ^{\ast }(x^{\prime }|x,\vec{a}):x^{\prime }\in \tilde{X}\}+1_{|\tilde{X}|\times |\tilde{X}|}1/\Lambda ^{\ast }(1|x,\vec{a})\}:(\vec{a} ,x)\in A^{|J|}\times X\} \\
\Omega _{m m }^{-1}& \equiv diag\{ 1/m^*( x) :x\in \tilde{X}\} +1_{\vert \tilde{X}\vert \times \vert \tilde{X}\vert }1/m^*( 1).
\end{align*}
Eq.\ \eqref{eq:MLE_theta} follows from showing that
\begin{align*}
\frac{1}{\sqrt{n}}\sum_{i=1}^{n}T_{i,1}(\hat{\theta}_{MLE})& =(\frac{ \partial P_{\theta ^{\ast }}}{\partial \theta })^{\prime }\Omega _{PP}^{-1}L_{n,1}+o_{p}(1) \\
\frac{1}{\sqrt{n}}\sum_{i=1}^{n}T_{i,2}(\hat{\theta}_{MLE})& =(\frac{ \partial \Lambda _{\theta ^{\ast }}}{\partial \theta })^{\prime }\Omega _{\Lambda \Lambda }^{-1}L_{n,2}+o_{p}(1) \\
\frac{1}{\sqrt{n}}\sum_{i=1}^{n}T_{i,3}(\hat{\theta}_{MLE})& =(\frac{ \partial m_{\theta ^{\ast }}}{\partial \theta })^{\prime }\Omega _{mm}^{-1}L_{n,3}+o_{p}(1)
\end{align*}
with 
\begin{align*}
\left( 
\begin{array}{c}
L_{n,1} \\ 
L_{n,2} \\ 
L_{n,3}
\end{array}
\right) 
& \overset{d}{\rightarrow }N\left( \left( 
\begin{array}{c}
\mathbf{0}_{|J\times \tilde{A}\times X| \times 1} \\ 
\mathbf{0}_{|\tilde{X}\times A^{|J|}\times X| \times 1} \\ 
\mathbf{0}_{|\tilde{X}| \times 1}
\end{array}
\right) ,\left( 
\begin{array}{ccc}
\Omega _{PP} & \mathbf{0}_{|J\times \tilde{A}\times X|\times |\tilde{X} \times A^{|J|}\times X|} & \mathbf{0}_{|J\times \tilde{A}\times X|\times | \tilde{X}|} \\
\mathbf{0}_{|\tilde{X}\times A^{|J|}\times X|\times |J\times \tilde{A}\times X|} & \Omega _{\Lambda \Lambda } & \mathbf{0}_{|\tilde{X}\times A^{|J|}\times X|\times |\tilde{X}|} \\
\mathbf{0}_{|\tilde{X}|\times |J\times \tilde{A}\times X|} & \mathbf{0}_{| \tilde{X}|\times |\tilde{X}\times A^{|J|}\times X|} & \Omega _{mm}
\end{array}
\right) \right) .
\end{align*}

Note that $\hat{\theta}_{MLE}=(\hat{\alpha}_{MLE},\hat{g}_{MLE})$, and our interest lies in the asymptotic properties of $\hat{\alpha}_{MLE}$. To make progress on characterizing the asymptotic distribution of $\hat{\alpha} _{MLE}$, we require several restrictions. First, we assume that the model-implied transition probabilities and marginal state distributions are solely a function of $g$ and not $\alpha $, i.e., $\Lambda _{\theta }=\Lambda _{g}$ and $m_{\theta }=m_{g}$. This implies that
\begin{align}
\frac{\partial \Lambda _{g^{\ast }}}{\partial \alpha } =\mathbf{0}_{| \tilde{X}\times A^{|J|}\times X|\times d_{\alpha }} ~~~~\text{ and }~~~~
\frac{\partial m_{g^{\ast }}}{\partial \alpha } =\mathbf{0}_{|\tilde{X} |\times d_{\alpha }}.\label{eq:condition_1}
\end{align}
In addition, we also assume that the following matrices are non-singular:
\begin{align}
&M_{g} \equiv (\frac{\partial \Lambda _{g^{\ast }}}{\partial g})^{\prime }\Omega _{\Lambda \Lambda }^{-1}\frac{\partial \Lambda _{g^{\ast }}}{ \partial g}+(\frac{\partial m_{g^{\ast }}}{\partial g})^{\prime }\Omega _{mm}^{-1}\frac{\partial m_{g^{\ast }}}{\partial g}, \notag\\
&M_{g}+(\frac{\partial P_{\theta ^{\ast }}}{\partial g})^{\prime }\Omega _{PP}^{-1}\frac{\partial P_{\theta ^{\ast }}}{\partial g}, \notag\\
&(\frac{\partial P_{\theta ^{\ast }}}{\partial \alpha })^{\prime }\Omega _{PP}^{-1}\frac{\partial P_{\theta ^{\ast }}}{\partial \alpha }-(\frac{ \partial P_{\theta ^{\ast }}}{\partial \alpha })^{\prime }\Omega _{PP}^{-1} \frac{\partial P_{\theta ^{\ast }}}{\partial g}\left( M_{g}+(\frac{\partial P_{\theta ^{\ast }}}{\partial g})^{\prime }\Omega _{PP}^{-1}\frac{\partial P_{\theta ^{\ast }}}{\partial g}\right) ^{-1}(\frac{\partial P_{\theta ^{\ast }}}{\partial g})^{\prime }\Omega _{PP}^{-1}\frac{\partial P_{\theta ^{\ast }}}{\partial \alpha }.\label{eq:condition_4}
\end{align}
For example, we note if there is no preliminary estimator $\hat{g}$, i.e., if $ \theta^{\ast }=\alpha^{\ast }$, then Eq.\ \eqref{eq:condition_1} is satisfied, and the non-singularity of the matrices in Eq.\ \eqref{eq:condition_4} reduces to the non-singularity of $(\frac{\partial P_{\theta ^{\ast }}}{\partial \alpha })^{\prime }\Omega _{PP}^{-1}\frac{\partial P_{\theta ^{\ast }}}{\partial \alpha }$, which follows from Assumptions \ref{ass:Regularity} and \ref{ass:Baseline2}. Under these conditions,
\begin{equation*}
\sqrt{n}(\hat{\alpha}_{MLE}-\alpha ^{\ast })\overset{d}{\rightarrow }N( \mathbf{0}_{d_{\alpha }\times 1},\Sigma _{\alpha ,MLE}^{\ast }),
\end{equation*}
where 
\begin{align}
\Sigma _{\alpha ,MLE}^{\ast }& ~=\left[ (\frac{\partial P_{\theta ^{\ast }}}{ \partial \alpha })^{\prime }\Omega _{PP}^{-1}\frac{\partial P_{\theta ^{\ast }}}{\partial \alpha }-(\frac{\partial P_{\theta ^{\ast }}}{\partial \alpha } )^{\prime }\Omega _{PP}^{-1}\frac{\partial P_{\theta ^{\ast }}}{\partial g} \left( M_{g}+(\frac{\partial P_{\theta ^{\ast }}}{\partial g})^{\prime }\Omega _{PP}^{-1}\frac{\partial P_{\theta ^{\ast }}}{\partial g}\right) ^{-1}(\frac{\partial P_{\theta ^{\ast }}}{\partial g})^{\prime }\Omega _{PP}^{-1}\frac{\partial P_{\theta ^{\ast }}}{\partial \alpha }\right] ^{-1} \notag \\
& =\left[ \Psi _{\alpha }^{\prime }\left[
\begin{array}{c}
((\mathbf{I}_{d_{P}}-\Psi _{P})\Omega _{PP}(\mathbf{I}_{d_{P}}-\Psi _{P})^{\prime })^{-1}-((\mathbf{I}_{d_{P}}-\Psi _{P})\Omega _{PP}(\mathbf{I}_{d_{P}}-\Psi _{P})^{\prime })^{-1}\Psi _{g}\times \\
(\Psi _{g}^{\prime }((\mathbf{I}_{d_{P}}-\Psi _{P})\Omega _{PP}(\mathbf{I}_{d_{P}}-\Psi _{P})^{\prime })^{-1}\Psi _{g}+M_{g})^{-1}\Psi _{g}^{\prime }((\mathbf{I}_{d_{P}}-\Psi _{P})\Omega _{PP}(\mathbf{I}_{d_{P}}-\Psi _{P})^{\prime })^{-1}
\end{array}
\right] \Psi _{\alpha }\right] ^{-1}  \notag \\
& =[ \Psi _{\alpha }^{\prime }[ (\mathbf{I}_{d_{P}}-\Psi _{P})\Omega _{PP}(\mathbf{I}_{d_{P}}-\Psi _{P})^{\prime }+\Psi _{g}M_{g}^{-1}\Psi _{g}^{\prime }] ^{-1}\Psi _{\alpha }] ^{-1}, \label{eq:MLE_alpha_var}
\end{align}
where the first equality follows from the inverse of a partitioned matrix (e.g.\ \citet[Proposition 2.8.7]{bernstein:2009}), the second equality uses that $\partial P_{\theta ^{\ast }}/\partial \alpha =(\mathbf{I}_{d_{P}}-\Psi _{P})^{-1}\Psi _{\alpha }$ and $\partial P_{\theta ^{\ast }}/\partial g=(\mathbf{I}_{d_{P}}-\Psi _{P})^{-1}\Psi _{g}$, both of which follow from Eq.\ \eqref{eq:FP}, and the third equality follows from the matrix inversion lemma (e.g.\ \citet[Corollary 2.8.8]{bernstein:2009}).

To conclude the section, we note that Eq.\ \eqref{eq:MLE_alpha_var} coincides with the $\Sigma ^{\ast }$ in Eq.\ \eqref{eq:Optimal_AVar_MD} whenever the first-stage estimator $\hat{g}$ is such that Assumption \ref{ass:Baseline} holds with
\begin{align}
\Omega _{gg} =M_{g}^{-1} ~~~~\text{ and }~~~~
\Omega _{gP} =\mathbf{0}_{d_{g}\times d_{P}}.  \label{eq:g_restrictions}
\end{align}
Once again, these restrictions are satisfied automatically if there is no preliminary estimator $\hat{g}$. In the presence of this estimator, these restrictions are satisfied if $\hat{g}$ is the partial MLE for $g^{*}$ based on the model for the state transitions and marginal distributions, i.e.,
\begin{equation}
\hat{g}~=~\underset{g\in \Theta _{g}}{\arg \max}~\left\{
\sum_{(\vec{a},x,x^{\prime })\in A^{|J|}\times X\times X}\ln \Lambda _{g}(x^{\prime }|x,\vec{a})1[(x_{i}^{\prime },x_{i},\vec{a}_{i})=(x^{\prime },x,\vec{a})] +\sum_{x\in X}\ln m_{g}(x)1[x_{i}=x]
\right\} .  \label{eq:g_partialMLE}
\end{equation}
This is a natural choice for $\hat{g}$ in this context. For example, if the state transition distribution is non-parametrically specified as in Eq.\ \eqref{eq:g_star_example} (i.e., $g=\Lambda _{g}$) and the marginal distribution is exogenous to the model, then $\hat{g}$ in Eq.\ \eqref{eq:g_hat_example} coincides in Eq.\ \eqref{eq:g_partialMLE}.

\subsection{High-order properties of the optimal $K$-MD estimator}\label{sec:highorder}

The invariance result in Theorem \ref{thm:MD_main} indicates that the asymptotic distribution of the optimal $K$-MD estimator is invariant to the number of iterations $K$. This result follows from studying the asymptotic expansion based on the first order condition that defines the estimator. On the other hand, our Monte Carlo simulations in Section \ref{sec:MCsimulations} suggest that the performance of the optimal $K$-MD estimator may improve slightly with the first few iterations, i.e., $K\leq 3$. The objective of this section is to provide a theoretical explanation of this phenomenon.

Theorem \ref{thm:MD_main} focuses on the leading term of the asymptotic distribution and ignores the higher-order terms in the asymptotic expansion, as these are asymptotically irrelevant relative to the leading term. While the invariance result implies that the asymptotic distribution of the leading term does not change with $K$, the higher-order terms can vary systematically with $K$. This effect could be noticeable in finite samples, especially since the asymptotic distribution of the leading term is invariant to $K$.

It is relevant to point out that our findings for the optimal $K$-MD estimator applied to dynamic discrete choice games are analogous to those found by \cite{aguirregabiria/mira:2002} for the $K$-PML estimator applied to single-agent problems. That is, \cite{aguirregabiria/mira:2002} predicted the asymptotic distribution of the $K$-PML estimator to be invariant to $K$, yet found in Monte Carlos that its distribution improved relative to the MLE with the first few iterations. Motivated by these results, \cite{kasahara/shimotsu:2008} studied the high-order properties of the $K$-PML estimator in the context of single-agent problems. They showed that the MLE and $K$-PML share the structure of their first order conditions and use this to prove that the high-order difference between the two estimators decreases with $K$. In this sense, \cite{kasahara/shimotsu:2008} provide a formal justification of the Monte Carlo evidence in \cite{aguirregabiria/mira:2002}.  Unfortunately, the arguments in \cite{kasahara/shimotsu:2008} do not apply to our optimal $K$-MD estimator because the MLE and the optimal $K$-MD estimator do not share a common structure of their first order conditions.\footnote{Section \ref{sec:MLE} shows that MLE and the optimal $K$-MD estimator have the same asymptotic distribution under some conditions, but this does not follow from having a common structure in their first order condition.} In addition, the analysis in \cite{kasahara/shimotsu:2008} requires the zero Jacobian property which does not necessarily hold in our econometric model. For this reason, we develop a different approach in this section.

In a related analysis, \cite{newey/smith:2004} investigate the high-order properties of the GMM and other estimators. While our $K$-MD estimator is of the GMM type, Theorem \ref{thm:HighOrder} does not directly follow from results in \cite{newey/smith:2004} for reasons that we now explain. First, the moment conditions that define the $K$-MD estimator include a plug-in estimator given by the $(K-1)$-step estimator of the CCPs. The GMM framework in \cite{newey/smith:2004} allows for a plug-in estimated weight matrix but does not seem to allow a plug-in estimator in the moment condition. Relatedly, this plug-in estimator has an asymptotic distribution that varies in a specific manner with the number of iterations $K$. As shown in Theorem \ref{thm:HighOrder}, this feature generates high-order terms on the expansion that vary with $K$. Our proof deals with this complication by using an inductive argument, which is not required to derive the analogous expansion in \citet[Lemma A.4]{newey/smith:2004}.

The main result in this section is Theorem \ref{thm:HighOrder}, which characterizes the high-order terms of the optimal $K$-MD estimator. Two features of these high-order terms are worth highlighting. First, these terms depend non-trivially on $K$. Second, some of these terms vary sharply between the first iteration (i.e.\ $K=1$) and additional ones (i.e.\ $K>1$). These features arguably help rationalize the Monte Carlo evidence in Section \ref{sec:MCsimulations}. We provide additional discussion of Theorem \ref{thm:HighOrder} and its implications after its proof.

Our analysis in this section will make several simplifying assumptions, mostly for reasons of tractability. First, we assume that there is no preliminary estimator $\hat{g}$, i.e., $ \theta^{\ast }=\alpha^{\ast }$. Note that this is the case in the Monte Carlo simulations in Section \ref{sec:MCsimulations}. Second, we assume that $\hat{P}_{0} = \hat{P}$, which is the preliminary estimator of the CCPs used in Section \ref{sec:MCsimulations}. Third, we assume that the sequence of optimal weight matrices is chosen optimally at each iteration step, which we denote by $\{ W_{k}^{\ast }:k\leq K\} $. As we explain in the paper, there are many optimal sequences of weight matrices, but $\{ W_{k}^{\ast }:k\leq K\} $ is a natural one to consider. In fact, our Monte Carlo simulations were implemented with the sample analog of $\{ W_{k}^{\ast }:k\leq K\} $. Fourth, we assume that the optimal $K$-MD estimator is implemented with an unfeasible optimal weight matrix sequence $\{ W_{k}^{\ast }:k\leq K\} $ instead of its sample analog. We denote this version of the optimal $K$-MD estimator by $\hat{\alpha}_{K-MD}^{\ast }$. We adopted this last simplification for two reasons. First, our analysis of this section can be extended to the sample analog optimal weight matrix at the expense of having a slightly longer proof and adding one term to the high order expansion in Theorem \ref{thm:HighOrder}. This additional term is a direct result of the error in estimating $\{ W_{k}^{\ast }:k\leq K\} $ and has a similar structure for each $k=1,\dots,K$. In this sense, this additional term would not change the qualitative conclusions of Theorem \ref{thm:HighOrder}. Second, while the simulations in Section \ref{sec:MCsimulations} were implemented with the sample analog of the optimal weight matrix sequence, we have also conducted the simulations with the infeasible optimal weight matrix, and we obtained similar results.

Our result requires the following additional notation. Let $q_{\alpha }\equiv -\Psi _{\alpha }^{\prime }( ( \mathbf{I} _{d_{P}}-\Psi _{P}) ^{\prime }) ^{-1}\Omega _{PP}^{-1}( \mathbf{I}_{d_{P}}-\Psi _{P}) ^{-1}\Psi _{\alpha }$.
Also, for any $j=1,\ldots ,d_{P}$, let $H[ j] \in \mathbb{R} ^{( d_{P}+d_{\alpha }) \times ( d_{P}+d_{\alpha }) }$ denote the matrix of second derivatives of $\Psi _{[ j] }( \alpha ,P) /2$ evaluated at $( \alpha ^{\ast },P^{\ast }) $ . In addition, we simultaneously define the sequences $\{ W_{k}^{\ast }:k\leq K\} $ and $\{ \Phi _{k,0P}^{\ast }:k\leq K\} $ as follows. For any $k\leq K$ and given $\Phi _{k,P0}^{\ast }$, $W_{k}^{\ast }$ is defined as optimal weight matrix for the $k$-MD estimator when $\Phi _{k,P0}=\Phi _{k,P0}^{\ast }$. From Eq.\ \eqref{eq:Sigma_MD_example}, this is given by
\begin{equation}
W_{k}^{\ast }=( ( \mathbf{I}_{d_{P}}-\Psi _{P}\Phi _{k,P0}^{\ast }) ^{\prime }) ^{-1}\Omega _{PP}^{-1}( \mathbf{I} _{d_{P}}-\Psi _{P}\Phi _{k,P0}^{\ast }) ^{-1}. \label{eq:Wopt_in_k}
\end{equation}
In turn, we define $\Phi _{1,0P}^{\ast }\equiv \mathbf{I}_{d_{P}}$ and, for any $k\leq K-1$ and given $W_{k}^{\ast }$, $\Phi _{k+1,0P}^{\ast }\equiv \Phi _{k+1,P}^{\ast }+\Phi _{k+1,0}^{\ast }$ with $\Phi _{k+1,P}^{\ast }$ and $\Phi _{k+1,0}^{\ast }$ as in Eq.\ \eqref{eq:coefficients_ML} when $ W_{k}=W_{k}^{\ast }$. By replacing Eq.\ \eqref{eq:Wopt_in_k} on Eq.\ \eqref{eq:coefficients_ML}, we conclude that
\begin{equation}
\Phi _{k+1,0P}^{\ast }=\Psi _{P}\Phi _{k,0P}^{\ast }-\Psi _{\alpha }q_{\alpha }^{-1}\Psi _{\alpha }^{\prime }(  \mathbf{I} _{d_{P}}-\Psi _{P} ^{\prime }) ^{-1}\Omega _{PP}^{-1}.
\label{eq:phi_kp1_appendix}
\end{equation}
Finally, for $\lambda =( \alpha ^{\prime },P^{\prime }) ^{\prime } $, $a=1,\ldots ,d_{\alpha }$, and $u,v=1,\ldots ,d_{\alpha }+d_{P}$, let $ U_{a,k}\in \mathbb{R} ^{( d_{P}+d_{\alpha }) \times ( d_{P}+d_{\alpha }) }$ be a matrix whose $( u,v) $ entry is
\begin{equation*}
U_{a,k}[ u,v] \equiv \left\{
\begin{array}{c}
[ \partial ^{2}\Psi /( \partial \alpha _{a}\partial \lambda _{u}) ] ^{\prime }( ( \mathbf{I}_{d_{P}}-\Psi _{P}\Phi _{k,0P}^{\ast }) ^{\prime }) ^{-1}\Omega _{PP}^{-1}( \mathbf{I}_{d_{P}}-\Psi _{P}\Phi _{k,0P}^{\ast }) ^{-1}[ \partial \Psi /\partial \lambda _{v}] \\
+[ \partial ^{2}\Psi /( \partial \alpha _{a}\partial \lambda _{v}) ] ^{\prime }( ( \mathbf{I}_{d_{P}}-\Psi _{P}\Phi _{k,0P}^{\ast }) ^{\prime }) ^{-1}\Omega _{PP}^{-1}( \mathbf{I}_{d_{P}}-\Psi _{P}\Phi _{k,0P}^{\ast }) ^{-1}[ \partial \Psi /\partial \lambda _{u}] \\
+[ \partial \Psi /\partial \alpha _{a}] ^{\prime }( ( \mathbf{I}_{d_{P}}-\Psi _{P}\Phi _{k,0P}^{\ast }) ^{\prime }) ^{-1}\Omega _{PP}^{-1}( \mathbf{I}_{d_{P}}-\Psi _{P}\Phi _{k,0P}^{\ast }) ^{-1}[ \partial ^{2}\Psi /( \partial \lambda _{u}\partial \lambda _{v}) ]
\end{array}
\right\} /2.
\end{equation*}
With this notation in place, we are now ready to state and prove the main result of this section.

\begin{theorem} \label{thm:HighOrder}
Fix $K \in \mathbb{N}$ arbitrarily, Assume Assumptions \ref{ass:iid}-\ref{ass:Weight}, and that $\Psi ( \alpha ,P) $ is three-times continuously differentiable at $( \alpha ^{\ast },P^{\ast }) $. Let $\hat{\alpha}_{K-MD}^{\ast }$ denote the optimal $K$-MD estimator with $\hat{P}_{0}=\hat{P}$ and optimal weight matrices $\{ W_{k}^{\ast }:k\leq K\} $. Then,
\begin{equation}
\sqrt{n}( \hat{\alpha}_{K-MD}^{\ast }-\alpha ^{\ast }) ~=~-q_{\alpha }^{-1}\Psi _{\alpha }^{\prime }(  \mathbf{I} _{d_{P}}-\Psi _{P} ^{\prime }) ^{-1}\Omega _{PP}^{-1}\sqrt{n} ( \hat{P}-P^{\ast }) +n^{-1/2}R_{K,n}+o_{p}( n^{-1/2}) , \label{eq:Expansion}
\end{equation}
where
\begin{equation}
R_{K,n}~\equiv~ R_{K,n}^{1}+R_{K,n}^{2}+R_{K,n}^{3},  \label{eq:sumRK}
\end{equation}
with $R_{K,n}^{j}=O_{p}( 1) $ for $j=1,2,3$, and defined by
\begin{align}  
& R_{K,n}^{1}\equiv q_{\alpha }^{-1}\left\{ 
\begin{array}{c}
\left( 
\begin{array}{c}
-q_{\alpha }^{-1}\Psi _{\alpha }^{\prime }( ( \mathbf{I}
_{d_{P}}-\Psi _{P}) ^{\prime }) ^{-1}\Omega _{PP}^{-1}\sqrt{n}
( \hat{P}-P^{\ast })  \\ 
\Phi _{K,0P}^{\ast }\sqrt{n}( \hat{P}-P^{\ast }) 
\end{array}
\right) ^{\prime }U_{a,K} \\ 
\times \left( 
\begin{array}{c}
-q_{\alpha }^{-1}\Psi _{\alpha }^{\prime }( ( \mathbf{I}
_{d_{P}}-\Psi _{P}) ^{\prime }) ^{-1}\Omega _{PP}^{-1}\sqrt{n}
( \hat{P}-P^{\ast })  \\ 
\Phi _{K,0P}^{\ast }\sqrt{n}( \hat{P}-P^{\ast }) 
\end{array}
\right) 
\end{array}
\right\} _{a=1}^{d_{\alpha }}  \label{eq:defnR1} \\
& R_{K,n}^{2}\equiv q_{\alpha }^{-1}\left\{ \sum_{j=1}^{d_{P}}\left\{ 
\begin{array}{c}
[ ( ( \mathbf{I}_{d_{P}}-\Psi _{P}\Phi _{K,0P}^{\ast })
^{\prime }) ^{-1}\Omega _{PP}^{-1}( \mathbf{I}_{d_{P}}-\Psi
_{P}\Phi _{K,0P}^{\ast }) ^{-1}\sqrt{n}( \hat{P}-P^{\ast }) 
] _{[ j] } \\ 
\times\left\{ 
\begin{array}{c}
-\frac{\partial ^{2}\Psi [ j] }{\partial \alpha \partial
P^{\prime }}\Phi _{K,0P}^{\ast }\sqrt{n}( \hat{P}-P^{\ast }) + \\ 
\frac{\partial ^{2}\Psi [ j] }{\partial \alpha \partial \alpha
^{\prime }}[ q_{\alpha }^{-1}\Psi _{\alpha }^{\prime }( ( 
\mathbf{I}_{d_{P}}-\Psi _{P}) ^{\prime }) ^{-1}\Omega _{PP}^{-1}
\sqrt{n}( \hat{P}-P^{\ast }) ] 
\end{array}
\right\} 
\end{array}
\right\} \right\}  \label{eq:defnR2}\\
& R_{K,n}^{3}\equiv \left\{ 
\begin{array}{c}
1[ K=1] \times \mathbf{0}_{d_{\alpha }\times 1}+ \\ 
1[ K>1] \times q_{\alpha }^{-1}\Psi _{\alpha }^{\prime }(
( \mathbf{I}-\Psi _{P}) ^{\prime }) ^{-1}\Omega
_{PP}^{-1}( \mathbf{I}-\Psi _{P}\Phi _{K,0P}^{\ast }) ^{-1}\Psi
_{P}\times  \\ 
\sum_{k=1}^{K-1}\Psi _{P}^{K-1-k}\left[ \Psi _{\alpha }R_{k,n}+\left\{ 
\begin{array}{c}
\left( 
\begin{array}{c}
-q_{\alpha }^{-1}\Psi _{\alpha }^{\prime }( ( \mathbf{I}-\Psi
_{P}) ^{\prime }) ^{-1}\Omega _{PP}^{-1}\sqrt{n}( \hat{P}
-P^{\ast })  \\ 
\Phi _{k,0P}^{\ast }\sqrt{n}( \hat{P}-P^{\ast }) 
\end{array}
\right) ^{\prime }H[ j]  \\ 
\times \left( 
\begin{array}{c}
-q_{\alpha }^{-1}\Psi _{\alpha }^{\prime }( ( \mathbf{I}-\Psi
_{P}) ^{\prime }) ^{-1}\Omega _{PP}^{-1}\sqrt{n}( \hat{P}
-P^{\ast })  \\ 
\Phi _{k,0P}^{\ast }\sqrt{n}( \hat{P}-P^{\ast }) 
\end{array}
\right) 
\end{array}
\right\} _{j=1}^{d_{P}}\right] 
\end{array}
\right\} .\label{eq:defnR3}
\end{align}
\end{theorem}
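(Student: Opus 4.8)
The plan is to obtain the expansion by iterating a second-order stochastic expansion of the first-order condition (FOC) defining each step of the optimal $K$-MD estimator, tracking every term down to order $O_p(n^{-1})$. I work throughout in the simplified setting of the statement ($\theta^{\ast}=\alpha^{\ast}$, $\hat{P}_0=\hat{P}$, infeasible optimal weights $\{W_k^{\ast}:k\le K\}$), and write the step-$k$ moment as $m_k(\alpha,P)\equiv (\partial\Psi(\alpha,P)/\partial\alpha)'W_k^{\ast}(\hat{P}-\Psi(\alpha,P))$, so that $\hat{\alpha}_{k-MD}^{\ast}$ solves $m_k(\hat{\alpha}_{k-MD}^{\ast},\hat{P}_{k-1})=\mathbf{0}$ and $\hat{P}_k=\Psi(\hat{\alpha}_{k-MD}^{\ast},\hat{P}_{k-1})$. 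The leading-order behaviour is supplied by Theorems \ref{thm:MD} and \ref{thm:MD_main}: consistency of $(\hat{\alpha}_{k-MD}^{\ast},\hat{P}_{k-1})$, the invariant leading deviation $-q_{\alpha}^{-1}\Psi_{\alpha}'(\mathbf{I}_{d_P}-\Psi_P')^{-1}\Omega_{PP}^{-1}\sqrt{n}(\hat{P}-P^{\ast})$, and $\sqrt{n}(\hat{P}_{k-1}-P^{\ast})=\Phi_{k,0P}^{\ast}\sqrt{n}(\hat{P}-P^{\ast})+o_p(1)$. The key algebraic input is that Eq.\ \eqref{eq:Wopt_in_k} together with Lemma \ref{lem:Key_aux_MD}(a) gives $(\mathbf{I}_{d_P}-\Psi_P\Phi_{k,0P}^{\ast})^{-1}\Psi_{\alpha}=(\mathbf{I}_{d_P}-\Psi_P)^{-1}\Psi_{\alpha}$, hence $\Psi_{\alpha}'W_k^{\ast}\Psi_{\alpha}=-q_{\alpha}$ for every $k$; this collapses the step-$K$ Jacobian and the weighted sampling error into the $K$-invariant forms, and is exactly what makes $q_{\alpha}^{-1}$ appear in front of each high-order term.

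First I would expand the step-$K$ FOC to second order in $(\alpha,P)$ about $(\alpha^{\ast},P^{\ast})$, using three-times continuous differentiability of $\Psi$ to bound the third-order remainder as $o_p(n^{-1/2})$ once all deviations are shown to be $O_p(n^{-1/2})$. The linear part reproduces the leading term via the collapse above. The second-order part splits into two pieces read off from the product structure of $m_K$: differentiating the residual factor $(\hat{P}-\Psi)$ twice and contracting with $\Psi_{\alpha}'W_K^{\ast}$ produces the quadratic form weighted by the Hessian matrices $U_{a,K}$, i.e.\ $R_{K,n}^{1}$ in Eq.\ \eqref{eq:defnR1}; differentiating the Jacobian factor $\partial\Psi/\partial\alpha$ once and contracting with the $W_K^{\ast}$-weighted sampling error $W_K^{\ast}\sqrt{n}(\hat{P}-P^{\ast})$ (whose $j$-th component appears in Eq.\ \eqref{eq:defnR2}) produces the terms in $\partial^2\Psi[j]/\partial\alpha\partial P'$ and $\partial^2\Psi[j]/\partial\alpha\partial\alpha'$, i.e.\ $R_{K,n}^{2}$. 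In both pieces the leading $\alpha$- and $P$-deviations are substituted from the first paragraph.

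The term $R_{K,n}^{3}$ arises from the high-order error carried into the step-$K$ FOC by $\hat{P}_{K-1}$, which enters $m_K$ linearly through $\Psi_P$. To capture it I set up the companion expansion for the CCP iterate: a second-order Taylor expansion of $\hat{P}_k=\Psi(\hat{\alpha}_{k-MD}^{\ast},\hat{P}_{k-1})$ shows that its high-order part $S_{k,n}$, defined by $\sqrt{n}(\hat{P}_k-P^{\ast})=\Phi_{k+1,0P}^{\ast}\sqrt{n}(\hat{P}-P^{\ast})+n^{-1/2}S_{k,n}+o_p(n^{-1/2})$, obeys the linear recursion $S_{k,n}=\Psi_P S_{k-1,n}+\Psi_{\alpha}R_{k,n}+\{(\delta_{k})'H[j](\delta_{k})\}_{j=1}^{d_P}$, where $\delta_k$ stacks the leading $\alpha$- and $P$-deviations at step $k$ and $H[j]$ is the Hessian of $\Psi_{[j]}/2$. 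Since $\hat{P}_0=\hat{P}$ carries no high-order error, $S_{0,n}=\mathbf{0}$, and unrolling gives $S_{K-1,n}=\sum_{k=1}^{K-1}\Psi_P^{K-1-k}[\Psi_{\alpha}R_{k,n}+\{(\delta_{k})'H[j](\delta_{k})\}_{j}]$; feeding $S_{K-1,n}$ through the $\Psi_P$-sensitivity of the step-$K$ FOC yields exactly Eq.\ \eqref{eq:defnR3}, with the $1[K=1]$ indicator reflecting the empty sum. I would then collect $R_{K,n}=R_{K,n}^{1}+R_{K,n}^{2}+R_{K,n}^{3}$, verify each summand is $O_p(1)$ (each is a finite product of bounded matrices with $\sqrt{n}(\hat{P}-P^{\ast})=O_p(1)$ and its quadratic forms), and confirm the overall remainder is $o_p(n^{-1/2})$.

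The main obstacle will be the bookkeeping in the recursion for $R_{K,n}^{3}$: one must verify that the high-order error in $\hat{P}_k$ propagates through the \emph{plain} powers $\Psi_P^{K-1-k}$ rather than the projected operator $(\mathbf{I}_{d_P}-\Psi_{\alpha}(\Psi_{\alpha}'W_k^{\ast}\Psi_{\alpha})^{-1}\Psi_{\alpha}'W_k^{\ast})\Psi_P$ that governs the leading-order recursion for $\Phi_{k,0P}^{\ast}$, and that the optimality of each $W_k^{\ast}$ (through Eq.\ \eqref{eq:Wopt_in_k} and Lemma \ref{lem:Key_aux_MD}) repeatedly collapses the step-$K$ Jacobian and the weighted errors into the invariant expressions appearing in Eqs.\ \eqref{eq:defnR1}--\eqref{eq:defnR3}. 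Subordinate to this is the uniform control of the third-order Taylor remainders needed to justify the $o_p(n^{-1/2})$ error simultaneously across all $K$ iteration steps, which relies on the assumed three-times continuous differentiability of $\Psi$ and on the consistency established in the first paragraph.
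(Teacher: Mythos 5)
Your proposal is correct and follows essentially the same route as the paper's proof: a second-order expansion of the step-$K$ first-order condition (collapsed via the optimal-weight identity $\Psi_{\alpha}'W_{k}^{\ast}\Psi_{\alpha}=-q_{\alpha}$ from Eq.\ \eqref{eq:Wopt_in_k} and Lemma \ref{lem:Key_aux_MD}) producing $R_{K,n}^{1}$ and $R_{K,n}^{2}$, together with an induction on the high-order error of the CCP iterate --- your $S_{k,n}$ is exactly the paper's $\hat{R}_{K,n}^{5}$, with the same recursion $S_{k,n}=\Psi_{P}S_{k-1,n}+\Psi_{\alpha}R_{k,n}+\{\delta_{k}'H[j]\delta_{k}\}_{j}$, initial condition $S_{0,n}=\mathbf{0}$, and unrolling through plain powers of $\Psi_{P}$ --- which yields $R_{K,n}^{3}$. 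The only detail you omit is the paper's bookkeeping term $\hat{R}_{K,n}^{4}$ handling the event that the sample Jacobian $\hat{q}_{\alpha}^{K}$ is singular, which is disposed of by consistency and is immaterial to the argument.
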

\begin{proof}
	Throughout this proof, we simplify notation by omitting the argument $( \alpha ,P) $ in any function whenever this argument is equal to $( \alpha ^{\ast },P^{\ast }) $. The first-order condition that defines $\hat{\alpha}_{K-MD}^{\ast }$ can be expressed as follows:
\begin{equation}
\hat{q}^{K}( \hat{\alpha}_{K-MD}^{\ast },\hat{P}_{K-1}) ~=~\Psi _{\alpha }( \hat{\alpha}_{K-MD}^{\ast },\hat{P}_{K-1}) ^{\prime }W_{K}^{\ast }( \hat{P}-\Psi ( \hat{\alpha}_{K-MD}^{\ast },\hat{P} _{K-1}) ) ~=~\mathbf{0}_{d_{\alpha }}, \label{eq:FOC_reexpressed}
\end{equation}
where $\hat{q}^{K}( \alpha ,P) \equiv \Psi _{\alpha }( \alpha ,P) ^{\prime }W_{K}^{\ast }( \hat{P}-\Psi ( \alpha ,P) ) $. By combining this with a second-order expansion of $ \hat{q}^{K}( \alpha ,P) $ evaluated at $( \alpha ^{\ast },P^{\ast }) $, we deduce that
\begin{equation}
\hat{q}^{K}=-\hat{q}_{\alpha }^{K}( \hat{\alpha}_{K-MD}^{\ast }-\alpha
^{\ast }) -\hat{q}_{P}^{K}( \hat{P}_{K-1}-P^{\ast })
+\left\{ \left( 
\begin{array}{c}
\hat{\alpha}_{K-MD}^{\ast }-\alpha ^{\ast } \\ 
\hat{P}_{K-1}-P^{\ast }
\end{array}
\right) ^{\prime }\hat{U}_{a,K}\left( 
\begin{array}{c}
\hat{\alpha}_{K-MD}^{\ast }-\alpha ^{\ast }\\ 
\hat{P}_{K-1}-P^{\ast }
\end{array}
\right) \right\} _{a=1}^{d_{\alpha }},  \label{eq:FOCexpansion}
\end{equation}
where $\hat{q}_{\xi }^{K}$ denote the derivatives of $\hat{q}^{K}$ with respect to $\xi \in \{\alpha,P\}$, respectively, evaluated at $( \alpha ^{\ast },P^{\ast }) $, i.e.,
\begin{align}
\hat{q}_{\xi }^{K} &\equiv -\Psi _{\alpha }^{\prime }W_{K}^{\ast }\Psi _{\xi}+\sum_{j=1}^{d_{P}}[ W_{K}^{\ast }( \hat{P}-P^{\ast }) ] _{[ j] }\frac{\partial ^{2}\Psi [ j] }{\partial \alpha \partial \xi'}, \label{eq:q_lambda}
\end{align}
and $\hat{U}_{a,K}$ is $-1/2$ times the Hessian matrix of $\hat{q}_{\alpha }^{K}( \alpha ,P) \equiv [ \partial \Psi ( \alpha ,P) /\partial \alpha _{a}] ^{\prime }W_{K}^{\ast }( \hat{P} -\Psi ( \alpha ,P) ) $ evaluated at $( \tilde{\alpha} _{K},\tilde{P}_{K-1}) $, located between $( \alpha ^{\ast },P^{\ast }) $ and $( \hat{\alpha}_{K-MD}^{\ast },\hat{P} _{K-1}) $. For $\lambda =( \alpha ^{\prime },P^{\prime }) ^{\prime }$, $a=1,\ldots ,d_{\alpha }$, and $u,v=1,\ldots ,d_{\alpha }+d_{P}$, let $\hat{U}_{a,K}\in \mathbb{R} ^{( d_{P}+d_{\alpha }) \times ( d_{P}+d_{\alpha }) }$ be a matrix whose $( u,v) $ entry is
\begin{equation*}
\hat{U}_{a,K}[ u,v] =\left\{ 
\begin{array}{c}
-[ \partial ^{3}\Psi ( \tilde{\alpha}_{K},\tilde{P}_{K-1}) /( \partial \alpha _{a}\partial \lambda _{u}\partial \lambda _{v}) ] ^{\prime }W_{K}^{\ast }( \hat{P}-\Psi ( \tilde{ \alpha}_{K},\tilde{P}_{K-1}) ) \\
+[ \partial ^{2}\Psi ( \tilde{\alpha}_{K},\tilde{P}_{K-1}) /( \partial \alpha _{a}\partial \lambda _{u}) ] ^{\prime }W_{K}^{\ast }[ \partial \Psi ( \tilde{\alpha}_{K},\tilde{P} _{K-1}) /\partial \lambda _{v}] \\
+[ \partial ^{2}\Psi ( \tilde{\alpha}_{K},\tilde{P}_{K-1}) /( \partial \alpha _{a}\partial \lambda _{v}) ] ^{\prime }W_{K}^{\ast }[ \partial \Psi ( \tilde{\alpha}_{K},\tilde{P} _{K-1}) /\partial \lambda _{u}] \\
+[ \partial \Psi ( \tilde{\alpha}_{K},\tilde{P}_{K-1}) /\partial \alpha _{a}] ^{\prime }W_{K}^{\ast }[ \partial ^{2}\Psi ( \tilde{\alpha}_{K},\tilde{P}_{K-1}) /( \partial \lambda _{u}\partial \lambda _{v}) ]
\end{array}
\right\} /2.
\end{equation*}
Note that $\hat{U}_{a,K}=U_{a,K}+o_{p}( 1) $ for all $a=1,\ldots ,d_{\alpha }$, since $( \hat{\alpha}_{K-MD}^{\ast },\hat{P} _{K-1}) =( \alpha ^{\ast },P^{\ast }) +o_{p}( 1) $ and $\Psi ( \alpha ^{\ast },P^{\ast }) =P^{\ast }$.

With minor abuse of notation, let $( \hat{q}_{\alpha }^{K}) ^{-1}$ denote the inverse of $\hat{q}_{\alpha }^{K}$ if this matrix is invertible, and $ \mathbf{I}_{d_{\alpha }}$ otherwise. From Eq.\ \eqref{eq:FOCexpansion}, we deduce that
\begin{equation}
\sqrt{n}( \hat{\alpha}_{K-MD}^{\ast }-\alpha ^{\ast })
=-q_{\alpha }^{-1}\Psi _{\alpha }^{\prime }(  \mathbf{I}
_{d_{P}}-\Psi _{P} ^{\prime }) ^{-1}\Omega _{PP}^{-1}\sqrt{n}
( \hat{P}-P^{\ast }) +n^{-1/2}\hat{R}_{K,n},
\label{eq:alpha_expansion}
\end{equation}
where $\hat{R}_{K,n}\equiv \hat{R}_{K,n}^{1}+\hat{R}_{K,n}^{2}+\hat{R}
_{K,n}^{3}+\hat{R}_{K,n}^{4}$ with
\begin{eqnarray*}
\hat{R}_{K,n}^{1} &\equiv &( \hat{q}_{\alpha }^{K}) ^{-1}\left\{
\left( 
\begin{array}{c}
\sqrt{n}( \hat{\alpha}_{K-MD}^{\ast },\hat{P}_{K-1}) \\ 
\sqrt{n}( \hat{P}_{K-1}-P^{\ast })
\end{array}
\right) ^{\prime }\hat{U}_{a,K}\left( 
\begin{array}{c}
\sqrt{n}( \hat{\alpha}_{K-MD}^{\ast },\hat{P}_{K-1}) \\ 
\sqrt{n}( \hat{P}_{K-1}-P^{\ast })
\end{array}
\right) \right\} _{a=1}^{d_{\alpha }} \\
\hat{R}_{K,n}^{2} &\equiv &\sqrt{n}( -\hat{q}_{\alpha }^{K}) ^{-1}
[ \sqrt{n}\hat{q}^{K}+\hat{q}_{P}^{K}\sqrt{n}( \hat{P}
_{K-1}-P^{\ast }) ] +\sqrt{n}q_{\alpha }^{-1}[ \sqrt{n}\hat{
q}^{K}+q_{P}^{K}\sqrt{n}( \hat{P}_{K-1}-P^{\ast }) ] \\
\hat{R}_{K,n}^{3} &\equiv &\sqrt{n}q_{\alpha }^{-1}[ \Psi _{\alpha
}^{\prime }( ( \mathbf{I}_{d_{P}}-\Psi _{P}) ^{\prime
}) ^{-1}\Omega _{PP}^{-1}\sqrt{n}( \hat{P}-P^{\ast }) -
[ \sqrt{n}\hat{q}^{K}+q_{P}^{K}\sqrt{n}( \hat{P}_{K-1}-P^{\ast
}) ] ] ,
\end{eqnarray*}
and $\hat{R}_{K,n}^{4}$ represents a remainder term that absorbs the expression in Eq.\ \eqref{eq:alpha_expansion} whenever $ \hat{q}_{\alpha }^{K}$ is not invertible, i.e., $\hat{R}_{K,n}^{4} = {\bf 0}_{d_{\alpha}\times 1}$ if $ \hat{q}_{\alpha }^{K}$ is invertible. To complete the proof, it suffices to show that $\hat{R}_{K,n}=R_{K,n}+o_{p}( 1) $. In turn, this follows from showing that $\hat{R}_{K,n}^{j}=R_{K,n}^{j}+o_{p}( 1) $ for $j=1,2,3$, with $R_{K,n}^{1}$, $R_{K,n}^{2}$, and $ R_{K,n}^{3} $ defined by Eqs.\ \eqref{eq:defnR1}, \eqref{eq:defnR2}, and \eqref{eq:defnR3}, respectively, and $\hat{R}_{K,n}^{4}=o_{p}( 1) $.

Note that $\hat{R}_{K,n}^{1}=R_{K,n}^{1}+o_{p}( 1) $ follows immediately from $\hat{q}_{\alpha }^{K}=q_{\alpha }+o_{p}( 1) $, $\hat{U}_{a,K}=U_{a,K}+o_{p}( 1) $ for all $ a=1,\ldots ,d_{\alpha }$, and Theorem \ref{thm:2stepB}. Also, $ \hat{R}_{K,n}^{4}= {\bf 0}_{d_{\alpha}\times 1}$ is implied by $\hat{q}_{\alpha }^{K}$ being non-singular, and this occurs with probability approaching one by $ \hat{q}_{\alpha }^{K}=q_{\alpha }+o_{p}( 1) $. From here, $\hat{R} _{K,n}^{4}=o_{p}( 1) $ follows. Next, consider the following derivation.
\begin{eqnarray*}
\hat{R}_{K,n}^{2} =( \hat{q}_{\alpha }^{K})
^{-1}\sum_{j=1}^{d_{P}}[ W_{K}^{\ast }\sqrt{n}( \hat{P}-P^{\ast
}) ] _{[ j] }\left\{ 
\begin{array}{c}
-\frac{\partial ^{2}\Psi [ j] }{\partial \alpha \partial
P^{\prime }}\sqrt{n}( \hat{P}_{K-1}-P^{\ast }) + \\ 
-\frac{\partial ^{2}\Psi [ j] }{\partial \alpha \partial \alpha
^{\prime }}[ ( -q_{\alpha }) ^{-1}[ \sqrt{n}\hat{q}
^{K}+q_{P}^{K}\sqrt{n}( \hat{P}_{K-1}-P^{\ast }) ] ]
\end{array}
\right\} = R_{K,n}^{2}+o_{p}( 1) ,
\end{eqnarray*}
where the first equality follows from Eq.\ \eqref{eq:q_lambda}, and the second inequality follows from $\hat{q}_{\alpha }^{K}=q_{\alpha }+o_{p}( 1) $, $\sqrt{n}( \hat{\alpha}_{K-MD}^{\ast }-\alpha ^{\ast }) =-q_{\alpha }^{-1}[ \sqrt{n}\hat{q}^{K}+q_{P}^{K}\sqrt{n }( \hat{P}_{K-1}-P^{\ast }) ] +o_{p}( 1) $, and Theorem \ref{thm:2stepB}.

To conclude the proof, it suffices to show $\hat{R} _{K,n}^{3}=R_{K,n}^{3}+o_{p}( 1) $. As a preliminary step, note that
\begin{equation}
\hat{R}_{K,n}^{3}~=~q_{\alpha }^{-1}\Psi _{\alpha }^{\prime }( ( \mathbf{I}_{d_{P}}-\Psi _{P}) ^{\prime }) ^{-1}\Omega _{PP}^{-1}( \mathbf{I}_{d_{P}}-\Psi _{P}\Phi _{K,0P}^{\ast }) ^{-1}\Psi _{P}\hat{R}_{K,n}^{5}. \label{eq:R3_and_R5}
\end{equation}
where
\begin{equation}
\hat{R}_{K,n}^{5}~\equiv~ \sqrt{n}[ \sqrt{n}( \hat{P}_{K-1}-P^{\ast }) -\Phi _{K,0P}^{\ast }\sqrt{n}( \hat{P}-P^{\ast }) ] . \label{eq:R5_expansion}
\end{equation}
The desired result then follows immediately from showing that $\hat{R} _{K,n}^{5}=R_{K,n}^{5}+o_{p}( 1) $ for $R_{K,n}^{5}=O_{p}( 1) $ that is consistent with Eq.\ \eqref{eq:defnR3}. We prove this by induction. 

We begin with the initial step, i.e., $K=1$. Eq.\ \eqref{eq:R5_expansion}, $ \sqrt{n}( \hat{P}_{0}-P^{\ast }) =\sqrt{n}( \hat{P}-P^{\ast }) $, and $\Phi _{1,0P}^{\ast }=\mathbf{I}_{d_{P}}$ implies that $\hat{ R}_{1,n}^{5}=R_{1,n}^{5}=\mathbf{0}_{d_{P}\times 1}$. This and Eq.\ \eqref{eq:R3_and_R5} then imply that $R_{K,n}^{3}=\mathbf{0}_{d_{\alpha }\times 1}$. This and Eq.\ \eqref{eq:sumRK} then implies that $ R_{K,n}=R_{K,n}^{1}+R_{K,n}^{2}$, concluding the initial step.

We now proceed with the inductive step. Consider the following argument.
\begin{align*}
\hat{R}_{K+1,n}^{5} &=\sqrt{n}[ [ -\Psi _{P}\Phi _{K,0P}^{\ast
}+\Psi _{\alpha }q_{\alpha }^{-1}\Psi _{\alpha }^{\prime }( ( 
\mathbf{I}_{d_{P}}-\Psi _{P}) ^{\prime }) ^{-1}\Omega _{PP}^{-1}
] \sqrt{n}( \hat{P}-P^{\ast }) +\sqrt{n}( \hat{P}
_{K}-P^{\ast }) ] \\
&=\Psi _{P}\hat{R}_{K,n}^{5}+\sqrt{n}[ \sqrt{n}( \hat{P}
_{K}-P^{\ast }) -\Psi _{\alpha }[ -q_{\alpha }^{-1}\Psi _{\alpha
}^{\prime }( ( \mathbf{I}_{d_{P}}-\Psi _{P}) ^{\prime
}) ^{-1}\Omega _{PP}^{-1}\sqrt{n}( \hat{P}-P^{\ast }) 
] -\Psi _{P}\sqrt{n}( \hat{P}_{K-1}-P^{\ast }) ] \\
&=\Psi _{P}\hat{R}_{K,n}^{5}+\Psi _{\alpha }\hat{R}_{K,n}+\sqrt{n}[ 
\sqrt{n}( \hat{P}_{K}-P^{\ast }) -\Psi _{\alpha }\sqrt{n}( 
\hat{\alpha}_{K-MD}^{\ast }-\alpha ^{\ast }) -\Psi _{P}\sqrt{n}( 
\hat{P}_{K-1}-P^{\ast }) ] \\
&=\Psi _{P}\hat{R}_{K,n}^{5}+\Psi _{\alpha }\hat{R}_{K,n}+\left\{ \left( 
\begin{array}{c}
\sqrt{n}( \hat{\alpha}_{K-MD}^{\ast }-\alpha ^{\ast }) \\ 
\sqrt{n}( \hat{P}_{K-1}-P^{\ast })
\end{array}
\right) ^{\prime }\hat{H}_{K}[ j] \left( 
\begin{array}{c}
\sqrt{n}( \hat{\alpha}_{K-MD}^{\ast }-\alpha ^{\ast }) \\ 
\sqrt{n}( \hat{P}_{K-1}-P^{\ast })
\end{array}
\right) \right\} _{j=1}^{d_{P}} \\
&=R_{K+1,n}^{5}+o_{p}( 1) ,
\end{align*}
where the first equality holds by Eq.\ \eqref{eq:phi_kp1_appendix}, the second equality holds by Eq.\ \eqref{eq:R5_expansion}, the third equality holds by Eq.\ \eqref{eq:alpha_expansion}, the fourth equality follows from a second-order expansion of Eq.\ \eqref{eq:k-PkDefn} centered at $( \alpha ^{\ast },P^{\ast })$ where for $\lambda =( \alpha ^{\prime },P^{\prime }) ^{\prime }$ and $j=1,\ldots ,d_{P}$, $\hat{H}_{K}[ j] $ denotes $1/2$ times the second derivative of $j$'th coordinate of $\Psi $ evaluated at $( \breve{\alpha}_{K},\breve{P}_{K-1}) $ (located between $( \alpha ^{\ast },P^{\ast }) $ and $( \hat{\alpha}_{K-MD}^{\ast },\hat{P} _{K-1}) $), and the fifth equality follows from the inductive assumption, Eq.\ \eqref{eq:alpha_expansion}, and the following iterative definition:
\begin{equation}
R_{K+1,n}^{5} \equiv \Psi _{P}R_{K,n}^{5}+\Psi _{\alpha }R_{K,n}+\left\{ 
\begin{array}{c}
\left( 
\begin{array}{c}
-q_{\alpha }^{-1}\Psi _{\alpha }^{\prime }( ( \mathbf{I}
_{d_{P}}-\Psi _{P}) ^{\prime }) ^{-1}\Omega _{PP}^{-1}\sqrt{n}
( \hat{P}-P^{\ast }) \\ 
\Phi _{K,0P}^{\ast }\sqrt{n}( \hat{P}-P^{\ast })
\end{array}
\right) ^{\prime }H[ j] \\ 
\times \left( 
\begin{array}{c}
-q_{\alpha }^{-1}\Psi _{\alpha }^{\prime }( ( \mathbf{I}
_{d_{P}}-\Psi _{P}) ^{\prime }) ^{-1}\Omega _{PP}^{-1}\sqrt{n}
( \hat{P}-P^{\ast }) \\ 
\Phi _{K,0P}^{\ast }\sqrt{n}( \hat{P}-P^{\ast })
\end{array}
\right)
\end{array}
\right\} _{j=1}^{d_{P}}.  
\label{eq:R5_expansion2}
\end{equation}
To complete the proof, we note that Eq.\ \eqref{eq:R3_and_R5}, $R_{1,n}^{5}=\mathbf{0}_{d_{P}\times 1}$, and Eq.\ \eqref{eq:R5_expansion2} imply Eq.\ \eqref{eq:defnR3}.
\end{proof}

Theorem \ref{thm:HighOrder} provides a detailed characterization of the asymptotic distribution of the optimal $K$-MD estimator. Eq.\ \eqref{eq:Expansion} shows that the estimator converges in distribution to a leading term (which does not depend on $K$, as expected from Theorem \ref{thm:MD_main}), and a high-order term, denoted $n^{-1/2}R_{K,n}$. Several observations about the high-order term are in order. First, $R_{K,n}$ is bounded in probability, and so Theorem \ref{thm:HighOrder} provides an exact rate of convergence of the high-order term, equal to $n^{-1/2}$. Second, unlike the leading term, $R_{K,n} $ depends non-trivially on $K$. Third, Eq.\ \eqref{eq:sumRK} decomposes $R_{K,n} $ into three terms, and the third term changes sharply between $K=1$ and $K>1 $. These observations could help explain the Monte Carlo evidence in Section \ref{sec:MCsimulations}. To provide further evidence of this, Table \ref{tab:MC_H0} shows the bias, variance, and mean squared error of $n^{-1/2}R_{K,n}$ in the designs used in our Monte Carlo simulations when $n=500$.\footnote{The results for $n=1,000$ and $n=2,000$ are analogous and available upon request.} In all designs, the magnitude of the mean squared error of $n^{-1/2}R_{K,n}$ is relatively large for the first few iterations (i.e.\ $K\leq 3$), and decreases sharply with additional iterations.

The main results of this paper (based on the first order asymptotic approximation) imply that iterations do not affect the asymptotic distribution of the optimal $K$-MD estimator. Theorem \ref{thm:HighOrder} indicate that iterations could have an impact on the high-order approximation of this asymptotic distribution. Given these new findings, a natural question is whether Theorem \ref{thm:HighOrder} can be used to make a better choice of $K$, or even reconsider the definition of optimality for the $K$-MD estimator. We consider that this is a hard idea to put into practice, as the terms that compose $R_{K,n}$ appear to be complicated and have a non-trivial pattern of dependence on $K$. In fact, while Table \ref{tab:MC_H0} suggests that few iterations could reduce the mean squared error of the high-order terms, this is not a general result that we can establish beyond our specific Monte Carlo setting.

\begin{table}[h]
	\begin{center}
	\scalebox{0.79}{
	\begin{tabular}{r|rrrrrrrr}
  \hline\hline
 \multicolumn{1}{c|}{Statistic} & \multicolumn{1}{c}{$K=1$} & \multicolumn{1}{c}{$K=2$} & \multicolumn{1}{c}{$K=3$} & 
\multicolumn{1}{c}{$K=4$} & \multicolumn{1}{c}{$K=5$} & \multicolumn{1}{c}{$K=10$} & \multicolumn{1}{c}{$K=15$} & \multicolumn{1}{c}{$K=20$} \\
                            \hline\hline  \multicolumn{1}{r}{} & \multicolumn{8}{c}{Design 1: $(\lambda _{RN}^{\ast },\lambda _{EC}^{\ast },\lambda _{RS}^{\ast },\lambda _{FC,1}^{\ast },\lambda _{FC,2}^{\ast },\beta^{\ast })
=(2.8,0.8,0.7,0.6,0.4,0.95)$} \\   \hline
\multicolumn{1}{c|}{Bias} &  1.77 &  2.60 &  1.24 &  0.50 &  0.46 &  0.39 &  0.39 &  0.39 \\ 
  \multicolumn{1}{c|}{Var} & 11.23 &  6.69 &  1.91 &  1.24 &  1.04 &  0.93 &  0.93 &  0.93 \\ 
  \multicolumn{1}{c|}{MSE} & 14.35 & 13.46 &  3.44 &  1.49 &  1.25 &  1.09 &  1.08 &  1.08 \\ 
   \hline\hline \multicolumn{1}{r}{} & \multicolumn{8}{c}{Design 2: $(\lambda _{RN}^{\ast },\lambda _{EC}^{\ast },\lambda _{RS}^{\ast },\lambda _{FC,1}^{\ast },\lambda _{FC,2}^{\ast },\beta^{\ast })
                        =(2,1.8,0.2,0.01,0.03,0.95)$} \\  \hline
\multicolumn{1}{c|}{Bias} &  2.98 &  3.83 &  2.30 &  1.47 &  1.20 &  0.90 &  0.87 &  0.87 \\ 
  \multicolumn{1}{c|}{Var} & 20.57 & 16.34 &  6.04 &  3.41 &  3.02 &  2.60 &  2.61 &  2.61 \\ 
  \multicolumn{1}{c|}{MSE} & 29.43 & 31.03 & 11.34 &  5.57 &  4.45 &  3.40 &  3.37 &  3.37 \\ 
   \hline\hline \multicolumn{1}{r}{} & \multicolumn{8}{c}{Design 3: $(\lambda _{RN}^{\ast },\lambda _{EC}^{\ast },\lambda _{RS}^{\ast },\lambda _{FC,1}^{\ast },\lambda _{FC,2}^{\ast },\beta^{\ast })
                        =(2.2,1.45,0.45,0.22,0.29,0.95)$} \\  \hline
\multicolumn{1}{c|}{Bias} &  2.59 &  3.32 &  1.95 &  1.15 &  0.94 &  0.73 &  0.72 &  0.72 \\ 
  \multicolumn{1}{c|}{Var} & 16.30 & 10.82 &  3.81 &  2.26 &  2.02 &  1.83 &  1.84 &  1.84 \\ 
  \multicolumn{1}{c|}{MSE} & 23.00 & 21.84 &  7.61 &  3.58 &  2.90 &  2.36 &  2.36 &  2.36 \\
   \hline
\hline
\end{tabular}}
	\end{center}
	\caption{\small Bias, variance, and mean squared error of the asymptotic high-order terms for the optimal $K$-MD estimator of $\lambda _{RN}^{\ast }$ when $n=500$. ``Bias'', ``Var'', and ``MSE'' denotes the average empirical bias, variance, and mean squared error of $n^{-1/2}R_{K,n}$ based on Theorem \ref{thm:HighOrder}, where the average is computed over $S=10,000$ simulations.}
	\label{tab:MC_H0}
\end{table}
\end{small}

\bibliography{BIBLIOGRAPHY}

\end{document}